\newcommand{\hh}{\ensuremath{\mathcal{H}}}
\newtheorem{proposition}{Proposition}
\newtheorem{theorem}{Theorem}
\newtheorem{claim}{Claim}
\newtheorem{lemma}{Lemma}
\newtheorem{observation}{Observation}
\theoremstyle{definition}
\newtheorem{definition}{Definition}
\newtheorem{corollary}{Corollary}
\crefname{invar}{invariant}{invariants}
\crefname{ineq}{inequality}{inequalities}
\crefname{constr}{constraint}{constraints}
\crefname{tbl}{table}{tables}
\crefname{lem}{lemma}{lemmata}
\crefname{lemma}{lemma}{lemmata}
\crefname{cond}{condition}{conditions}
\newcommand{\lr}[1]{\left( #1\right)}
\newcommand{\LR}[1]{\left\{ #1\right\}}
\newcommand{\Oh}{\mathcal{O}}
\newcommand{\polyn}{n^{\Oh(1)}}
\newcommand{\cA}{\mathcal{A}}
\newcommand{\cG}{\mathcal{G}}
\newcommand{\cH}{\mathcal{H}}
\newcommand{\cF}{\mathcal{F}}
\newcommand{\cS}{\mathcal{S}}
\newcommand{\cI}{\mathcal{I}}
\newcommand{\FII}{\textup{\textsf{FII}}\xspace}
\newcommand{\FPT}{\textup{\textsf{FPT}}\xspace}
\newcommand{\OPT}{\textup{\textsf{OPT}}}
\newcommand{\md}{\mathbf{mod}}
\newcommand{\mdh}{\md_{\cH}}
\newcommand{\edh}{\mathbf{ed}_{\cH}}
\newcommand{\twh}{\mathbf{tw}_{\cH}}
\newcommand{\tw}{\mathbf{tw}}
\newcommand{\td}{\mathbf{td}}
\newcommand{\hhmd}[1][\hh]{\md_{#1}}
\newcommand{\hhtwfull}[1][\hh]{{#1}-treewidth}
\newcommand{\pmd}{\ensuremath{\mathcal{P}_{\md}}\xspace}
\newcommand{\deltof}{\textup{\textsc{Vertex}} \allowbreak \textup{\textsc{Deletion}} \allowbreak \textup{\textsc{ to }} \allowbreak \text{$\cF$} \xspace}
\newcommand{\FPTAS}{\textup{\textsf{FPT-AS}}\xspace}
\newcommand{\EPTAS}{\textup{\textsf{EPTAS}}\xspace}
\newcommand{\FPTASes}{\textup{\textsf{FPT-AS}es}\xspace}
\newcommand{\vc}{\textup{\textsc{Vertex Cover}}\xspace}
\newcommand{\indset}{\textup{\textsc{Independent Set}}\xspace}
\newcommand{\fpacking}{\textup{\textsc{$\cF$-Minor Packing}}\xspace}
\newcommand{\spacking}{\textup{\textsc{$\cS$-Subgraph Packing}}\xspace}
\newcommand{\domset}{\textup{\textsc{Dominating Set}}\xspace}
\newcommand{\anndomset}{\textup{\textsc{Annotated Dominating Set}}\xspace}
\newcommand{\tilS}{\widetilde{S}}
\newcommand{\ads}{\textup{\textsc{Blue-White Dominating Set}}\xspace}
\newcommand{\til}[1]{\widetilde{#1}}
\newcommand{\tilG}{\til{G}}
\newcommand{\tilt}{\til{t}}
\newcommand{\tilb}{\til{B}}
\newcommand{\cP}{\mathcal{P}}
\newcommand{\xvc}{\textup{\textsc{Set Intersecting Vertex Cover}}\xspace}
\title{FPT Approximations for Packing and Covering Problems Parameterized by 
Elimination Distance and Even Less}
\author{
	Tanmay Inamdar\thanks{
		Department of Informatics, University of Bergen, Norway.}
	\and
	Lawqueen Kanesh\thanks{Indian Institute of Technology, Jodhpur}
	\and
	Madhumita Kundu\addtocounter{footnote}{-2}\footnotemark{}
	\and
	M.~S.~Ramanujan\addtocounter{footnote}{2}\thanks{University of Warwick, United Kingdom} 
	\and
	Saket Saurabh\addtocounter{footnote}{-4}\footnotemark{} \addtocounter{footnote}{4}\thanks{Institute of Mathematical Sciences, Chennai. 
		\\Tanmay Inamdar and Madhumita Kundu are supported by the European Research Council (ERC) under the European Union’s Horizon 2020 research and innovation programme (grant agreement No. 819416). M. S. Ramanujan is supported by Engineering and Physical Sciences Research Council (EPSRC) grants EP/V007793/1 and EP/V044621/1. Saket Saurabh is supported by the European Research Council (ERC) under the European Union’s Horizon 2020 research and innovation programme (grant agreement No. 819416), and Swarnajayanti Fellowship (No. DST/SJF/MSA01/2017-18). y
		} 
}
\date{}
\begin{document}

\maketitle

\begin{abstract}
%Solution size has been a natural parameter for studying many graph problems in the realm of parameterized algorithms. 
For numerous graph problems in the realm of parameterized algorithms, using the size of a smallest deletion set (called a modulator) into well-understood graph families as parameterization has led to a long and successful line of research.
Recently, however, there has been an extensive study of structural parameters that are potentially much smaller than the modulator size. 
%solution size. 
%; \red{and simultaneously generalize both solution size and width parameters such as treewidth and treedepth.}
 In particular, recent papers [Jansen et al.\ STOC 2021; Agrawal et al.\ SODA 2022] have studied parameterization by the size of the modulator to a graph family $\cH$ ($\mdh(\cdot)$), elimination distance to $\cH$ ($\edh(\cdot)$), and $\cH$-treewidth ($\twh(\cdot)$). These parameters are related by the fact that $\twh$ lower bounds $\edh$, which in turn lower bounds $\mdh$. 
  While these new parameters have been successfully exploited to design fast exact algorithms 
%  (for finding the respective parameters themselves, as well as for other graph problems), 
%  to the best of our knowledge, 
their utility (especially that of $\edh$ and $\twh$)
%  have not been explored for design of
in the context of approximation algorithms is mostly unexplored. 
	
The conceptual contribution of this paper is to present novel algorithmic meta-theorems that expand the impact of these structural parameters to the area of \FPT Approximation, mirroring their utility in the design of exact \FPT algorithms. 
%similar like 
%In particular, we 
%show how 
%these structural parameters 
%are useful for 
%designing exact \FPT algorithms, so are they 
%useful for
%can be used to 
%
Precisely, we show that 
 if a covering or packing problem is definable in Monadic Second Order Logic and has a property called Finite Integer Index (\FII), then the existence of an \FPT Approximation Scheme (\FPTAS, i.e., $(1\pm \epsilon)$-approximation) parameterized by $\mdh(\cdot), \edh(\cdot)$, and $\twh(\cdot)$ is in fact equivalent. As a consequence, we obtain \FPTASes for a wide range of covering, packing, and domination problems on graphs with respect to these parameters. In the process, we show that several graph problems, that are W[1]-hard 
%(or para-NP hard) 
parameterized by $\mdh$, admit \FPTASes not only when parameterized by $\mdh$, but even when parameterized by the potentially much smaller parameter $\twh(\cdot)$.
  In the spirit of  [Agrawal et al.\ SODA 2022], 
%obtain a result, that says that 
%show that
 our algorithmic results highlight a broader connection between these parameters in the world of approximation. 
% Precisely, we show that 
% if a covering or packing problem $\Pi$ is definable in Monadic Second Order Logic and has a property called Finite Integer Index (\FII), then the existence of an \FPTAS parameterized by $\mdh(\cdot), \edh(\cdot)$, and $\twh(\cdot)$ is in fact equivalent. 
%Our results use a simple natural trick, ``bucket versus ocean'', which is based on whether the size of the modulator is comparable to that of the local solution, or it is too small. 
%
As concrete exemplifications of our meta-theorems, we obtain {\FPTAS}es for well-studied graph problems such as \textsc{Vertex Cover, Feedback Vertex Set, Cycle Packing} and \textsc{Dominating Set}, parameterized by $\twh(\cdot)$ (and hence, also by $\mdh(\cdot)$ or $\edh(\cdot)$), where $\cH$ is any family of minor free graphs.

\end{abstract}

%!TEX root = main.tex

\section{Introduction} \label{sec:intro}

One of the most widely studied graph problems in the area of parameterized complexity is the {\sc $\cF$-Vertex Deletion} problem, where the input is a graph $G$ and a number $k$ and the goal is -- ``Compute a set of at most $k$ vertices whose deletion places the resulting graph in the graph family $\cal F$ or correctly conclude that such a set does not exist.'' 
A solution to an instance of {\sc $\cF$-Vertex Deletion} is called a  {\em modulator}  into $\cF$ and there are numerous results in parameterized complexity on exploiting modulators into various graph families to design algorithms. Much of this research has been  motivated by the fact that inputs that have modulators of small size into $\cF$ turn out to be tractable for many problems that are NP-complete in general while being polynomial-time solvable on $\cF$. 
In other words, it is possible to take efficient algorithms for some problems on graphs in $\cF$, and lift them to efficient algorithms for these problems on graphs that are not necessarily in $\cF$, but have a small vertex modulator into $\cF$, i.e., graphs that are ``close'' to $\cF$. 
This leads to fixed-parameter algorithms for these problems (i.e., running time bounded by  $f(k)n^{O(1)}$, where $k$ is the modulator size and $n$ is the input size) under these parameterizations.
 Using the size of the smallest vertex modulator of a graph into tractable graph families or ``the distance  from triviality'' methodology~\cite{GuoHN04} has therefore become   
a rich source of interesting and useful  parameters for graph problems over the last two decades.

In light of the success of this line of research, recent years have seen a shift towards identifying and exploring the power of  ``hybrid'' parameters that are upper bounded by the modulator size as well as certain graph-width measures and can be arbitrarily (and simultaneously) smaller than both the modulator size and these graph-width measures. 
Two specific parameters studied in this line of research are:  $\hh$-elimination distance and  
\hhtwfull{} of~$G$. The $\hh$-elimination distance of  a graph $G$ (denoted $\edh(G)$) was introduced by Bulian and Dawar~\cite{BulianD16} and roughly speaking, it expresses the number of rounds needed to obtain a graph in $\cH$ by removing one vertex from every connected component in each round. We refer the reader to Section~\ref{sec:prelims} for a more formal definition. 
%The reader familiar with the notion of treedepth~\cite{NesetrilM06} will be able to see that this closely follows the recursive definition of treedepth. That is, 
Note that $\edh(G)$ (respectively, $\twh(G)$) can be arbitrarily smaller than both $\hhmd(G)$ and the treedepth of $G$ (respectively, the treewidth of $G$). For example, let $\cH$ be the (infinite) family of complete graphs, and let $G$ be a graph that contains a vertex $v$ that is adjacent to some (non-empty) subset of vertices from a clique of size $t$, for some $t \ge 1$. Then, $\tw(G) = t-1$, whereas $\twh(G) = 0$ (in fact, even $\edh(G) = 0$). Similarly, $\twh(G)$ itself can be arbitrarily smaller than $\edh(G)$ (see  \cite{GanianRS17,JansenK021} for some examples).

Recent work by Agrawal et al.~\cite{AgrawalKLPRSZ22Elimination} and Jansen et al.~\cite{JansenK021} show that for many basic graph problems in the literature and well-understood graph families $\cF$, one can indeed obtain FPT algorithms parameterized by $\edh$ and $\twh$, thus expanding the notion of useful distance from triviality to encompass these parameters as well. Agrawal et al.~\cite{AgrawalKLPRSZ22Elimination} also showed a tight connection between $\mdh, \edh$ and  $\twh$ by showing that for these problems, having an FPT algorithm parameterized by $\mdh$ was sufficient to obtain FPT algorithms parameterized by the other two ``smaller'' parameters. In addition to these results, there is also some recent work on computing $\edh$ and $\twh$, where $\cH$ is the family of bipartite graphs \cite{JansenK21}, or a minor-closed family \cite{MorelleSST23}.

Despite these leaps in our understanding of the parameterized complexity of many problems, limitations remain. For instance, by requiring that the problem be FPT parameterized by $\mdh$, we are implicitly requiring that the problem be polynomial-time solvable on the class $\cH$. This rules out meaningful results for many basic problems and established graph families $\cH$. For instance, it is not interesting to study  {\sc Vertex Cover}  parameterized by $\mdh$ when $\cH$ is the class of planar graphs since {\sc Vertex Cover} is NP-complete on planar graphs~\cite{GareyJ77}. However, it is efficiently approximable on planar graphs (i.e., even has an  Efficient PTAS)~\cite{FominLS18Grid}. This state of the art brings us to the following two natural questions and is the main motivation behind this work.

\begin{tcolorbox}[colback=white!5!white,colframe=gray!75!black]
\begin{quote}
\begin{description}\item[Question 1:] Can good {\em approximation algorithms} for a problem on the class $\cH$ be used to obtain good FPT approximation algorithms for the same problem parameterized by $\mdh$?
\item[Question 2:] Could we obtain positive answers to the above question, but for the parameters $\edh$ and $\twh$?
	
\end{description}

\end{quote}
	
\end{tcolorbox}

The notion of an FPT approximation algorithm is easily motivated by the simultaneous existence of fixed-parameter intractability results  as well as polynomial-time inapproximability results for numerous problems in the literature.
Hence, the topic of  FPT approximation, has been an extremely active area of research in the last decade. For a comprehensive survey of the state of the art, we refer the reader to the survey by Feldman et al.~\cite{FeldmannSLM20}.

We note that we are not the first to study Question 1. In Marx's classic survey~\cite{DBLP:journals/cj/Marx08} on parameterized approximation, he outlines an FPT approximation algorithm for {\sc Chromatic Number} parameterized by the modulator to planar graphs. More recently, Demaine et al.~\cite{DBLP:conf/esa/DemaineGKLLSVP19} studied this question systematically, albeit restricted to polynomial-time approximation. They developed a general theorem that gives sufficient conditions on the problem in order to guarantee an affirmative answer to this question when the class $\cH$ has bounded treewidth or arboricity, but in their context, the approximation ratio of the algorithms depends on $\mdh$. However, Question 1 in the setting of FPT approximation is  largely  unexplored. To  the best of our knowledge, Question 2 has not been considered in the literature and this is the main focus of our paper.

\subsection{Our contributions}

The main conceptual message of the paper is the following meta-result (stated informally):

\begin{tcolorbox}[colback=white!5!white,colframe=gray!75!black]
%\begin{quote}
If a problem can be captured by a very expressible logic fragment (i.e, Counting Monadic Second Order Logic ({\sf CMSO})) and has an appropriate graph replacement subroutine (i.e., Finite Integer Index ({\sf FII})) and fulfils a few other mild requirements, then the existence of an {\FPTAS} for the problem parameterized by any of the three parameters $\mdh, \edh$ and $\twh$ is equivalent to each other.

%\end{quote}
	
\end{tcolorbox}

The formal version of our first meta-theorem in the context of the well-studied family of vertex-deletion problems is given below. Say that a family of graphs is {\em well-behaved} if it is hereditary and  closed under disjoint union.

\begin{restatable}{theorem}{monotoneFIIFptas} \label{thm:monotone-fii-fptas}
Let $\cH, \cF$ be well-behaved families of graphs, where $\cF$ is CMSO-definable. Suppose $\Pi = \deltof$ has \FII.  
	Then, the following statements are equivalent.
	\begin{enumerate}
		\item $\Pi$ admits an \FPTAS parameterized by $\mdh(\cdot)$ and $\epsilon$.
		\item $\Pi$ admits an \FPTAS parameterized by $\edh(\cdot)$ and $\epsilon$.
		\item $\Pi$ admits an \FPTAS parameterized by $\twh(\cdot)$ and $\epsilon$.
	\end{enumerate}
\end{restatable}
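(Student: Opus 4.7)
Since $\twh(G) \le \edh(G) \le \mdh(G)$ for every graph $G$, as noted in the introduction, the running time $f(\twh(G), \epsilon)\cdot n^{\Oh(1)}$ of any \FPTAS parameterized by $\twh(\cdot)$ is dominated by $f(\edh(G), \epsilon)\cdot n^{\Oh(1)}$ and by $f(\mdh(G), \epsilon)\cdot n^{\Oh(1)}$ after replacing $f$ by a monotone upper bound, and the $(1\pm\epsilon)$-approximation guarantee carries over verbatim. This yields $(3)\Rightarrow(2)\Rightarrow(1)$, so the plan is to focus on the nontrivial direction $(1)\Rightarrow(3)$.

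\textbf{Reduction via protrusion replacement.} Let $\cA$ be the assumed \FPTAS for $\Pi$ parameterized by $\mdh(\cdot)$. Given $(G,\epsilon)$ with $\twh(G) = k$, I plan to build a reduced instance $(G^*,\epsilon')$ in which $\mdh(G^*)$ is bounded by a computable function of $k$ and $1/\epsilon$, invoke $\cA$ on it, and lift the returned solution back to $G$. First I compute a tree $\cH$-decomposition of $G$ of width $f(k)$ in \FPT time, using approximation routines for $\twh$ of the sort developed by Agrawal et al.\ and Morelle et al.\ referenced in the introduction. Each base component of this decomposition is an induced subgraph that lies in $\cH$ and attaches to the rest of $G$ through an $\Oh(f(k))$-sized boundary, i.e.\ a protrusion. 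The \FII hypothesis supplies, for every such boundary size, a finite family of representative graphs such that every protrusion is $\Pi$-equivalent to a unique representative with a known additive offset on $\OPT$; the CMSO-definability of $\cF$ ensures that the equivalence signature of a protrusion is computable in \FPT time, so each oversized protrusion can be swapped for its canonical representative. Call the result $G^*$. Because every $\cH$-part of $G^*$ now has bounded size, a further traversal of the decomposition tree lets the non-$\cH$ skeleton be compressed into a modulator of $G^*$ to $\cH$ of total size $\le h(k, 1/\epsilon)$, yielding $\mdh(G^*) \le h(k, 1/\epsilon)$.

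\textbf{Lifting and the main obstacle.} I then run $\cA$ on $(G^*, \epsilon')$ for a suitably scaled $\epsilon' = \epsilon/c$ and invert the \FII replacements to produce a candidate solution on $G$; accounting for the accumulated additive offsets converts $\cA$'s multiplicative guarantee on $G^*$ into a $(1\pm\epsilon)$-approximation on $G$. The central technical obstacle is the clash between the \emph{additive} error accumulated during \FII replacements and the \emph{multiplicative} guarantee demanded by an \FPTAS: the cumulative slack over many protrusions could in principle dominate $\epsilon\cdot\OPT(G)$. I plan to circumvent this via a two-phase dichotomy: when $\OPT(G)$ exceeds a threshold depending on $k$ and $1/\epsilon$, the additive error is absorbed into $\epsilon\cdot\OPT(G)$ by choosing $\epsilon'$ small enough; otherwise $\OPT(G)$ is itself bounded by a function of $(k, 1/\epsilon)$, and the exact \FPT algorithms for CMSO-definable problems under $\twh$-parameterization established by Jansen et al.\ and Agrawal et al.\ solve the instance outright. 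A secondary subtlety is that the tree-guided compression underlying $\mdh(G^*) \le h(k, 1/\epsilon)$ must preserve the \FII structure along the decomposition, which is precisely where the well-behavedness (hereditary, and closed under disjoint union) of $\cH$ and $\cF$ becomes decisive.
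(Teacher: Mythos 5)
The easy implications $3 \Rightarrow 2 \Rightarrow 1$ and the general architecture (protrusion replacement via \FII on the base components of the $\cH$-tree decomposition) match the paper. But your route for $1 \Rightarrow 3$ has two genuine gaps that you do not close.

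First, you skip the good/bad leaf dichotomy on which the paper's argument hinges, and the fallout is fatal for the \FII replacement step. Lemma~\ref{lem:red2finiteindex} needs an \emph{exact} decision oracle for $\Pi$ on graphs of the form $P \oplus Y_j$, where $P$ is a protrusion and $Y_j$ ranges over the hardcoded representatives; the running time of that lemma also scales with the threshold $\iota(P)$. You say CMSO-definability makes the equivalence signature computable in \FPT time, but that is only true on bounded-treewidth inputs, and here $P$ contains a base component lying in $\cH$, whose treewidth is unbounded. The only tool you actually have is the \FPTAS $\cA$, and an \FPTAS can be made to decide exactly only when $\OPT$ on the instance is bounded (by picking $\epsilon' < 1/(2\OPT)$). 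Without a per-protrusion bound, $\OPT(P)$ can be $\Theta(n)$, so neither the exactness nor the running time goes through. The paper fixes precisely this: a leaf $t$ is declared ``bad'' only when $|R_t| > (\epsilon/3)|S_t|$, which forces $\OPT(G[H_t]) \le 3\ell/\epsilon$; for ``good'' leaves one simply adds $S_t \cup R_t$ and never touches \FII at all, with Observation~\ref{obs:monotone} absorbing the cost of $R_t$. Your global dichotomy on $\OPT(G)$ does not give this per-leaf bound, since $\OPT(G)$ being large tells you nothing about the individual $\OPT(G[H_t])$'s.

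Second, the claim that $\mdh(G^*) \le h(k,1/\epsilon)$ after replacement is unjustified and in general false. Replacing every base component by a bounded-size representative yields bounded \emph{treewidth} for $G^*$, not bounded $\mdh(G^*)$: the decomposition can have arbitrarily many leaves, hence arbitrarily many base vertices, and there is no reason the non-base skeleton admits a small modulator to $\cH$ (e.g.\ $\cH$ could be the edgeless graphs). Invoking $\cA$ again on $G^*$ is therefore not available. The correct closing move — and the one the paper takes in Lemma~\ref{lem:fii-replacement} — is to note that $G^*$ (there, $G_q$) has treewidth bounded by $\ell + \mu$ and to decide it directly via Courcelle's theorem, using CMSO-definability of $\cF$. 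So while your intuition about where the multiplicative/additive tension lies is right, the mechanism you propose to resolve it (compress to small $\mdh$, rerun $\cA$) does not work; the resolution must go through the good/bad split and a final bounded-treewidth solve.
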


We also prove meta-theorems similar to Theorem~\ref{thm:monotone-fii-fptas} where $\Pi$ is {\sc ${\cal F}$-Subgraph Packing} (i.e., pack maximum number of vertex-disjoint subgraphs isomorphic to graphs in $\cF$) or {\fpacking} (i.e., pack maximum number of vertex-disjoint minor-models of graphs in $\cF$).

In order to invoke these equivalence theorems,  we first show that a wide range of graph problems, that are W[1]-hard (or para-NP hard) parameterized by $\mdh$, admit \FPT Approximation Schemes ({\FPTAS}es, i.e., $(1\pm \epsilon)$-approximation) when parameterized by  $\mdh$. Hence, as corollaries of our meta-results, we also get the first {\FPTAS}es for the following (non-exhaustive list of)  problems parameterized by $\twh$ (and hence, also by $\edh$). When $\cH$ is apex-minor free, we obtain this result for  {\indset}, {\sc Triangle Packing}. When $\cH$ is minor free, we are able to infer the same result for {\sc Vertex Cover}, {\sc Feedback Vertex Set}, {\sc Cycle Packing} (and more generally, {\fpacking}).

Finally, we identify {\sc Dominating Set} and its connectivity constrained variant along with {\sc Connected Vertex Cover}   as interesting special cases that are not covered by the framework developed above and give  purpose-built {\FPTAS}es for them parameterized by $\mdh$ and then also prove an equivalence theorem in the same spirit as Theorem~\ref{thm:monotone-fii-fptas}. 
%\todo{add dom set theorem? What about CVC and CDS?}

In summary, our work highlights many natural problems for which the answer to Question 2 is affirmative. Finally, we note that in \Cref{thm:monotone-fii-fptas} and its variants, due to their generality, we only obtain \emph{non-uniform} \FPTASes. In Conclusion, we briefly discuss certain scenarios where it may be possible to obtain uniform \FPTASes. 

\subsection{Our techniques}
We first summarize the technique behind our {\FPTAS}es parameterized by $\mdh$, which we dub ``Bucket vs Ocean''. Recall that this is required in order to invoke Theorem~\ref{thm:monotone-fii-fptas} to get our eventual results, i.e., {\FPTAS}es parameterized by $\twh$ and $\edh$. We remark that throughout the paper, we assume that our \FPTAS parameterized by, say $\mdh$ (resp.\ $\edh, \twh$) is also provided with a modulator to $\cH$ (resp.\ $\cH$-elimination/tree decomposition) of the appropriate size (resp.\ width). For $\edh$ and $\twh$, one can use algorithms from \cite{GanianRS17,JansenK021,AgrawalKLPRSZ22Elimination,JansenK21} to compute the appropriate decompositions exactly. Alternatively, one can use the recent result of \cite{JansenK023} to compute constant approximations thereof. 

Consider a special case of   \deltof, say {\sc Vertex Cover}. Consider an $H$-minor free graph family $\cH$, where VC admits an \EPTAS. Consider a graph $G$ and $M \subseteq V(G)$ of size $\mdh(G)$ such that $G - M \in \cH$. We use the known \EPTAS on $G - M$ to compute an $(1+\epsilon/2)$-approximate solution to VC, call it $S$. Next, we compare $|S|$ with $|M|$, and consider two cases. If $|M| < \epsilon/3 \cdot |S| \le (1+\epsilon/2) \cdot \OPT(G-M)$, i.e., $\OPT(G-M)$ is like an ``ocean'', compared to a ``bucket'' of water that is $M$, then we can add the bucket to the ocean ``for free'', i.e., $|S| + |M| \le (1+\epsilon) \cdot \OPT(G-M) \le (1+\epsilon)\cdot \OPT(G)$. Otherwise, if $|M|$ and $|S|$ are comparable, then $\OPT(G) \le 3 \cdot \mdh(G)/\epsilon$, in which case we can use the \FPT algorithm for \textsc{Vertex Cover}, which is in fact in \FPT in $\mdh$ and $\epsilon$. This is the high level idea behind our {\FPTAS}es parameterized by $\mdh$, although one has to overcome several problem-specific challenges to make it work for the other problems.

A more sophisticated version of the idea also turns out to be useful in proving the equivalence theorem. Again, let us consider the example of {\sc Vertex Cover} and $\cH$ being an $H$-minor free graph family. Having armed ourselves with an \FPTAS parameterized by $\mdh$ as in the previous paragraph, now our task is to generalize to $\twh$. To this end, we consider each bag $\chi(t)$ corresponding to a node $t$ in the $\cH$-tree decomposition (defined formally in the next section), which consists of $\ell \le \twh(G)+1$ vertices, say $R_t$, which locally act like a modulator to a disjoint ``base graph'', say $G[H_t] \in \cH$. We use the same ``Bucket vs Ocean'' idea to classify each node as \emph{good} or \emph{bad}. A node $t$ is \emph{good}, if a $(1+\epsilon)$-approximate solution $S_t$ for $G[H_t]$ is like an ``ocean'' compared to $R_t$. In this case, we can again add $R_t$ to $S_t$ ``for free''. Otherwise, we say that a node $t$ is bad, if $\OPT(G[H_t])$ is bounded by $3\ell/\epsilon$. From here, our task is to reduce the treewidth of the graph induced by the vertices in the bags of bad nodes. To this end, we use the \FII property as well as the \FPTAS to perform a series of ``graph replacements'' in each bad node, where in each iteration, we replace the corresponding bag with an ``equivalent subgraph'' of size bounded by a function of $\ell$ and $\epsilon$. At the end of this procedure, the resulting graph has treewidth, and we can use the CMSO-definability and Courcelle's theorem \cite{Courcelle90} to find an optimal solution (that can then be translated back to the original instance). It can be shown that the resulting solution is a $(1+\epsilon)$-approximate solution for the entire graph, and the whole algorithm runs in time $\FPT$ in $\twh$ and $\epsilon$. 
\section{Preliminaries}\label{sec:prelims}
For an instance $\cI$ of an optimization problem $\Pi$, we denote the value/size of an optimal solution by $\OPT_\Pi(\cI)$, and may omit the subscript if the problem is clear from the context. An algorithm that, for any feasible instance of minimization (resp.\ maximization) problem $\Pi$, returns a solution of cost/size at most (resp.\ at least) $\alpha \cdot \OPT$, is called an $\alpha$-approximation. We say that a minimization (resp.\ maximization) problem $\Pi$ admits an \FPTAS (FPT-Approximation Scheme) parameterized by a parameter $t$ and $\epsilon$, if there exists an $(1+\epsilon)$-approximation (resp.\ $(1-\epsilon)$-approximation) algorithm with running time of the form $f(t, \epsilon) \cdot |\cI|^{\Oh(1)}$, for some computable function $f$. Similarly, we say that $\Pi$ admits an \EPTAS (efficient polynomial-time approximation scheme), if there exists a $(1+\epsilon)$-approximation (resp.\ a $(1-\epsilon)$-approximation) algorithm with running time of the form $g(\epsilon) \cdot |\cI|^{\Oh(1)}$, for some computable function $g$.

\smallskip
\noindent 
{\bf Graphs.} Consider a graph $G$. We denote by $V(G)$ and $E(G)$, the set of vertices and the set of edges of the graph $G$, respectively. We say that a graph $H$ is a subgraph of $G$ if $V(H)\subseteq V(G)$ and $E(H)\subseteq V(G)$. We say that $H$ is an induced subgraph of $G$ if $E(H)=\{uv~|~u,v\in V(H), uv\in E(G)\}$. Let $S\subseteq V(G)$, by $G[S]$ we denote the graph $G$ induced on set $S$. For $S\subseteq V(G)$, by $G-S$, we denote the graph $G[V(G)\setminus S]$. That is, the vertex set of $G-S$ is $V(G)\setminus S$ and the edge set is $\{uv|u,v\notin S, uv\in E(G)\}$. Let $F\subseteq E(G)$, by $G-F$, we denote the graph with vertex set $V(G)$ and edge set $E(G)\setminus F$. By {\em vertex (edge) deletion} operation or deletion of a vertex $w$ (edge $uv$) in $G$, we  mean the graph $G-w$ ($G-\{uv\}$) obtained by deleting vertex $w$ (edge $uv$) from $G$. For singleton vertex and edge subsets, we use $G - w$ and $G - uv$. By {\em edge contraction} operation or contraction of an edge $uv$ in $G$ we mean deleting vertices $u,v$ in $G$ and adding a new vertex $w$, and making $w$ adjacent to all the neighbors of $u,v$, i.e., the resulting graph has vertex set $V(G)\setminus \{u,v\}\cup \{w\}$ and edge set $\{xy|x,y\notin \{u,v\}\} \cup \{wx|x\in N(u)\cup N(v)\}$. Let $G_1$ and $G_2$ be two graphs on $n_1,n_2$ vertices and $m_1,m_2$ edges, respectively. By disjoint union of $G_1,G_2$ denoted as $G_1\uplus G_2$, we denote the graph with $n_1+n_2$ vertices and $m_1+m_2$ edges, where vertex set of $G_1\uplus G_2$ is $V(G_1)\cup V(G_2)$ and edge set of $G_1\uplus G_2$ is $E(G_1)\cup E(G_2)$. We refer the reader to the book of Diestel \cite{diestel-book}  for standard graph-related terms that are not explicitly defined here.

%\todo[inline]{Basics on graphs: vertex deletion, subgraphs, minors, disjoint unions.}

\begin{definition}[Graph Minor]
	A graph $H$ is a \emph{minor} of $G$ if it can be obtained by a sequence of vertex deletions, edge deletions, and edge contractions.
\end{definition}

Let $\cG$ denote the family of all graphs. We say that a family $\cF \subseteq \cG$ of graphs is \emph{hereditary} if for any graph $G \in \cF$, every induced subgraph $G'$ of $G$ also belongs to $\cF$. We say that a family $\cF \subseteq \cG$ is \emph{closed under disjoint union} if for any $G_1, G_2 \in \cF$, the graph $G$ obtained by taking the disjoint union of $G_1$ and $G_2$, also belongs to $\cF$. We say that a family $\cF \subseteq \cG$ is \emph{well-behaved} if it is (i) hereditary, and (ii) closed under disjoint union. 

%\todo[inline]{Tree decompositions, treewidth, modulator, elimination distance, treewidth wrt $\cH$.}

\smallskip
\noindent 
{\bf Graph Decomposition.} 
The following definitions are borrowed from \cite{AgrawalKLPRSZ22Elimination}. For a graph $G$, $\mdh(G)$ denotes the size of a smallest vertex set $S$ such that $G-S\in \cH$. If $G-S\in \cH$, then $S$ is called a {\em modulator} to $\cH$.

\begin{definition} \label{def:h-elimination-decomp}
	For a graph family $\cH$, an \emph{$\cH$-elimination decomposition} of $G$ is a triple $(T, \chi, L)$, where $T$ is a rooted forest, $\chi: V(T) \to 2^{V(G)}$, and $L \subseteq V(G)$ such that:
	\begin{enumerate}
		\item For each internal node $t \in V(T)$, we have that $|\chi(t)| \le 1$, and $\chi(t) \subseteq V(G) \setminus L$.
		\item The sets $(\chi(t))_{t \in V(T)}$ form a partition of $V(G)$.
		\item For each leaf $t$ in $T$, we have $\chi(t) \subseteq L$, such that the graph $G[\chi(t)]$, called a base component, belongs to $\cH$. Furthermore, $(\chi(t))_{\text{leaf } t}$ forms a partition of $L$.
		\item For each edge $uv \in V(G)$, if $u \in \chi(t_1)$, and $v \in \chi(t_2)$, then $t_1$ and $t_2$ are in ancestor-descendant relation in $T$.
	\end{enumerate}
\end{definition}

The depth of a rooted tree $T$ is the maximum number of edges on a root-to-leaf path in $T$. We refer to the union of base components as the set of base vertices. The $\cH$-elimination distance of $G$, denoted as $\edh(G)$, is the minimum depth of an $\cH$-elimination forest for $G$. Note that a pair $(T, \chi)$ is a (standard) elimination forest of $\cH$ is a class of empty graphs, i.e., the base components are empty. In this case, $\edh(G)$ is known as the \emph{treedepth} of $G$, and is denoted as $\td(G)$.

Just like the notion of $\cH$-elimination decomposition generalizes the notion of elimination forest, the following is an analogous generalization of the notion of tree decomposition.

\begin{definition} \label{def:h-treewidth}
	For a graph family $\cH$, an \emph{$\cH$-tree decomposition} of a graph $G$ is a triple $(T, \chi, L)$, where $T$ is a rooted tree, $\chi: V(T) \to 2^{V(G)}$, and $L \subseteq V(G)$, such that:
	\begin{enumerate}
		\item For each $v \in V(G)$, the nodes $\LR{t \in V(T) : v \in \chi(t)}$ induce a non-empty connected subtree in $T$.
		\item For each edge $uv \in E(G)$, there is a node $t \in V(G)$ with $\{u, v\} \in \chi(t)$.
		\item For each vertex $v \in L$, there is a \emph{unique leaf} $t \in V(T)$ for which $v \in \chi(t)$. 
		\item For each leaf node $t \in V(T)$, the graph $G[\chi(t) \cap L] \in \cH$.
	\end{enumerate}
\end{definition}

The \emph{width} of an $\cH$-tree decomposition is defined as $\max\LR{ 0, \max_{t \in V(T)} |\chi(t) \setminus L| - 1 }$. The $\cH$-treewidth of $G$, denoted by $\twh(G)$, is the minimum width of an $\cH$-tree decomposition of $G$. The connected components of $G[L]$ are called base components, and the vertices in $L$ are called base vertices. 

A pair $(T, \chi)$ is a (standard) \emph{tree decomposition} if $(T, \chi, \emptyset)$ satisfies all conditions of an $\cH$-decomposition, where the choice of $\cH$ is irrelevant.

\medskip

\noindent 
{\bf Counting Monadic Second Order Logic.}
\label{sec:prelimsCMSO}
The syntax of Monadic Second Order Logic (MSO) of graphs includes the logical connectives $\vee,$ $\land,$ $\neg,$ 
$\Leftrightarrow,$ $\Rightarrow,$ variables for vertices, edges, sets of vertices and sets of edges, the quantifiers $\forall$ and $\exists$, which can be applied to these variables, and five binary relations: 
\begin{enumerate}
\item $u\in U$, where $u$ is a vertex variable and $U$ is a vertex set variable; 
\item $d \in D$, where $d$ is an edge variable and $D$ is an edge set variable;
\item $\mathbf{inc}(d,u),$ where $d$ is an edge variable, $u$ is a vertex variable, and the interpretation  is that the edge $d$ is incident to  $u$; 
\item $\mathbf{adj}(u,v),$ where $u$ and $v$ are vertex variables, and the interpretation is that $u$ and $v$ are adjacent;
\item  equality of variables representing vertices, edges, vertex sets and edge sets.
\end{enumerate}

Counting Monadic Second Order Logic (CMSO) extends MSO by including atomic sentences testing whether the cardinality of a set is equal to $q$ modulo $r,$ where $q$ and $r$ are integers such that $ 0\leq q<r $ and $r\geq 2$. That is, CMSO is MSO with the following atomic sentence: 
$\mathbf{card}_{q,r}(S) = \mathbf{true}$ if and only if $|S| \equiv q \pmod r$, where $S$ is a set.
We refer to~\cite{ArnborgLS91,Courcelle90} for a detailed introduction to  CMSO. 
%~\cite{ArnborgLS91,Courcelle90,courcelle1997expression}

%!TEX root = main.tex

\section{Vertex Deletion to $\cF$} \label{sec:deltof}
In this section we give our ``first equivalence'' result. Towards that, first in \Cref{subsec:deltof-examples}, we design concrete \FPTASes for several \deltof type problems (e.g., \textsc{Vertex Cover, Feedback Vertex Set} parameterized by $\mdh$, where $\cH$ is an apex-minor family of graphs.  
Then, in  \Cref{subsec:deltof-equiv} we prove our main equivalence theorem, which lets us extend the previous \FPTASes parameterized by $\mdh(\cdot)$ to $\edh(\cdot)$, and $\twh(\cdot)$. 

\subsection{Preliminaries for \deltof} \label{subsec:deltof-prelims}

We focus on the following problem, specifically on the case where $\cF$ is some fixed well-behaved family of graphs. 

\begin{tcolorbox}[colback=white!5!white,colframe=gray!75!black]
	\deltof 
	\\\textbf{Input:} An instance $(G, k)$, where $G = (V, E)$ is a graph, and $k \ge 0$ is an integer.
	\\\textbf{Question:} Does there exist a subset $S \subseteq V(G)$ of size at most $k$, such that $G - S \in \cF$?
\end{tcolorbox}
\medskip\noindent\textbf{\textsf{Optimization variant.}} Let $\cF$ be a well-behaved family of graphs. For a graph $G$, a set $S \subseteq V(G)$ is said to be a \emph{solution} to \deltof, if $G - S \in \cF$. In the optimization variant of \deltof, we want to find a solution of the smallest cardinality. By slightly abusing the notation, we will use \deltof to refer to the decision as well as the optimization version, and will only disambiguate when strictly necessary. Note that assuming $\cF$ contains at least one graph, the hereditary property implies that the empty graph belongs to $\cF$, which means that $V(G)$ is always a solution to \deltof. 

\begin{observation} \label{obs:monotone}
	Let $\cF$ be a well-behaved family, and let $\Pi = \deltof$. Then, for any graph $G$ and any subset $S \subseteq V(G)$, it holds that 
	$$\OPT_\Pi(G - S) \le \OPT_\Pi(G) \le \OPT_\Pi(G - S) + |S|.$$
\end{observation}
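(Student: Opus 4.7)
The plan is to prove the two inequalities separately, one by constructing a solution to $\Pi$ on $G$ from a solution on $G-S$, and the other by restricting a solution on $G$ to obtain a solution on $G-S$. Both directions will use the two well-behavedness properties of $\cF$ (hereditary and closed under disjoint union), and in particular the hereditary property will be the workhorse for the lower-bound direction.

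For the upper bound $\OPT_\Pi(G) \le \OPT_\Pi(G-S) + |S|$, I would take an optimal solution $T^\star$ to \deltof on $G-S$, so that $(G-S) - T^\star \in \cF$ and $|T^\star| = \OPT_\Pi(G-S)$. The set $S \cup T^\star$ is then a candidate solution for $G$, because $G - (S \cup T^\star)$ is exactly the graph $(G-S) - T^\star$, which lies in $\cF$. Hence $\OPT_\Pi(G) \le |S \cup T^\star| \le |S| + |T^\star|$, giving the desired inequality. (This direction does not even need the hereditary property; it follows just from the definition of \deltof.)

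For the lower bound $\OPT_\Pi(G-S) \le \OPT_\Pi(G)$, I would take an optimal solution $T^\star$ to \deltof on $G$, so $G - T^\star \in \cF$ and $|T^\star| = \OPT_\Pi(G)$. The candidate solution for $G-S$ I would propose is $T^\star \setminus S$: removing it from $G-S$ yields $G - (S \cup T^\star)$, which is an induced subgraph of $G - T^\star$. Since $\cF$ is hereditary, $G - (S \cup T^\star) \in \cF$, so $T^\star \setminus S$ is indeed a feasible solution on $G-S$. Therefore $\OPT_\Pi(G-S) \le |T^\star \setminus S| \le |T^\star| = \OPT_\Pi(G)$.

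There is essentially no obstacle here; the proof is a two-line construction in each direction. The only subtlety worth highlighting in writing it up is pointing out precisely where the hereditary assumption enters (the lower bound), and noting that membership of the empty graph in $\cF$ (discussed just before the observation) guarantees that $\OPT_\Pi$ is well-defined on every graph, so both sides of the inequality make sense.
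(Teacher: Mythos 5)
Your proof is correct and follows essentially the same two-step argument as the paper: restrict an optimal solution of $G$ to $V(G)\setminus S$ and invoke the hereditary property for the lower bound, and augment an optimal solution of $G-S$ by $S$ for the upper bound. Your observation that the upper bound does not actually use hereditarity is a minor extra remark, but otherwise the two proofs coincide.
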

\begin{proof}
	Let $T \subseteq V(G)$ (resp.\ $T' \subseteq V(G) \setminus S$) be an optimal solution to $\Pi$ on $G$ (resp.\ $G - S$). Let $T'' = T \cap (V(G) \setminus S)$, and note that $G'' \coloneqq (G - S) - T''$ is an induced subgraph of $G - T \in \cF$. Since $\cF$ is hereditary, $G'' \in \cF$, which implies that $T''$ is a solution to $\Pi$ on $G''$. Then, $\OPT_{\Pi}(G - S) = |T'| \le |T''| \le |T| = \OPT_{\Pi}(G)$, the first inequality follows.
	
	For the second inequality, note that $(G - S) - T' \in \cF \iff G - (S \cup T') \in \cF$. That is, $S \cup T'$ is a feasible solution to $\Pi$ on $G$. Thus, $\OPT_\Pi(G) \le |S \cup T'| = |S| + |T'| \le |S| + \OPT_\Pi(G - S)$.
\end{proof}

\begin{definition}[Self-reducibility] \label{def:self-reducibility}
	Let $\Pi$ be a decision problem on graphs, and let $\cA$ be an algorithm for $\Pi$. We say that $\Pi$ is \emph{self-reducible} if, there exists an algorithm $\cA'$, that makes at most polynomially many calls to $\cA'$, and at most a polynomial overhead, and has the following behavior. Given a no-instance of $\Pi$, $\cA'$ correctly concludes that it is a no-instance; otherwise, given a yes-instance of $\Pi$, it outputs a ``solution'' witnessing the yes-instance.
\end{definition}
The preceding definition is imprecise, since the notion of ``solution'' is not formally defined. However, for all the problems that we will consider in this paper, the notion of solution will be obvious. For example, the decision version of \vc (resp.\ \indset) takes input a pair $(G, k)$, and asks whether there exists a vertex cover (independent set) of size at most (at least) $k$. Here, the solution is naturally a vertex cover (independent set) of the prescribed size.

\begin{lemma} \label{lem:self-reducibility}
	For any fixed family $\cF \subseteq \cG$, the problem \deltof is self-reducible. %That is, let $\cA$ be an algorithm that, that correctly decides whether a given instance $(G, k)$ is a yes-instance of \deltof in time $T$, where $n = |V(G)|$. Then, there exists an algorithm $\cA'$, that, given an instance $(G, k)$, runs in time $T \cdot (n+1) + \polyn$, and either finds a solution $S \subseteq V(G)$ of size at most $k$, or correctly concludes that there exists no solution of size at most $k$.
\end{lemma}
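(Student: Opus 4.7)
\medskip

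\noindent\textbf{Proof proposal.} The plan is to give a standard recursive self-reduction that at each recursive call identifies a single vertex belonging to some size-$\leq k$ solution (if any) by making $O(n)$ queries to $\cA$, and then recurses on a strictly smaller instance. Concretely, given an input $(G,k)$ and oracle access to the decision algorithm $\cA$ for \deltof, the algorithm $\cA'$ first queries $\cA(G,k)$ and returns ``no'' if the answer is negative; otherwise it queries $\cA(G,0)$, and if the answer is positive it returns $\emptyset$ (since then $G\in\cF$ and $S=\emptyset$ witnesses this). If neither of these shortcuts applies, $\cA'$ iterates through the vertices $v\in V(G)$, queries $\cA(G-v,\,k-1)$, and stops at the first $v$ for which the answer is ``yes''. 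It then recursively invokes itself on $(G-v,k-1)$ to obtain a set $S'$ and returns $S'\cup\{v\}$.

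The correctness has two directions. For the easy direction, if $(G-v,k-1)$ is a yes-instance as witnessed by $S'\subseteq V(G)\setminus\{v\}$ with $|S'|\leq k-1$ and $(G-v)-S'\in\cF$, then $S\coloneqq S'\cup\{v\}$ satisfies $|S|\leq k$ and $G-S=(G-v)-S'\in\cF$, so $S$ is a valid solution for $(G,k)$; by induction on $k$, the set returned by the recursive call is itself a valid solution of size $\leq k-1$, so $\cA'$ indeed outputs a valid solution of size $\leq k$ at this level. For the other direction, I need to ensure that if $(G,k)$ is a yes-instance and the shortcut in step two did not trigger (so $G\notin\cF$, equivalently any solution has size $\geq 1$), then some vertex $v$ will be found in the loop. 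Taking any solution $S^\star\neq\emptyset$ of size $\leq k$ and any $v\in S^\star$, the set $S^\star\setminus\{v\}\subseteq V(G)\setminus\{v\}$ has size $\leq k-1$ and satisfies $(G-v)-(S^\star\setminus\{v\})=G-S^\star\in\cF$, so $\cA(G-v,k-1)$ returns yes for that $v$. Notably, this argument does not even require $\cF$ to be well-behaved; it uses only the definition of \deltof.

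For the complexity, each invocation of $\cA'$ performs at most $n+2$ queries to $\cA$ plus polynomial-time bookkeeping (constructing the graphs $G-v$, comparing cardinalities, etc.), where $n=|V(G)|$. Each recursive call strictly decreases both the budget $k$ and the vertex count of $G$, so the recursion depth is bounded by $\min(k,n)\leq n$. Hence the total number of oracle calls is $O(n^2)$ and the total overhead is polynomial in $|\cI|$, matching the requirement of \Cref{def:self-reducibility}.

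I do not anticipate a major obstacle here; the only subtlety is to handle the base case $G\in\cF$ separately so that the ``some vertex in the loop must succeed'' argument in the correctness proof goes through (otherwise, when $S^\star=\emptyset$ is the only solution, no vertex can be committed). This is precisely why the second preliminary query $\cA(G,0)$ is performed before entering the vertex loop.
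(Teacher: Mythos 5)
Your proposal is correct and follows the same self-reduction strategy as the paper: call $\cA$ on $(G,k)$ to rule out no-instances, then iteratively locate a vertex $v$ with $\cA(G-v,k-1)=\text{yes}$, commit $v$, and recurse on $(G-v,k-1)$, using exactly the same two-sided argument (any $v$ in a solution yields a yes-instance $\cI_v$, and conversely any such $v$ extends a solution of $\cI_v$ to one for $(G,k)$). The only cosmetic difference is that you insert an explicit $\cA(G,0)$ base-case query to handle $G\in\cF$, whereas the paper relies on the loop falling through with an empty accumulated solution; both are fine, and your version is slightly more explicit on this edge case.
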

\begin{proof}
	Consider the input $(G, k)$ to $\cA'$. We first call $\cA$ on the instance $\cI = (G, k)$, which takes time $T$. If $\cA$ reports that $\cI$ is a no-instance of \deltof, then so does $\cA'$. Otherwise, assume that $\cI$ is a yes-instance. Thus, there exists $S \subseteq V(G)$ of size at most $k$ such that $G - S \in \cF$. For any $v \in V(G)$, define an instance $\cI_v \coloneqq (G - v, k-1)$.
	
	For any $v \in S$, observe that the graph $(G - v) - (S \setminus \{v\}) = G - S \in \cF$. Therefore, for any $v \in S$, $\cI_v$ is a yes-instance. Furthermore, consider a $w \in V(G)$ such that $\cI_w$ is a yes-instance. Then, there exists a subset $S' \subseteq V(G) \setminus \{w\}$ of size at most $k-1$ such that $(G - w) - S' \in \cH  \implies G - (S' \cup \{w\}) \in \cF$. Thus, $S' \cup \{w\}$ is a solution of size at most $k$.
	
	So $\cA'$ iterates over the vertices $v \in V(G)$, and use the algorithm $\cA$ to decide the instance $\cI_v$. The first time $\cA$ reports that $\cI_v$ is a yes-instance, we add $v$ to our solution, and continue the procedure on the graph obtained by deleting $v$ from the current graph, and reducing $k$ by $1$. We stop when either the graph becomes empty or $k$ becomes $0$. The correctness of the algorithm follows from the previous claim. As for the running time, $\cA$ is called initially on $(G, k)$, and then at most once for each instance $\cI_v$ for each $v \in V(G)$. Accounting for the polynomial time taken for deleting vertices and incident edges, the claimed running time follows.
\end{proof}

\subsection{FPT-ASes for \deltof Parameterized by $\mdh(\cdot)$} \label{subsec:deltof-examples}

Let $\cH, \cF$ be well-behaved families of graphs, such that $\Pi = \deltof$ admits an \EPTAS for (the optimization variant of) $\Pi$ on any $G \in \cH$. Then, we show how to design an \FPTAS for $\Pi$ parameterized by $\mdh(\cdot)$, and $\epsilon$. Note that, for a graph $G$, when $p \coloneqq \mdh(G) = 0$, the graph already belongs to $\cH$, and in this case our algorithm is essentially an \EPTAS. In this sense, we generalize the assumed \EPTAS on $\cH$ to a larger family of graphs that are at most $p$ vertices away from $\cH$. 

Let $M \subseteq V(G)$ be a modulator to $\cH$ of size $p$, i.e., $G - M =: F \in \cH$. We first compute a $(1+\epsilon/2)$-approximate solution $S$ for $F$ using \EPTAS in time $g(\epsilon) \cdot \polyn$. Note that $S \subseteq V(G) \setminus M$, and $p \le (1+\epsilon/2) \cdot \OPT(F) \le (1+\epsilon/2) \OPT(G)$, where the last inequality follows from \Cref{obs:monotone}. 
%\iffalse
%\begin{lemma}
%	For any $X \subseteq V(G), \ \text{OPT}_{G \setminus X} \leq \text{OPT}_{G} \leq \text{OPT}_{G \setminus X} + |X|$
%\end{lemma}
%
%\begin{proof}
%	
%	Suppose, $ S = \text{Sol}_{G} \setminus X$. Hence, $|S| \leq \text{OPT}_{G}$. Observe that $S$ is a solution for $G \setminus X$, otherwise it contradicts that $\text{Sol}_{G}$ is a valid solution for $G$. So, $\text{OPT}_{G \setminus X}$ for $G \setminus X$ is upper bounded by $|S|$. So, $\text{OPT}_{G \setminus X} \leq |S| \leq \text{OPT}_{G}$.
%	
%	Observe that $X \cup \text{Sol}_{G \setminus X}$ is a valid solution for $G$ as $\text{Sol}_{G \setminus X}$ is a valid solution for $G \setminus X$ and $X$ takes care of edges inside $G[X]$ as well as edges between $X$ and $V(G)\setminus X$. Hence size of optimum solution for $G$ is upper bounded by  $X \cup \text{Sol}_{G \setminus X}$. Hence, $\text{OPT}_{G} \leq \text{OPT}_{G \setminus X} + |X|$. 
%\end{proof}
%
%
%From EPTAS on $G \setminus X$, we have that $\text{Sol}_{G \setminus X} \leq \text{S}_{G \setminus X} \leq (1 + \epsilon /2). \text{OPT}_{G \setminus X} \leq (1 + \epsilon /2). \text{OPT}_{G}$.
%\fi 
Depending on the relative sizes of $M$, and the approximate solution $S$, we consider the following two cases. 
\begin{enumerate}
	\item $p \leq \frac{\epsilon}{3} \cdot |S|$. 
	\\Then we simply return $X \cup S$ as a solution for $G$. Note that $|M| + |S| \le (1+\epsilon/3) (1+\epsilon/2) \cdot \OPT(F) \le (1+\epsilon) \cdot \OPT(F) \le (1+\epsilon) \cdot \OPT(G)$.
	
	\item $p > \frac{\epsilon}{3} \cdot |S|$, i.e, $\OPT(F) \le |S| \le \frac{3p}{\epsilon}$. 
	\\Then, by \Cref{obs:monotone}, $\OPT(G) \le \OPT(F) + |M| \le p + \frac{3p}{\epsilon} =: p'$. In this case, we use an FPT algorithm parameterized by the solution size, to find an optimal solution for $G$. Note that this takes $h(p') \cdot \polyn$, i.e., $f(p, \epsilon) \cdot \polyn$ time, where $f(p, \epsilon) = h(p+3p/\epsilon)$. 
\end{enumerate}

Thus, we get the following theorem.

\begin{theorem}
\label{thm:fptasvdMod}
	Let $\cF, \cH$ be well-behaved families of graphs. Moreover, suppose that $\Pi = \deltof$ admits an \EPTAS on any graph in $\cH$, and that $\Pi$ is \FPT parameterized by the solution size. Then, there exists an \FPTAS for $\Pi$, parameterized by $p \coloneqq \mdh(G)$ with running time $\max\{g(\epsilon), f(p, \epsilon)\} \cdot \polyn$. 
\end{theorem}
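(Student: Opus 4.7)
The plan is to follow exactly the ``Bucket vs Ocean'' strategy sketched in the paragraph preceding the theorem, formalising it as a two-case dispatch. First I would assume, as remarked in Section~1.2, that a modulator $M \subseteq V(G)$ of size $p = \mdh(G)$ with $F \coloneqq G - M \in \cH$ is supplied with the input (either computed exactly for the small values of $\mdh$ that the running-time bound already allows, or by an auxiliary decomposition algorithm). Since $F \in \cH$, I can invoke the hypothesized \EPTAS for $\Pi$ on $F$ to obtain, in time $g(\epsilon') \cdot \polyn$, a $(1+\epsilon')$-approximate solution $S$ for $\Pi$ on $F$, where $\epsilon'$ is a constant fraction of $\epsilon$ to be fixed below. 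By \Cref{obs:monotone} we have $\OPT(F) \le \OPT(G)$, so $|S| \le (1+\epsilon')\,\OPT(G)$.

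Next I would branch on the relative magnitudes of $p$ and $|S|$. In the ``ocean'' case $p \le (\epsilon/3)|S|$, return the set $M \cup S$. Feasibility is immediate: $G - (M \cup S) = F - S \in \cF$ since $S$ was a feasible solution on $F$. The size bound is
\[
|M \cup S| \le |M| + |S| \le \lr{1 + \tfrac{\epsilon}{3}}\,|S| \le \lr{1 + \tfrac{\epsilon}{3}}\lr{1+\epsilon'}\,\OPT(G),
\]
so choosing $\epsilon' = \epsilon/2$ (or, if one wants a cleaner bound, $\epsilon' = \epsilon/c$ for a sufficiently large absolute constant $c$) yields a $(1+\epsilon)$-approximation. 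In the complementary ``bucket'' case $p > (\epsilon/3)|S| \ge (\epsilon/3)\,\OPT(F)$, we obtain $\OPT(F) < 3p/\epsilon$, and by the other inequality of \Cref{obs:monotone},
\[
\OPT(G) \le \OPT(F) + p \le \lr{\tfrac{3}{\epsilon} + 1}\,p \eqqcolon p'.
\]
Since $p'$ is bounded by a computable function of $p$ and $\epsilon$, the assumed \FPT algorithm parameterized by solution size computes an exact optimum in time $h(p') \cdot \polyn$, which can be written as $f(p,\epsilon) \cdot \polyn$. In either branch the output is a $(1+\epsilon)$-approximation, and the overall running time is $\max\{g(\epsilon/2), f(p,\epsilon)\} \cdot \polyn$, matching the stated bound.

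I do not anticipate a serious obstacle: the whole argument is a direct case analysis driven by \Cref{obs:monotone} and the two hypothesized algorithms. The only subtlety is the standard chaining $(1+\epsilon/3)(1+\epsilon/2) \le 1+\epsilon$, which is resolved by scaling the inner tolerance by a large enough constant. The more conceptual point, which will reappear in the equivalence theorem of \Cref{subsec:deltof-equiv}, is that the \EPTAS is only ever invoked on the base graph $F \in \cH$; the presence of the modulator $M$ is exploited only through the additive slack of $p$ supplied by \Cref{obs:monotone}, together with the threshold $\OPT(G) = \Oh(p/\epsilon)$ that triggers the exact \FPT subroutine.
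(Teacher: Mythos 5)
Your proposal is correct and follows essentially the same ``Bucket vs Ocean'' two-case dispatch that the paper itself uses: compare $p$ against $\frac{\epsilon}{3}|S|$, either add $M$ ``for free'' to the \EPTAS output or observe that $\OPT(G) \le p + 3p/\epsilon$ and invoke the exact \FPT subroutine parameterized by solution size, with the same bookkeeping via \Cref{obs:monotone}. The only cosmetic difference is that you pass the approximation guarantee through $\OPT(G)$ directly rather than through $\OPT(F)$ first, which is an equivalent chaining of the same two inequalities.
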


%\iffalse
%
%\begin{proof}
%	For the first case, the algorithm returns $X \cup \text{S}_{G \setminus X}$ in time $2^{\mathcal{O}(1/ \epsilon)}$.
%	
%	\begin{align*}
%		|X \cup \text{S}_{G \setminus X}|  &\leq (\epsilon /2) \ \text{OPT}_{G} + (1 + \epsilon /2). \text{OPT}_{G} 
%		\\& =(1 + \epsilon). \text{OPT}_{G}
%	\end{align*}
%	
%	For the second case, the algorithm returns an optimum solution in time $2^{\mathcal{O}(k/ \epsilon)} \cdot n^{\mathcal{O}(1)}$. 
%	
%	So overall, the algorithm returns $(1 + \epsilon)$-approximate solution in $2^{\mathcal{O}(k/ \epsilon)} \cdot n^{\mathcal{O}(1)}$ time.
%\end{proof}%!TEX root = main.tex
%\fi

\medskip\noindent\textbf{\textsf{Corollaries.}} Let $\Pi$ be either \textsc{Vertex Cover} or \textsc{Feedback Vertex Set}. Note that $\Pi$ is \deltof, where $\cF$ is family of isolated vertices, and forests, respectively. If $\cH$ is a family of apex-minor free graphs, then $\Pi$ admits an \EPTAS on $\cH$ with running time $2^{\Oh(1/\epsilon)} \cdot \polyn$ (\cite{FominLS18Grid}, Corollary 2). Furthermore, $\Pi$ is \FPT parameterized by solution size on general graphs, with running time $2^{\Oh(k)} \cdot \polyn$, where $k$ denotes the solution size. Therefore, we get an \FPTAS parameterized by $p \coloneqq \mdh(G)$ with running time $2^{\Oh(p/\epsilon)} \cdot \polyn$ for \textsc{Vertex Cover} and \textsc{Feedback Vertex Set}.

We note that it is possible to improve the running time of \FPTAS for \textsc{Vertex Cover} using an even simpler algorithm -- we simply guess the intersection of an optimal solution with $X$ by iterating over all subsets of $X$. Let $Y \subseteq X$ be a guess. If $Y$ is feasible, then it must be that $U \coloneqq X \setminus Y$ must be an independent set, and $M \coloneqq N(U) \cap (V(G) \setminus X)$ must belong to the solution. Thus, the remaining graph is $G \setminus (M \cup X)$, which belongs to the family $\cH$. Then, we use the $2^{\Oh(1/\epsilon)} \cdot \polyn$ time \EPTAS on $\cH$ to obtain a solution $S$ and return the smallest possible solution of the form $Y \cup M \cup S$. Thus, we get a $2^{\Oh(\mdh(G) + 1/\epsilon)} \cdot \polyn$ time \FPTAS. 

\begin{corollary}
	There exists an FPT-AS for \textup{\textsc{Vertex Cover}} (resp.\ \textup{\textsc{Feedback Vertex Set}}) that runs in time $2^{\Oh(p + \frac{1}{\epsilon})} \cdot \polyn$ (resp.\ $2^{\Oh(p/\epsilon)} \cdot \polyn$), where $p = \mdh(G)$, and $\cH$ is a family of apex-minor free graphs. 
\end{corollary}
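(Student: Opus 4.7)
The plan is to handle the two problems separately, with \textsc{Feedback Vertex Set} falling out as a direct instantiation of \Cref{thm:fptasvdMod}, and \vc admitting a tighter bound via the more direct guessing strategy sketched just before the corollary, which avoids invoking the FPT-by-solution-size algorithm on an instance of solution size $\Theta(p/\epsilon)$.

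For \textsc{Feedback Vertex Set}, I would apply \Cref{thm:fptasvdMod} with $\cF$ taken to be the family of forests, which is well-behaved (hereditary and closed under disjoint union). The two ingredients demanded by the theorem are (i) an \EPTAS on any apex-minor-free $\cH$, supplied by Fomin--Lokshtanov--Saurabh~\cite{FominLS18Grid} with running time $g(\epsilon) = 2^{\Oh(1/\epsilon)} \cdot \polyn$, and (ii) an \FPT algorithm parameterized by solution size $k$, running in $h(k) = 2^{\Oh(k)} \cdot \polyn$ time on general graphs. Tracing through the proof of \Cref{thm:fptasvdMod}, in the second (``bucket'') case we call $h$ on an instance whose optimum is at most $p + 3p/\epsilon$, yielding $f(p,\epsilon) = 2^{\Oh(p/\epsilon)}$. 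The overall bound $\max\{g(\epsilon), f(p,\epsilon)\} \cdot \polyn = 2^{\Oh(p/\epsilon)} \cdot \polyn$ matches the claim.

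For \vc, a naive use of \Cref{thm:fptasvdMod} only yields $2^{\Oh(p/\epsilon)}$, so I would instead use the sharper guess-and-\EPTAS strategy. Enumerate all $2^p$ subsets $Y \subseteq M$ as candidates for the intersection of an optimal cover with the modulator. For each such $Y$, if $U := M \setminus Y$ is not independent then no cover realises this guess and we discard it; otherwise every vertex of $M' := N(U) \setminus M$ is forced into the cover. The residual graph $G - (Y \cup M')$ is an induced subgraph of $G - M \in \cH$, and hence lies in $\cH$ by heredity, so we may run the $2^{\Oh(1/\epsilon)} \cdot \polyn$-time \EPTAS on it to obtain $S$, and finally return the smallest set of the form $Y \cup M' \cup S$ across all guesses. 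The running time is $2^p \cdot 2^{\Oh(1/\epsilon)} \cdot \polyn = 2^{\Oh(p + 1/\epsilon)} \cdot \polyn$, as desired.

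The only substantive verification is the approximation ratio for \vc. Writing $Y^\star = \OPT \cap M$ and $S^\star = \OPT \setminus M$, every vertex of $U = M \setminus Y^\star$ must be covered from outside $Y^\star$, so the forced set $M'_{Y^\star}$ is contained in $S^\star$; consequently $S^\star \setminus M'_{Y^\star}$ is a vertex cover of the residual graph and $\OPT(G - (Y^\star \cup M'_{Y^\star})) \le |\OPT| - |Y^\star| - |M'_{Y^\star}|$. The \EPTAS then returns $|S| \le (1+\epsilon) \cdot \OPT(G - (Y^\star \cup M'_{Y^\star}))$, and the assembled cover has size at most $|Y^\star| + |M'_{Y^\star}| + (1+\epsilon)(|\OPT| - |Y^\star| - |M'_{Y^\star}|) \le (1+\epsilon)|\OPT|$. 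The only potential obstacle is bookkeeping, namely ensuring that the cited \EPTAS applies uniformly to every induced subgraph of a member of $\cH$; this is immediate since apex-minor-free classes are closed under induced subgraphs.
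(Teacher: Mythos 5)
Your proposal is correct and follows essentially the same two-pronged strategy the paper uses: FVS via a direct instantiation of \Cref{thm:fptasvdMod}, and \vc via the guess-and-\EPTAS argument sketched in the text immediately preceding the corollary. Your bookkeeping of the FVS running time through the two cases of \Cref{thm:fptasvdMod} is right, and your detailed approximation-ratio analysis for \vc (fixing the guess $Y^\star = \OPT \cap M$, observing that the forced set $M'_{Y^\star}$ must lie in $S^\star$, and concluding that $S^\star \setminus M'_{Y^\star}$ certifies $\OPT(G - (Y^\star \cup M'_{Y^\star})) \le |\OPT| - |Y^\star| - |M'_{Y^\star}|$) is a clean elaboration of what the paper leaves implicit.

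One small imprecision worth flagging: you assert that the residual graph $G - (Y \cup M')$ is an induced subgraph of $G - M$, which is not literally true since it still contains the vertices of $U = M \setminus Y$. What actually happens is that $N(U) \subseteq Y \cup M'$ (because $U$ is independent and $N(U)\setminus M = M'$), so in $G - (Y \cup M')$ the vertices of $U$ become isolated, and the residual graph is the disjoint union of an edgeless graph on $U$ with an induced subgraph of $G - M$. This is still in $\cH$ because $\cH$ is well-behaved (hereditary and closed under disjoint union, with singletons trivially present), so the \EPTAS still applies — but the justification runs through closure under disjoint union rather than heredity alone. The paper sidesteps this by deleting the entire modulator from the residual graph, which makes the "induced subgraph of $G - M$" claim literally correct; alternatively, you could simply note that isolated vertices never belong to a minimum vertex cover and delete them outright.
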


More generally, this result extends for \textsc{Planar $\mathcal{R}$-Deletion}, which is a special case of $\deltof$, where the family $\cF$ is characterized by a set $\mathcal{R}$ of forbidden minors containing at least one planar graph. We note that Fomin et al.\ \cite{FominLS18Grid} give an \EPTAS for this problem on $H$-minor free graphs. On the other hand, this problem is known to admit a single-exponential \FPT algorithm parameterized by the solution size via the results from Fomin et al.\ \cite{FominLMS12Deletion}. Thus, we get the following corollary.

\begin{corollary}
\label{cor:planarRdelmod}
	There exists an \FPTAS for \textup{\textsc{Planar $\mathcal{R}$-Deletion}}, that runs in time $f(p, \epsilon) \cdot \polyn$, where $p = \mdh(G)$, and $\cH$ is a family of $H$-minor free graphs, for some fixed graph $H$ and some computable function $f$. In particular, this implies an \FPTAS for \textup{\textsc{Treewidth-$\eta$-Modulator}}.\footnote{For a fixed $\eta \ge 0$, \textsc{Treewidth-$\eta$-Modulator} is the problem of deciding whether one can delete at most $k$ vertices from a given graph $G$, such that the resulting graph has treewidth at most $\eta$.}
\end{corollary}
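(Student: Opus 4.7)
The plan is to derive this corollary as a direct instantiation of Theorem~\ref{thm:fptasvdMod}, taking $\cF$ to be the family $\cF_{\mathcal{R}}$ of graphs that exclude every $H \in \mathcal{R}$ as a minor. The theorem requires two hypotheses: (i) $\cF_{\mathcal{R}}$ is well-behaved, and (ii) \textsc{Planar $\mathcal{R}$-Deletion} admits an \EPTAS on graphs in $\cH$ as well as an \FPT algorithm parameterized by solution size. So I would verify each of these in turn.

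First, I would check that $\cF_{\mathcal{R}}$ is well-behaved. Hereditariness is immediate since every induced subgraph of an $\mathcal{R}$-minor-free graph is, in particular, a minor of it, and hence again $\mathcal{R}$-minor free. For closure under disjoint union, I would use the standard observation that a connected minor of $G_1 \uplus G_2$ is entirely a minor of either $G_1$ or $G_2$; this suffices since without loss of generality every $H \in \mathcal{R}$ can be assumed connected (otherwise one replaces $\mathcal{R}$ by its connected components, which defines the same family on connected inputs, and one handles disconnected inputs componentwise).

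Second, I would cite the two ingredients already recorded in the paragraph preceding the statement: the \EPTAS of Fomin, Lokshtanov and Saurabh~\cite{FominLS18Grid} for \textsc{Planar $\mathcal{R}$-Deletion} on $H$-minor-free graphs, and the single-exponential \FPT algorithm parameterized by solution size obtained by Fomin, Lokshtanov, Misra and Saurabh~\cite{FominLMS12Deletion}. Feeding these into Theorem~\ref{thm:fptasvdMod} yields an \FPTAS with running time $\max\{g(\epsilon), f'(p,\epsilon)\} \cdot \polyn = f(p,\epsilon)\cdot \polyn$ for a computable function $f$, as required.

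For the \textsc{Treewidth-$\eta$-Modulator} consequence, I would recall that by the Robertson--Seymour Graph Minors theorem the class of graphs of treewidth at most $\eta$ is characterized by a \emph{finite} set $\mathcal{R}_\eta$ of forbidden minors, and moreover this obstruction set contains at least one planar graph (the $(\eta+2)\times(\eta+2)$-grid has treewidth $\eta+1$ and is planar, so some minor-minimal graph in $\mathcal{R}_\eta$ lies inside it and is planar). Hence \textsc{Treewidth-$\eta$-Modulator} is a special case of \textsc{Planar $\mathcal{R}$-Deletion}. There is no real technical obstacle here; the only subtlety is the disjoint-union closure check above, which is why I would address it explicitly before applying the theorem.
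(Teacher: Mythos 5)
Your overall route---instantiating Theorem~\ref{thm:fptasvdMod} with the \EPTAS of Fomin et al.\ on $H$-minor-free graphs and the single-exponential \FPT algorithm parameterized by solution size---is exactly what the paper does, and the paper is in fact terser: it never explicitly checks that $\cF_{\mathcal{R}}$ is well-behaved. Your care in verifying that hypothesis is sensible, but the disjoint-union step as written does not go through: replacing a disconnected $H\in\mathcal{R}$ by its connected components genuinely changes the excluded-minor family, even on connected inputs. For instance, with $\mathcal{R}=\{K_3\uplus K_3\}$ and $\mathcal{R}'=\{K_3\}$, the connected graph $K_3$ belongs to $\cF_{\mathcal{R}}$ but not to $\cF_{\mathcal{R}'}$ (forests), and ``handling disconnected inputs componentwise'' does not rescue this, because excluding a disconnected minor is not a per-component property. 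In fact, if $\mathcal{R}$ is allowed to contain disconnected graphs then $\cF_{\mathcal{R}}$ need not be closed under disjoint union at all (the same example: $K_3\in\cF_{\mathcal{R}}$ but $K_3\uplus K_3\notin\cF_{\mathcal{R}}$), so the well-behaved hypothesis would simply fail and Theorem~\ref{thm:fptasvdMod} would not apply. What is true is the converse implication you cite: if $\mathcal{R}$ consists of connected graphs then $\cF_{\mathcal{R}}$ is closed under disjoint union; this is the standard (implicit) hypothesis for \textsc{Planar $\mathcal{R}$-Deletion}, and the paper silently adopts it. For the \textsc{Treewidth-$\eta$-Modulator} instantiation you are automatically in this regime, though not quite for the reason you give: the class of graphs of treewidth at most $\eta$ is minor-closed \emph{and} closed under disjoint union, so each minor-minimal obstruction must be connected (if $H_1\uplus H_2$ were a minimal obstruction, then $H_1$ and $H_2$ would lie in the class, and disjoint-union closure would place $H_1\uplus H_2$ in the class too, a contradiction). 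Your grid argument for the existence of a planar obstruction is otherwise fine.
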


Finally, we mention \textsc{Odd Cycle Transversal}. Note that there exists a polynomial time $\frac{9}{4}$-approximation for \textsc{Odd Cycle Transversal} on planar graphs \cite{GoemansW98}, and there exists a $3^k \polyn$ time algorithm for the problem parameterized by the solution size \cite{ReedSV04}. Thus, by using the $9/4$-approximation instead of an $(1+\epsilon/2)$-approximation on $G - X$, followed by a similar case analysis, we get the following result.

\begin{corollary} \label{cor:oct}
	There exists an algorithm that runs in time $2^{\Oh(p/\epsilon)} \cdot \polyn$, where $p$ is the size of planar vertex deletion set, to compute a $(\frac{9}{4} + \epsilon)$-approximation for \textup{\textsc{Odd Cycle Transversal}}.
\end{corollary}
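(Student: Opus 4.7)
The plan is to reuse the ``Bucket vs Ocean'' template of \Cref{thm:fptasvdMod} essentially verbatim, but with the polynomial-time $9/4$-approximation of Goemans and Williamson~\cite{GoemansW98} for \textsc{Odd Cycle Transversal} on planar graphs taking the role of the \EPTAS on the base family $\cH$, and with the $3^k \cdot \polyn$ FPT algorithm of Reed, Smith, and Vetta~\cite{ReedSV04} taking the role of the generic solution-size \FPT algorithm. Since the family $\cF$ of bipartite graphs is well-behaved, \textsc{Odd Cycle Transversal} fits the \deltof framework and \Cref{obs:monotone} is available.

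Given a planar modulator $M \subseteq V(G)$ of size $p = \mdh(G)$, I would first compute a $9/4$-approximation $S \subseteq V(G) \setminus M$ to \textsc{Odd Cycle Transversal} on $G - M$ in polynomial time, and then split on the relative sizes of $p$ and $|S|$. The correct threshold is $p \leq \tfrac{4\epsilon}{9} \cdot |S|$: in this regime, $M \cup S$ is a valid transversal of $G$ (because $G - (M \cup S) = (G-M) - S$ is bipartite), and its size satisfies
\[
|M| + |S| \leq \lr{1 + \tfrac{4\epsilon}{9}} |S| \leq \lr{1 + \tfrac{4\epsilon}{9}} \cdot \tfrac{9}{4} \cdot \OPT(G - M) = \lr{\tfrac{9}{4} + \epsilon} \cdot \OPT(G - M) \leq \lr{\tfrac{9}{4} + \epsilon} \cdot \OPT(G),
\]
using $\OPT(G - M) \leq \OPT(G)$ from \Cref{obs:monotone}. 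So the multiplicative $9/4$ loss from the planar approximation and the additive $\tfrac{4\epsilon}{9}$ slack from the modulator collapse cleanly into the target $(\tfrac{9}{4} + \epsilon)$-factor.

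In the complementary regime $p > \tfrac{4\epsilon}{9}|S|$, we have $\OPT(G - M) \leq |S| < \tfrac{9p}{4\epsilon}$, and \Cref{obs:monotone} again yields $\OPT(G) \leq p + \OPT(G - M) = \Oh(p/\epsilon)$. I would then invoke the exact $3^k \cdot \polyn$ FPT algorithm of~\cite{ReedSV04} with $k = \Oh(p/\epsilon)$, producing an optimal odd cycle transversal in time $3^{\Oh(p/\epsilon)} \cdot \polyn = 2^{\Oh(p/\epsilon)} \cdot \polyn$. This branch dominates the overall running time, and the algorithm simply returns whichever of the two branches applies (only one is invoked based on comparing $p$ and $|S|$). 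There is no serious obstacle beyond choosing the threshold so that $\lr{1 + \tfrac{4\epsilon}{9}} \cdot \tfrac{9}{4} = \tfrac{9}{4} + \epsilon$ on the nose; once this is observed, the rest of the argument is a direct transcription of the proof of \Cref{thm:fptasvdMod}, with the $(1+\epsilon/2)$-approximation replaced by the $9/4$-approximation and the case threshold rescaled accordingly.
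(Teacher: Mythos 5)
Your proposal is correct and follows essentially the same approach as the paper, which simply states that the $9/4$-approximation of~\cite{GoemansW98} should replace the $(1+\epsilon/2)$-approximation inside the template of \Cref{thm:fptasvdMod}, with a similar case analysis. Your choice of threshold $p \le \tfrac{4\epsilon}{9}|S|$, making $\lr{1 + \tfrac{4\epsilon}{9}}\cdot\tfrac{9}{4} = \tfrac{9}{4}+\epsilon$ exactly, is the routine recalibration the paper leaves implicit, and the second branch correctly yields $\OPT(G) = \Oh(p/\epsilon)$ so that the $3^k\cdot\polyn$ algorithm of~\cite{ReedSV04} gives the claimed $2^{\Oh(p/\epsilon)}\cdot\polyn$ running time.
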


%\medskip\noindent\textbf{\textsf{Lossy Kernels.}} Note that it is also possible to extend the same idea to obtain lossy kernels for these problems parameterized by $\mdh()$. In case 1, the reduction algorithm outputs a trivial instance. The solution lifting algorithm discards the solution given for the trivial instance, and returns $X \cup S$, which is a $(1+\epsilon)$-approximation. In the second case, since the size of the solution for the graph is bounded, we simply return use a standard polynomial kernels for \textsc{Vertex Cover} (resp.\ \textsc{Feedback Vertex Set}). \red{Why does this fit in the lossy kernel framework? I am not sure if we can map an approximate solution of the kernelized instance back to the original instance...}

%!TEX root = main.tex
\subsection{Equivalence Theorem} \label{subsec:deltof-equiv}
%\todo[inline]{need to reorganize the lemma structure internally, but the proof should be okay}
In this section we ``extend'' the result of Theorem~\ref{thm:fptasvdMod} ``using Theorem~\ref{thm:fptasvdMod}''.  In particular, we get the following equivalent statement. 

%\begin{theorem} \label{thm:monotone-fii-fptas}
%Let $\cH, \cF$ be well-behaved families of graphs. Further, let $\Pi = \deltof$ be CMSO-definable and has \FII.  
%	Then, the following statements are equivalent.
%	\begin{enumerate}
%		\item $\Pi$ admits an \FPTAS parameterized by $\mdh(\cdot)$ and $\epsilon$.
%		\item $\Pi$ admits an \FPTAS parameterized by $\edh(\cdot)$ and $\epsilon$.
%		\item $\Pi$ admits an \FPTAS parameterized by $\twh(\cdot)$ and $\epsilon$.
%	\end{enumerate}
%	
%	%Then, there is an algorithm that runs in $g(\ell, \epsilon) \cdot n^{\Oh(1)}$-time algorithm that returns a $(1+\epsilon)$-approximation to $\Pi$ on instance $(G, k)$, where $\ell = \edh(G)$ (resp.\ $\twh(G)$).
%\end{theorem}
\monotoneFIIFptas*
\begin{proof}
	To prove that the three items are equivalent, we first note that for any graph $G$ and any well-behaved family $\cH$, it holds that $\mdh(G) \ge \edh(G) \ge \twh(G)$. Therefore, $3 \implies 2 \implies 1$. Thus, we focus on proving $1 \implies 3$. Let \pmd denote the assumed \FPTAS for $\Pi$ parameterized by $\mdh$ and $\epsilon$. That is, for any graph $G$ on $n$ vertices, $\pmd$ runs in time $f(\mdh(G), \epsilon) \cdot \polyn$, and returns a subset $S \subseteq V(G)$ such that (i) $G - S \in \cF$, and (ii) $|S| \le (1+\epsilon) \cdot \OPT_\Pi(G)$. \footnote{Note that \pmd may in fact be a family of approximation algorithms, containing an algorithm for each value of $\mdh(\cdot)$ and $\epsilon$. We slightly abuse the notation and assume that we use an appropriate algorithm from this family, based on the value of $\mdh$ of the relevant graph, and $\epsilon$.}. 
	
	Let $G = (V, E)$ be a graph, and let $(T, \chi, L)$ be the given $\cH$-tree decomposition of $G$ of width at most $\twh(G)$, and let $\ell \coloneqq \twh(G)$ (resp. $\ell \coloneqq \edh(G)$). Recall that $\mathcal{P}_\md$ returns a $(1+\epsilon)$-approximation to $\Pi$ on any instance $(G, k)$ of $\Pi$.
	
	For every leaf node $t \in V(T)$, let $V_t \coloneqq \chi(t)$, and let $H_t = V_t \cap L$ and $R_t = V_t \setminus L$. The properties of $\cH$-tree decomposition imply that $G[H_t] \in \cH$, and $|R_t| \le \ell$. Equivalently, $R_t \subseteq V(G) \setminus L$ is a modulator to $\cH$ for the graph $G[\chi(t)]$.
	Note that the base vertices, i.e., the vertices in $L$, are partitioned across the leaf bags, which implies that the graphs $\{G[H_t]\}_{\text{leaf\ } t \in V(T)}$ are disjoint.
	
	We iterate over every leaf nodes $t \in V(T)$, and proceed as follows. Note that $G[H_t] \in \cH$, thus $G[H_t]$ has a modulator of size zero to the family $\cH$. Therefore, we can use $\mathcal{P}_\md$ to obtain a set $S_t \subseteq H_t$ that is a $(1+\epsilon/2)$-approximation for $\Pi$ on the graph $G[H_t]$, i.e., $|S_t| \le (1+\epsilon/2) \cdot \OPT(G[H_t])$. This takes $f(0, \epsilon/2) \cdot |H_t|^{\Oh(1)} \le h(\epsilon) \cdot n^{\Oh(1)}$ time for some function $h(\cdot)$. In other words, $\mathcal{P}_\md$ is an \EPTAS for $\Pi$ on $G[H_t]$. 
	
	For every leaf node $t \in V(T)$, if $|R_t| \le \epsilon/3 \cdot |S_t|$, then we say that the node is \emph{good}. Otherwise, we say that $t \in V(T)$ is \emph{bad}. Furthermore, all non-leaf nodes are classified as \emph{bad}. Note that for a good node $t \in V(T)$, $|R_t| \le (1+\epsilon/2) \cdot (\epsilon/3) \cdot \OPT(G[H_t]) \le (\epsilon/2) \cdot \OPT(G[H_t]) $. On the other hand, for a bad node $t \in V(T)$, $\ell \ge |R_t| \ge \epsilon/3 \cdot |S_t| \ge \epsilon/3 \cdot \OPT(G[H_t])$. Let $U_g$ and $U_b$ denote the set of good and bad nodes, respectively.  Let $S_1 = \bigcup_{t \in U_g} S_t$, and $S_2 = \bigcup_{t \in U_g} R_t$. 
	
	Furthermore, let $V_g = \bigcup_{t \in U_g} \chi(t)$, and $V_b = V \setminus V_g$.  Now, let $S_b \subseteq V_b$ denote an $(1+\epsilon)$-approximate solution to $\Pi$ on the graph $G[V_b]$. We first state the following lemma, whose proof is given after the current proof.
	
	\begin{restatable}{lemma}{deltofapprox} \label{lem:fii-approx}
		$|S| \le (1+\epsilon) \cdot \OPT(G)$, where $S = S_1 \cup S_2 \cup S_b$ as defined above. 
	\end{restatable}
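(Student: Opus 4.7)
The plan is to decompose $|S|$ according to its three constituent parts, bound each in terms of the optimum on a suitable induced subgraph of $G$, and then use the hereditary property of $\cF$ (via Observation~\ref{obs:monotone}) to stitch these bounds together.

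First I would verify that $S_1, S_2, S_b$ are pairwise disjoint: $S_1 \subseteq L$ since each $S_t \subseteq H_t \subseteq L$, while $S_2 \subseteq V(G) \setminus L$ by definition of $R_t$; both lie in $V_g$, whereas $S_b \subseteq V_b = V \setminus V_g$. Thus $|S| = \sum_{t \in U_g}\bigl(|S_t| + |R_t|\bigr) + |S_b|$. For each good leaf $t$, combining the approximation guarantee $|S_t| \le (1+\epsilon/2)\,\OPT(G[H_t])$ from \pmd with the good-classification bound $|R_t| \le (\epsilon/3)\,|S_t|$ gives
\[
|S_t| + |R_t| \;\le\; \bigl(1+\tfrac{\epsilon}{3}\bigr)\bigl(1+\tfrac{\epsilon}{2}\bigr)\OPT(G[H_t]) \;\le\; (1+\epsilon)\,\OPT(G[H_t])
\]
for $\epsilon \in (0,1]$. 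Together with $|S_b| \le (1+\epsilon)\OPT(G[V_b])$, summing yields
\[
|S| \;\le\; (1+\epsilon)\Bigl(\sum_{t \in U_g}\OPT(G[H_t]) + \OPT(G[V_b])\Bigr).
\]

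Next I would establish the ``OPT-packing'' inequality $\sum_{t \in U_g}\OPT(G[H_t]) + \OPT(G[V_b]) \le \OPT(G)$. Fix any optimal solution $O$ for $\Pi$ on $G$. Since $\cF$ is hereditary, the first inequality of Observation~\ref{obs:monotone} applied to each $W \in \{H_t\}_{t \in U_g} \cup \{V_b\}$ shows that $O \cap W$ is feasible for $\Pi$ on $G[W]$, so $\OPT(G[W]) \le |O \cap W|$. The sets $\{H_t\}_{t \in U_g}$ are pairwise disjoint (the base vertices partition across leaves by Definition~\ref{def:h-treewidth}) and each is disjoint from $V_b$ (since $H_t \subseteq V_g$). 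Summing the pointwise bounds therefore gives $\sum_{t \in U_g}|O \cap H_t| + |O \cap V_b| \le |O| = \OPT(G)$, which chains with the previous display to finish the proof.

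The main subtle point is the OPT-packing step: it hinges crucially on the disjointness of the base-component vertex sets $\{H_t\}_{t \in U_g}$ together with $V_b$, which in turn rests on the partition-of-$L$ property in Definition~\ref{def:h-treewidth} and the complementary definition $V_b = V \setminus V_g$. While the lemma statement concerns only the size bound, implicit in treating $S$ as an approximate solution is the feasibility claim $G - S \in \cF$; this would follow from the observation that removing $S_2 = \bigcup_{t \in U_g} R_t$ isolates each good base component $G[H_t]$ from the rest of the graph, using that base vertices are confined to their unique-leaf bag, together with closure of $\cF$ under disjoint union.
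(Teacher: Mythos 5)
Your proposal is correct and follows essentially the same decomposition and bounds as the paper: split $|S|$ into the three disjoint parts, bound $|S_t|+|R_t|$ via the approximation guarantee and the good-node threshold, bound $|S_b|$ directly, and then use hereditariness of $\cF$ and disjointness of $\{H_t\}_{t\in U_g}\cup\{V_b\}$ to compare with $\OPT(G)$. The one cosmetic difference is in the last step: the paper routes through $\OPT(G-S_2)\le \OPT(G)$ (Observation~\ref{obs:monotone}) and the observation that $G-S_2$ is a disjoint union of the pieces, whereas you restrict an optimal solution $O$ for $G$ directly to each piece and sum; these are the same argument, and your phrasing actually makes explicit the only direction of the ``disjoint-union additivity'' of $\OPT$ that is actually needed.
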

	
	Let $F = G[V_b]$.  From the $\cH$-tree decomposition $(T, \chi, L)$ of $G$ of width $\ell$, it is possible to obtain an $\cH$-tree decomposition of $F$ of width at most $\ell$, by deleting the vertices of $V_g$ from every bag. For simplicity, we will continue to use $(T, \chi, L)$ for the new ``projected'' $\cH$-tree decomposition of $F$. As suggested by \Cref{lem:fii-approx}, we need to find a set $S_b$, which is a $(1+\epsilon)$-approximation to $\Pi$ on $F \coloneqq G[V_b]$. In fact, we will show how to find an optimal solution. To this end, we prove the following technical lemma.
	
	\begin{restatable}{lemma}{LemmaFII} \label{lem:fii-replacement}
		Let $\cH, \cF$ be well-behaved families of graphs, such that $\Pi = \deltof$ is CMSO-definable, and has \FII. Furthermore, suppose there exists an \FPTAS $\pmd$ for $\Pi$, parameterized by $\mdh(\cdot)$ and $\epsilon$. Then, for every positive integer $\ell$, there exists an algorithm $\cA_\ell$, that takes input $(G, T, \chi, L)$, where
		\begin{enumerate}
			\item $G$ is a graph on at most $n$ vertices, and
			\item A $\cH$-tree decomposition $(T, \chi, L)$ of $G$ of width $\ell$, such that:
			\\\qquad $\mathbf{\star}$ for every leaf $t \in V(T)$, the graph $H_t \coloneqq G[\chi(t) \cap L]$ satisfies $\OPT_\Pi(H_t) \le 3\ell/\epsilon$. 
		\end{enumerate}
		The algorithm $\cA_\ell$ runs in time $g(\ell, \epsilon) \cdot \polyn$, where $g$ is some function, and returns an optimal solution $S \subseteq V(G)$ to $\Pi$ on $G$, i.e., $|S| = \OPT_\Pi(G)$, and $G - S \in \cF$.
		%Then, there is an algorithm that, given an $n$-vertex graph $G$, and an integer $k$, decides whether $(G, k) \in \Pi$ in time $f(\twh(G)) \cdot n^{\Oh(1)}$. That is, $\Pi$ is FPT parameterized by $\twh(G)$. 
	\end{restatable}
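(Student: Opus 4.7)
The plan is to treat each leaf bag as a ``protrusion'' with boundary of size $\le \ell+1$, use \FII to replace it by a canonical small representative, and then invoke Courcelle's theorem on the reduced graph of bounded treewidth. Concretely, for each leaf $t \in V(T)$ let $P_t := G[\chi(t)]$, viewed as a boundaried graph with boundary $R_t := \chi(t) \setminus L$, so that $|R_t| \le \ell + 1$ and $R_t$ is a modulator of $P_t$ to $\cH$. By \Cref{obs:monotone} and the hypothesis, $\OPT_\Pi(P_t) \le |R_t| + \OPT_\Pi(H_t) \le (\ell + 1) + 3\ell/\epsilon$. Crucially, every edge of $G$ incident to a vertex of $H_t$ must be contained in some bag, and since $H_t$-vertices appear in the unique leaf $t$, no edges leave $H_t$ beyond $R_t$; thus $G$ is the boundary-identified gluing of $P_t$ with $G \setminus H_t$ along $R_t$.

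By \FII, for each boundary size $s \le \ell + 1$ there is a finite family $\mathscr{R}_s = \{R_{s,1}^*, \ldots, R_{s,k_s}^*\}$ of canonical representatives, with $k_s, |V(R_{s,i}^*)| \le C(\ell)$ depending only on $\Pi$ and $\ell$, such that every $s$-boundaried graph $P$ is equivalent to a unique $R_{s,i}^*$ up to an integer shift: $\OPT_\Pi(P \oplus B) = \OPT_\Pi(R_{s,i}^* \oplus B) + c_P$ for all $s$-boundaried $B$. The key idea is to use \pmd as an \emph{exact} oracle. Take $\mathscr{R}_{|R_t|}$ itself as a set of test attachments; for each $j$, the glued graph $P_t \oplus R_{|R_t|,j}^*$ has an $\cH$-modulator of size at most $\ell + 1 + C(\ell)$, and its optimum is a non-negative integer bounded by $N := (\ell + 1) + 3\ell/\epsilon + C(\ell)$. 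Calling \pmd with approximation parameter $\epsilon' := 1/(2N + 2)$ returns a value $v$ with $\OPT \le v \le (1 + \epsilon')\OPT < \OPT + 1$; since $v$ and $\OPT$ are integers, $v = \OPT$. Each call runs in time $f(\ell + 1 + C(\ell), \epsilon') \cdot \polyn$, i.e., $g_1(\ell, \epsilon) \cdot \polyn$. Comparing the resulting vector $(\OPT_\Pi(P_t \oplus R_{|R_t|,j}^*))_{j}$ against the precomputed vectors for each representative (a constant-time lookup, since $\mathscr{R}_s$ has bounded size depending only on $\ell$ and $\Pi$) uniquely identifies the representative $R_{i(t)}^*$ and the shift $c_{P_t}$ as the common additive offset.

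Having located the representatives, form $G'$ by replacing, for each leaf $t$, the protrusion $P_t$ with $R_{i(t)}^*$, identifying boundaries along $R_t$. Then $G'$ admits a standard tree decomposition of width at most $\ell + C(\ell) = \Oh_\ell(1)$, obtained by augmenting each former leaf bag with the $\le C(\ell)$ new base vertices of $R_{i(t)}^* \setminus R_t$; internal bags of the $\cH$-tree decomposition satisfy $|\chi(t) \setminus L| \le \ell + 1$ already and are unchanged. Since $\Pi$ is CMSO-definable, Courcelle's theorem yields an optimal solution on $G'$ in time $h(\ell) \cdot |V(G')|$ for a computable $h$, and telescoping the FII identity over all leaf replacements gives $\OPT_\Pi(G) = \OPT_\Pi(G') + \sum_t c_{P_t}$. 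To return an actual vertex subset $S \subseteq V(G)$ realizing $\OPT_\Pi(G)$ (rather than just its value), appeal to \Cref{lem:self-reducibility}: treat the value-computation procedure above as a decision oracle and reconstruct $S$ by at most $|V(G)|$ iterated deletions. Deleting a single vertex $v$ preserves both the $\cH$-tree decomposition structure (remove $v$ from all bags; hereditariness of $\cH$ preserves base components) and the hypothesis $\OPT_\Pi(H_t) \le 3\ell/\epsilon$ (monotonicity), so the lemma's input conditions persist at every recursive call.

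The main obstacle is the protrusion-class identification in the second paragraph. The standard FII-based replacement machinery computes the equivalence class of a protrusion via a dynamic program on the protrusion itself, which fundamentally requires bounded treewidth. Here the protrusion $P_t$ contains the base component $G[H_t] \in \cH$, which can have arbitrarily large treewidth, so a direct DP is unavailable. The assumption $\OPT_\Pi(H_t) \le 3\ell/\epsilon$ is precisely what lets us sidestep this obstacle: it forces every relevant $\OPT_\Pi(P_t \oplus R_{s,j}^*)$ to be an integer of magnitude $\Oh(\ell/\epsilon)$, so that \pmd, called with inverse-linear-in-$N$ approximation error, acts as an exact oracle. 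This is the substitute for the missing DP and is the conceptual crux of the lemma.
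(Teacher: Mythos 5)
Your proposal is correct and matches the paper's proof in all essentials: both treat each leaf bag as a boundaried protrusion with boundary $R_t$ of size at most $\ell+1$, exploit the hypothesis $\OPT_\Pi(H_t)\le 3\ell/\epsilon$ to turn $\pmd$ into an exact decision oracle via a sufficiently small $\epsilon'$, use the FII test-attachment machinery (the paper packages this as a citation to a replacement lemma; you re-derive it inline) to swap each protrusion for a constant-size representative with a translation constant, then apply Courcelle on the resulting bounded-treewidth graph and recover the vertex set by self-reducibility. Your explicit check that deleting a vertex preserves both the $\cH$-tree-decomposition structure and the $\star$ bound is a welcome detail the paper leaves implicit.
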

	The idea of the proof of \Cref{lem:fii-replacement} is similar to that of Theorem 6.1 from \cite{AgrawalKLPRSZ22Elimination}. However, they require that $\pmd$ be an \emph{exact algorithm} for $\Pi$. Our main contribution is to observe that, when the input graph satisfies the property $\star$, an \emph{approximation algorithm} can be made to behave like an exact algorithm by a suitable choice of $\epsilon$. However, a formal proof requires several technical definitions pertaining to \FII, and thus, we defer the formal proof to \Cref{subsec:proof} wherein we first give all the necessary background.  
	
	Now we finish the proof of \Cref{thm:monotone-fii-fptas}. Note that the graph $G[V_b]$ and the $\cH$-tree decomposition $(T, \chi, L)$ restricted to $G[V_b]$ satisfies the requirements of \Cref{lem:fii-replacement} by construction. Thus, using the algorithm from \Cref{lem:fii-replacement}, we can, in fact, get an optimal solution $S_b \subseteq V_b$ to $\Pi$ on $G[V_b]$. This concludes the proof of \Cref{thm:monotone-fii-fptas}, modulo the proof of \Cref{lem:fii-approx}.
	
\end{proof}

\begin{proof}[Proof of \Cref{lem:fii-approx}]
	Recall the definitions of the partial solutions $S_1, S_2, S_b$ as defined in the proof of \Cref{thm:monotone-fii-fptas}. From \Cref{obs:monotone}, we have the following inequality.
	\begin{equation}
		\OPT(G - S_2) \le \OPT(G) \label{eqn:s2-bound}
	\end{equation}
	
	Note that the properties of $\cH$-tree decomposition imply that, if we delete $S_2$ from $G$, the graph $G - S_2$ consists of disjoint induced subgraphs $H_t$ for $t \in U_g$, as well as $G[V_b]$ (note that each of these induced subgraphs may or may not be connected). 
%	 \todo{Reviewer said this line is unclear. I am not sure why.}. Now, consider
	\begin{align*}
		|S| &= |S_1| + |S_2| + |S_b|  \tag{Since $S = S_1 \uplus S_2 \uplus S_b$.}
		\\&\le \sum_{t \in U_g} (|S_t| + |R_t|) + |S_b| \tag{Since $S_1 = \bigcup_{t \in U_g} S_t$ and $S_2 = \bigcup_{t \in U_g} R_t$}
		\\&\le \lr{\sum_{t \in U_g} (1+\epsilon/2) \cdot \OPT(G[H_t])  + \epsilon/2 \cdot \OPT(G[H_t])} + (1+\epsilon) \cdot \OPT(G[V_b]) \tag{Using bounds on $S_t, R_t$ and $S_b$ respectively}
		\\&\le (1+\epsilon) \cdot \lr{ \lr{\sum_{t \in U_g} \OPT(G[H_t])} + \OPT(G[V_b])}
		\\&\le (1+\epsilon) \cdot \OPT(G) \tag{since $\displaystyle V(G)\setminus S_2 = V_b \uplus \biguplus_{t \in U_g} H_t$, and (\ref{eqn:s2-bound})}
	\end{align*}
\end{proof}

%Using Corollary~\ref{cor:planarRdelmod} and Theorem~\ref{thm:monotone-fii-fptas}, we get the following  result. 
%
%\begin{corollary}
%There exists an \FPTAS for \textup{\textsc{Planar $\mathcal{R}$-Deletion}}, that runs in time $f(\ell, \epsilon) \cdot \polyn$, where $\ell = \twh(G)$ (resp. $\ell = \edh(G)$), and $\cH$ is a family of $H$-minor free graphs, for some fixed graph $H$. In particular, this implies an \FPTAS for \textup{\textsc{Treewidth-$\eta$-Modulator}}.
%\end{corollary}\todo{do we need}
%\todo[inline]{Corollaries}

\subsection{Variations of the Equivalence Theorem} \label{subsec:variations}
Theorem~\ref{thm:monotone-fii-fptas} gives  non-uniform \FPTASes. In the following, we obtain a simpler, and \emph{explicit} version of \Cref{thm:monotone-fii-fptas} for \deltof when $\cF$ satisfies the following property, called \emph{treewidth-$\eta$-modulated}: Suppose there exists a constant $\eta$, such that the (standard) treewidth of every graph in $\cF$ is bounded by $\eta$. In this case, we say that the family $\cF$ is \emph{treewidth-$\eta$-modulated}, or simply, \emph{$\eta$-modulated}. Note that many natural \deltof problems are $\eta$-modulated. For example, \textsc{Vertex Cover} (resp.\ \textsc{Feedback Vertex Set}) is $\eta$-modulated for $\eta = 0$ (resp. $\eta = 1$). More generally, it is also known if the family $\cF$ is characterized by a family of excluded minors $\mathcal{O}$ containing at least one planar graph, then \deltof is known as \textsc{Planar $\mathcal{O}$-Deletion}, and is also $\eta$-modulated for some $\eta$ that depends on $\mathcal{O}$.

\begin{restatable}{theorem}{etamodulated} \label{thm:eta-modulated}
	Let $\cH, \cF$ be well-behaved families of graphs, and let $\Pi = \deltof$ such that: (i) $\cF$ is $\eta$-modulated for some constant $\eta \ge 0$, and (ii) there exists an exact \FPT algorithm for $\Pi$ parameterized by $\tw(\cdot)$. Then, the following statements are equivalent.
	\begin{enumerate}
		\item $\Pi$ admits an \FPTAS parameterized by $\mdh(\cdot)$ and $\epsilon$.
		\item $\Pi$ admits an \FPTAS parameterized by $\edh(\cdot)$ and $\epsilon$.
		\item $\Pi$ admits an \FPTAS parameterized by $\twh(\cdot)$ and $\epsilon$.
	\end{enumerate}
\end{restatable}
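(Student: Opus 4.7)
The plan is to mirror the architecture of the proof of \Cref{thm:monotone-fii-fptas} and replace only the step that relies on \FII and CMSO-definability; in its place, the $\eta$-modulated assumption together with the assumed exact \FPT-by-treewidth algorithm will directly produce an optimal solution on the ``bad'' part of the graph. This will yield a \emph{uniform} \FPTAS since every step is algorithmic and explicit in $\ell, \epsilon, \eta$, in contrast to the non-uniform guarantee of \Cref{thm:monotone-fii-fptas}. As before, $\mdh(G) \ge \edh(G) \ge \twh(G)$ gives $3 \Rightarrow 2 \Rightarrow 1$, so the only work is $1 \Rightarrow 3$: lift an \FPTAS $\pmd$ parameterized by $\mdh$ to one parameterized by $\twh$.

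Given $G$ with an $\cH$-tree decomposition $(T, \chi, L)$ of width $\ell = \twh(G)$, I would reuse the Bucket-vs-Ocean classification from \Cref{thm:monotone-fii-fptas} verbatim. For every leaf $t$, set $H_t = \chi(t) \cap L$ and $R_t = \chi(t) \setminus L$, invoke $\pmd$ with modulator parameter $0$ on $G[H_t]$ to obtain a $(1+\epsilon/2)$-approximate $S_t$, and declare $t$ \emph{good} if $|R_t| \le (\epsilon/3)\,|S_t|$ and \emph{bad} otherwise (every internal node is also declared bad). The accounting in \Cref{lem:fii-approx} then applies unchanged: once an \emph{optimal} solution $S_b$ to $\Pi$ on the bad subgraph $F \coloneqq G[V_b]$ is in hand, where $V_b$ is the union of bags of bad nodes, the set $S_b \cup \bigcup_{\text{good } t}(S_t \cup R_t)$ is a $(1+\epsilon)$-approximation for $G$.

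The new ingredient is how to produce $S_b$. Instead of invoking the \FII-based replacement of \Cref{lem:fii-replacement}, I would exploit that $F$ has bounded \emph{standard} treewidth. For every bad leaf $t$, being bad and $|R_t| \le \ell$ force $|S_t| \le 3\ell/\epsilon$, and since $G[H_t] - S_t \in \cF$ the $\eta$-modulated hypothesis gives $\tw(G[H_t] - S_t) \le \eta$. Using a bounded-treewidth algorithm (e.g., Bodlaender's), I would compute a width-$\eta$ tree decomposition of $G[H_t] - S_t$ in \FPT time, adjoin $S_t$ to every bag to obtain one for $G[H_t]$ of width at most $\eta + 3\ell/\epsilon$, and splice these per-leaf decompositions into the $\cH$-tree decomposition restricted to $V_b$, inserting $R_t$ into every bag of the leaf's internal decomposition so the gluing along the parent bag is legal. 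This yields an explicit standard tree decomposition of $F$ of width $w = \Oh(\ell/\epsilon + \eta)$, to which I would apply the assumed exact \FPT-by-treewidth algorithm to obtain $S_b$; the overall running time is $h(\ell, \epsilon, \eta) \cdot \polyn$ for an explicit computable $h$.

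The step I expect to need the most care is this splicing: I have to maintain the connected-subtree property for every $v \in R_t$ after inserting the leaf's internal tree decomposition, and propagating $R_t$ into all of its bags is what makes the gluing well-defined, but verifying this cleanly (and in particular that the width bound $w$ does not blow up beyond $\Oh(\ell/\epsilon + \eta)$ in the combined decomposition) is the main bookkeeping obstacle. Everything else reuses the proof of \Cref{thm:monotone-fii-fptas} essentially verbatim.
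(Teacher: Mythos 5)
Your proposal is correct and follows essentially the same approach as the paper: reuse the good/bad classification and the accounting of \Cref{lem:fii-approx}, then replace the \FII-based replacement step by observing that $\eta$-modulation plus $|S_t|\le 3\ell/\epsilon$ forces $\tw(G[H_t])\le \eta + 3\ell/\epsilon$, so $G[V_b]$ has standard treewidth $\Oh(\ell/\epsilon+\eta)$ and the exact \FPT-by-treewidth algorithm finishes the job. The only difference is that you spell out the splicing of the per-leaf width-$\eta$ decompositions into the $\cH$-tree decomposition (propagating $R_t$ through the leaf's internal bags), a bookkeeping step the paper states without elaboration.
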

\begin{proof}
	We proceed exactly as in the proof of \Cref{thm:monotone-fii-fptas}, and classify each node $t \in V(T)$ in the $\cH$-tree decomposition $(T, \chi, L)$ of $G$ into \emph{good} and \emph{bad}, respectively. We use the same notation as in the previous proof.
	
	For every good node $t \in V(T)$, we add an $(1+\epsilon/2)$-approximate solution $S_t$ as well as $R_t$ in the solution. Now we discuss how to find an optimal solution $S_b$ for the graph $G[V_b]$. For each bad node $t \in V(T)$, note that $|R_t| > \epsilon/3 \cdot |S_t|$, which implies that $|S_t| \le 3\ell/\epsilon$. Since $S_t$ is a solution to $\Pi$ on $G[H_t]$, $G[H_t] - S_t \in \cF$, which implies that $\tw(G[H_t] - S_t) \le \eta$. Therefore, $\tw(G[H_t]) \le \eta+|S_t| \le 3\ell/\epsilon + \eta$. This implies that $\tw(G[V_b]) \le \ell+ 3\ell/\epsilon + \eta$. Furthermore, the given $\cH$-tree decomposition $(T, \chi, L)$ of $G[V_b]$ can be transformed into the (standard) tree decomposition of width at most $\ell + 3\ell/\epsilon + \eta$. Then, by using the \FPT algorithm for $\Pi$ parameterized by $\tw(G[V_b])$, the theorem follows.
\end{proof}

Note that \Cref{thm:eta-modulated} requires that $\Pi$ admit an \FPTAS parameterized by $\mdh(G)$. In particular, when $\mdh(G) = 0 \iff G \in \cH$, $\Pi$ should have an \EPTAS. However, in some cases, we only know an $\alpha$-approximation for $\Pi$ even when $G \in \cH$, where $\alpha \ge 1$. In such cases, the following theorem may be applicable, if $\Pi$ is also known to be \FPT parameterized by the size of the solution.

\begin{theorem} \label{thm:constant-approx-equiv}
	Let $\cH, \cF$ be well-behaved families of graphs, such that $\Pi = \deltof$ satisfies the following properties: (i) $\Pi$ is CMSO-definable, (ii) has \FII, and (iii) $\Pi$ is \FPT parameterized by the size of the solution. Also, suppose that for some constant $\alpha \ge 1$, $\Pi$ admits an $\alpha$-approximation in time $f(\mdh(\cdot)) \cdot \polyn$. Then, 
	\begin{enumerate}
		\item $\Pi$ admits an $(\alpha+\epsilon)$-approximation in time $g(\edh(\cdot), \epsilon) \cdot \polyn$, and
		\item $\Pi$ admits an $(\alpha+\epsilon)$-approximation in time $h(\twh(\cdot), \epsilon) \cdot \polyn$.
	\end{enumerate}
\end{theorem}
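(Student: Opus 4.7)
The plan is to adapt the proof of \Cref{thm:monotone-fii-fptas}, recalibrating the ``Bucket vs Ocean'' threshold to the factor $\alpha$ and driving the \FII-based replacement subroutine by the assumed \FPT algorithm parameterized by the solution size rather than by an \FPTAS parameterized by $\mdh$. Since $\twh(G) \le \edh(G) \le \mdh(G)$, it suffices to prove item 2. I would start from an $\cH$-tree decomposition $(T, \chi, L)$ of $G$ of width $\ell \coloneqq \twh(G)$; for every leaf $t$, set $H_t \coloneqq \chi(t) \cap L$ and $R_t \coloneqq \chi(t) \setminus L$, so that $G[H_t] \in \cH$ and $|R_t| \le \ell+1$. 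On each $G[H_t]$ I would invoke the hypothesised $\alpha$-approximation (applicable since $\mdh(G[H_t]) = 0$) to produce $S_t$ with $|S_t| \le \alpha \cdot \OPT_\Pi(G[H_t])$, in time $f(0) \cdot \polyn$.

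Next I would classify a leaf $t$ as \emph{good} if $|R_t| \le (\epsilon/\alpha)|S_t|$ and as \emph{bad} otherwise, declaring every internal node bad. For a good leaf, $|S_t \cup R_t| \le (1+\epsilon/\alpha)|S_t| \le (\alpha+\epsilon)\,\OPT_\Pi(G[H_t])$, so I would commit $S_t \cup R_t$ to the output ``for free''; set $S_1 \coloneqq \bigcup_{t\text{ good}} S_t$ and $S_2 \coloneqq \bigcup_{t\text{ good}} R_t$. For a bad leaf, the classification inequality forces $|S_t| < \alpha\ell/\epsilon$, and since $S_t$ is feasible, $\OPT_\Pi(G[H_t]) \le |S_t| < \alpha\ell/\epsilon$, giving a hard cap on the optimum inside every bad base component.

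On the residual graph $F \coloneqq G[V_b]$, where $V_b$ is the union of bags of bad nodes, I would run the \FII-based graph replacement of \Cref{lem:fii-replacement} bottom-up, substituting each boundaried subgraph $(G[\chi(t)], R_t)$ of a bad leaf by a canonical $\sim_\ell$-representative of size bounded by a function of $\ell$ only, while tracking the additive offset it contributes to $\OPT_\Pi$. The crucial deviation from \Cref{lem:fii-replacement} is that the \FII-signature computation can no longer invoke an \FPTAS parameterized by $\mdh$; instead I would use the assumed \FPT algorithm parameterized by solution size, which stays \FPT in $\ell$ and $\epsilon$ precisely because $\OPT_\Pi$ on every bad base component, and on each bounded-size annotated test instance needed to probe the signature, is bounded by a function of $\ell$ and $\epsilon$. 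After all replacements, the resulting graph has standard treewidth bounded by a function of $\ell$ and $\epsilon$; by CMSO-definability of $\Pi$ and Courcelle's theorem I would solve it optimally and lift the solution back through the \FII\ offsets to obtain $S_b \subseteq V_b$ with $|S_b| = \OPT_\Pi(F)$.

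The output $S \coloneqq S_1 \cup S_2 \cup S_b$ would be analysed exactly as in the proof of \Cref{lem:fii-approx}, but with $\alpha+\epsilon$ in place of $1+\epsilon$: using \Cref{obs:monotone} applied to the partition $V(G) \setminus S_2 = V_b \uplus \biguplus_{t\text{ good}} H_t$ yields $|S| \le (\alpha+\epsilon) \cdot \OPT_\Pi(G)$. The main anticipated obstacle is the signature computation inside the replacement procedure: ensuring that the \FPT-in-solution-size hypothesis suffices to answer the boundaried/annotated queries required by the \FII\ framework. The hard cap $\OPT_\Pi(G[H_t]) \le \alpha\ell/\epsilon$ on bad components is essential here, since it keeps all such auxiliary calls (including those on combined instances $G[\chi(t)] \oplus H$ for bounded-size test graphs $H$) within the \FPT regime in $\ell$ and $\epsilon$, exactly as the corresponding step of \Cref{lem:fii-replacement} is handled.
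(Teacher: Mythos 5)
Your proposal is correct and follows essentially the same route as the paper's proof sketch: recalibrate the good/bad threshold by $\alpha$, run the $\alpha$-approximation on each base component, bound $\OPT_\Pi$ on bad base components by $O(\alpha\ell/\epsilon)$, drive the \FII replacement with the \FPT-in-solution-size algorithm instead of $\pmd$, and finish via Courcelle's theorem, with the final accounting tracking $\alpha+\epsilon$ in place of $1+\epsilon$. The only differences are cosmetic (your threshold is $\epsilon/\alpha$ versus the paper's $\epsilon/(2\alpha)$, and you aim directly for an exact $S_b$ where the paper first phrases it as a $(1+\epsilon)$-approximation before noting the replacement actually yields an exact solution).
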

\begin{proof}[Proof Sketch]
	The proof of the theorem follows essentially follows the same structure as that of \Cref{thm:monotone-fii-fptas}, line-by-line. We only highlight the important differences. First, we use the $\alpha$-approximation on each $G[H_t]$ to obtain a set $S_t \subseteq H_t$ such that $|S_t| \le \alpha \cdot \OPT(G[H_t])$, for each node $t \in V(T)$ in the $\cH$-tree decomposition $(T, \chi, L)$. Then, a node is said to be \emph{good} if $|S_t| \le (\epsilon/2\alpha) \cdot |S_t| \le \epsilon/2 \cdot \OPT(G[H_t])$. Otherwise, if $|S_t| > \epsilon/(2\alpha) \cdot \OPT(G[H_t])$, then $t$ is bad. 
	
	We follow the same notation as in the proof of \Cref{thm:monotone-fii-fptas}, and let $S_1 = \bigcup_{t \in U_g} S_t, S_2= \bigcup_{t \in U_g} R_t$, and $S_b$ be an $(1+\epsilon)$-approximation to $\Pi$ on $G[V_b]$. Then, we obtain an analogous version of \Cref{lem:fii-approx} implies that $|S| \le (\alpha +\epsilon) \cdot \OPT(G)$, where $S = S_1 \cup S_2 \cup S_b$. Thus, it suffices to obtain an $(1+\epsilon)$-approximation to $\Pi$ on $S_b$.
	
	Now, we can use the FPT algorithm for $\Pi$ parameterized by the solution size to prove an analogous version of \Cref{lem:fii-replacement} (this, in fact, almost exactly matches Theorem 6.1 from \cite{AgrawalKLPRSZ22Elimination}). Here, we use the fact that, when we perform a replacement (i.e., application of \Cref{lem:red2finiteindex}) using the \FII property of $\Pi$, the size of the solution is upper bounded by $\OPT(G[H_t]) + c_{\mdh(G)}$, which is upper bounded by a function of $\twh(G)$ and $\epsilon$, where $t$ is a \emph{bad node}. Thus, in time $h'(\twh, \epsilon) \cdot \polyn$, we obtain an \emph{exact solution} $S_b$ to $\Pi$ on $G[V_b]$. Thus, overall, we obtain an $(\alpha+\epsilon)$-approximation to $\Pi$ on $G$. 
\end{proof}

Plugging in \Cref{cor:oct} into \Cref{thm:constant-approx-equiv}, we obtain the following corollary.

\begin{corollary}
	There exists an algorithm that runs in time $f(\ell,\epsilon) \cdot \polyn$, where $\ell$ is $\edh(G)$ (or $\twh(G)$) to planar graphs, and computes a $(\frac{9}{4} + \epsilon)$-approximation for \textup{\textsc{Odd Cycle Transversal}}.
\end{corollary}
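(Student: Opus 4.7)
The plan is to derive the corollary by invoking Theorem~\ref{thm:constant-approx-equiv} applied to \textsc{Odd Cycle Transversal} (OCT), which is \deltof with $\cF$ being the class of bipartite graphs. I would take $\cH$ to be the family of planar graphs. Bipartite graphs form a well-behaved family (they are hereditary and closed under disjoint union), so the general setup of the theorem is legitimate.

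Next I would verify the four hypotheses of Theorem~\ref{thm:constant-approx-equiv} in turn. First, OCT is CMSO-definable: bipartiteness can be expressed in MSO as the existence of a partition of $V(G)$ into two independent sets (or equivalently, as the absence of an odd cycle). Second, OCT has \FII, which is a standard fact in the literature (the relevant boundaried-graph equivalence is captured by constant-size information about 2-colorings across the boundary). Third, OCT is \FPT parameterized by the solution size via the $3^k \cdot (m+n)^{O(1)}$ algorithm of Reed, Smith and Vetta. Fourth, Corollary~\ref{cor:oct} supplies the required constant-factor approximation parameterized by $\mdh(\cdot)$ when $\cH$ is the planar family.

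The one mild subtlety is that Corollary~\ref{cor:oct} delivers a $(\tfrac{9}{4}+\epsilon')$-approximation rather than a literal constant-factor approximation. To reconcile this with the statement of Theorem~\ref{thm:constant-approx-equiv}, I would instantiate Corollary~\ref{cor:oct} with its internal parameter $\epsilon' \coloneqq \epsilon/2$, obtaining a $(\tfrac{9}{4}+\epsilon/2)$-approximation in time $2^{O(p/\epsilon)} \cdot \polyn$, and then treat $\alpha \coloneqq \tfrac{9}{4}+\epsilon/2$ as the ``constant'' fed into Theorem~\ref{thm:constant-approx-equiv}. An inspection of the proof of Theorem~\ref{thm:constant-approx-equiv} confirms that this is harmless: the only places where $\alpha$ enters are (i) the good/bad threshold $\epsilon/(2\alpha)$ used to classify nodes of the $\cH$-tree decomposition, and (ii) the resulting bound on $\OPT(G[H_t])$ at bad nodes (hence the size of the \FII replacement gadgets), both of which remain bounded by functions of $\ell$ and $\epsilon$ whenever $\alpha$ is bounded by a function of $\epsilon$.

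Applying Theorem~\ref{thm:constant-approx-equiv} with this $\alpha$ and approximation slack $\epsilon/2$ then produces an $(\alpha + \epsilon/2) = (\tfrac{9}{4}+\epsilon)$-approximation in time $g(\ell,\epsilon) \cdot \polyn$, where $\ell$ is either $\edh(G)$ or $\twh(G)$ with respect to planar graphs, which is exactly the statement of the corollary. The main potential obstacle is the verification that OCT has \FII together with the $\alpha$-vs-$(\tfrac{9}{4}+\epsilon)$ discrepancy above; both are handled cleanly as described, so the whole argument amounts to a careful bookkeeping step on top of the two cited results.
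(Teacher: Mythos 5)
Your proposal is correct and follows exactly the route the paper intends: the paper's entire ``proof'' is the single sentence ``Plugging in Corollary~\ref{cor:oct} into Theorem~\ref{thm:constant-approx-equiv}, we obtain the following corollary.'' You verify the hypotheses (well-behavedness of bipartite and planar graphs, CMSO-definability, \FII, \FPT{} by solution size, and the $\mdh$-parameterized approximation) that the paper leaves implicit, and you also flag and carefully resolve a genuine subtlety that the paper glosses over: Theorem~\ref{thm:constant-approx-equiv} is stated for a constant $\alpha$ with an $\mdh$-only running time, whereas Corollary~\ref{cor:oct} gives a $(\tfrac{9}{4}+\epsilon')$-factor with running time depending on both $p$ and $\epsilon'$. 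Your fix --- set $\epsilon' = \epsilon/2$, treat $\alpha = \tfrac{9}{4}+\epsilon/2$ as the constant, and check that the only places $\alpha$ enters the proof of Theorem~\ref{thm:constant-approx-equiv} (the good/bad threshold and the resulting bound on $\OPT(G[H_t])$ at bad nodes) remain bounded by functions of $\ell$ and $\epsilon$ --- is exactly the right bookkeeping, and it is a worthwhile clarification of what the paper asserts without comment.
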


%\begin{corollary}
%\todo[inline]{Corollary for OCT on planar graphs}

%\input{mainthorem}

%!TEX root = main.tex
\newcommand{\size}{\mathsf{size}}
\newcommand{\cc}{\texttt{cc}}
\section{Packing Problems} \label{sec:packing}

In this section, we consider packing problems. First, we give preliminaries on packing problems in \Cref{subsec:packing-prelim}. Then, in \Cref{subsec:packing-fptas}, we design \FPTASes for these packing problems, parameterized by $\mdh(\cdot)$. Finally, in \Cref{subsec:packing-equiv}, we prove the equivalence theorems for these packing problems, which extends these \FPTASes, parameterized by $\edh(\cdot)$, and $\twh(\cdot)$.

\subsection{Preliminaries on Packing Problems} \label{subsec:packing-prelim}
We consider the following three packing problems in this section.
\begin{tcolorbox}[colback=white!5!white,colframe=gray!75!black]
	\indset
	\\\textbf{Input:} An instance $(G, k)$, where $G = (V, E)$ is a graph, and $k$ is a positive integer.
	\\\textbf{Question:} Does there exist a subset $S \subseteq V(G)$ of size at least $k$  that is independent in $G$?
\end{tcolorbox}

\begin{tcolorbox}[colback=white!5!white,colframe=gray!75!black]
	\fpacking ($\cF$ is a fixed finite family of graphs)
	\\\textbf{Input:} An instance $(G, k)$, where $G = (V, E)$ is a graph, and $k$ is a positive integer.
	\\\textbf{Question:} Do there exist $k$ pairwise vertex-disjoint subgraphs $H_1, \ldots, H_k$ of $G$, such that each $H_i$ contains a graph from $\cF$ as a minor?
\end{tcolorbox}

\begin{tcolorbox}[colback=white!5!white,colframe=gray!75!black]
	\spacking ($\cS$ is a fixed finite family of graphs)
	\\\textbf{Input:} An instance $(G, k)$, where $G = (V, E)$ is a graph, and $k$ is a positive integer.
	\\\textbf{Question:} Do there exist $k$ pairwise vertex-disjoint subgraphs $H_1, \ldots, H_k$ of $G$, such that each $H_i$ is isomorphic to some graph in $\cS$?
\end{tcolorbox}
Throughout the section, we will assume the families $\cF$ and $\cS$ in \fpacking and \spacking, are fixed finite families of connected graphs. First, we note that \indset is self-reducible. This can be inferred from, e.g., the self-reduicibility of \vc (\Cref{lem:self-reducibility}). Now we prove the same for \fpacking and \spacking, under the mild assumption that the graphs in $\cF$ and $\cS$ are connected and non-empty. For this, we need some definitions.

\begin{definition}[Minor Model] \label{def:minor-model}
	$H$ is a minor of $G$ iff there exists a function $\varphi: V(G) \to 2^{V(G)} \setminus \{\emptyset\}$ such that for every $u \in V(H)$, (i) $G[\varphi(u)]$ is connected, and (ii) for every $uv \in E(H)$, there exist $u' \in \varphi(u)$ and $v' \in \varphi(v)$, such that $u'v' \in E(G)$. Such a function $\varphi$ is called a \emph{minor model} of $H$ in $G$.
\end{definition}

\begin{definition}[Graph Isomorphism] \label{def:isomorphism}
	The graphs $H$ and $G$ are said to be \emph{isomorphic} if there exists a bijection $\varphi: V(H) \to V(G)$ such that for any $u, v \in V(H)$, $uv \in E(H)$ iff $\varphi(u) \varphi(v) \in E(G)$.
\end{definition}

\begin{definition}[Solution to \fpacking]
	Let $\cF$ be a fixed finite family of connected graphs. Let $(G, k)$ be a yes-instance of \fpacking. Then, a solution to $(G, k)$ is given by $\sigma = \LR{(H_1, F_1, \varphi_1), (H_2, F_2, \varphi_2), \ldots, (H_k, F_k, \varphi_k)}$, with the following properties. 
	\begin{itemize}
		\item $H_1, \ldots, H_k$ are pairwise vertex-disjoint subgraphs of $G$.
		\item $F_1, \ldots, F_k$ are (not necessarily distinct) graphs from $\cF$.
		\item For each $i \in [k]$, $\varphi_i$ is a minor model of $F_i$ in $H_i$.
	\end{itemize} 
	We refer that each tuple $(H_i, F_i, \varphi_i)$ as a \emph{minor model tuple}. The \emph{size} of the solution $\sigma$, is defined as the number of minor model tuples in the solution, e.g., $\size(\sigma) \coloneqq k$ for $\sigma$ above. For a graph $G$, let $\OPT_{\Pi}(G)$ denote the maximum size of a solution to $\Pi \coloneqq \fpacking$. An $\alpha$-approximation algorithm returns a solution of size at least $\alpha \cdot \OPT(G)$ on any input graph $G$. 
	These notions are analogously defined for \spacking, where we require $\varphi_i$'s to be isomorphisms instead of minor models.
\end{definition}

\begin{lemma} \label{lem:self-reducibility-packing}
	For any fixed finite family of connected graphs $\cF$ (resp. $\cS$), \fpacking (resp.\ \spacking) is self-reducible. That is, given an algorithm for the decision version, there exists an algorithm that produces a solution to an instance $(G, k)$, or concludes that $(G, k)$ is a no-instance.
\end{lemma}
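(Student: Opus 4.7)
The plan is to give a uniform pruning-based reduction that works for both \fpacking and \spacking, using the decision oracle $\cA$ to strip away ``unnecessary'' vertices and edges until every connected component of the remaining graph is exactly $V(H_i)$ for one subgraph $H_i$ in some valid packing. I would first call $\cA$ on $(G, k)$ and return no if it returns no. Otherwise, proceed in two stages. \emph{Stage 1 (vertex criticality):} iterate over $v \in V(G)$ and call $\cA$ on $(G - v, k)$; whenever the answer is yes, delete $v$ from the current graph, since any packing in $G - v$ is trivially a packing in $G$. Let $G^*$ be the resulting graph. By construction, $(G^* - v, k)$ is no for every $v \in V(G^*)$, so every solution to $(G^*, k)$ must place $v$ in some $H_i$, and in particular the $H_i$'s in any such solution partition $V(G^*)$.

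\emph{Stage 2 (edge criticality):} starting from $G^*$, iteratively delete edges $e$ such that $(G' - e, k)$ is still a yes-instance for the current graph $G'$, and let $G^{**}$ denote the final graph. At termination, $(G^{**} - e, k)$ is no for every $e \in E(G^{**})$, so every edge of $G^{**}$ must appear in $E(H_i)$ for some $i$ in every solution to $(G^{**}, k)$. Note that vertices are never deleted in this stage, so the vertex-partition property of Stage 1 persists in $G^{**}$. The key claim is that the connected components of $G^{**}$ are exactly $V(H_1), \ldots, V(H_k)$ for every valid packing of $(G^{**}, k)$. To see this, fix any such packing $\{(H_i, F_i, \varphi_i)\}_{i=1}^{k}$; since every graph in $\cF$ (resp.\ $\cS$) is connected, we may take each $H_i$ to be connected (for \fpacking, let $H_i$'s vertex set be the union of the branch sets of $\varphi_i$ and its edge set consist of a spanning tree within each branch set together with one inter-branch edge per edge of $F_i$). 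Vertex-disjointness of the $H_j$'s together with ``every edge of $G^{**}$ lies in some $E(H_j)$'' forces every edge of $G^{**}$ to have both endpoints within a single $V(H_j)$; combined with connectedness of each $H_i$, this shows that the connected components of $G^{**}$ are precisely the sets $V(H_1), \ldots, V(H_k)$.

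To recover an explicit solution, for each connected component $K$ of $G^{**}$ I determine the corresponding witness. For \spacking, test whether $G^{**}[K]$ is isomorphic to some $S \in \cS$ by brute-force enumeration in $|K|^{O(c)}$ time, where $c = \max_{S \in \cS} |V(S)|$ is a constant. For \fpacking, apply the Robertson--Seymour polynomial-time algorithm for fixed-minor containment to $G^{**}[K]$ and each $F \in \cF$ to obtain a minor model $\varphi$. The entire algorithm makes $O(n + m)$ calls to $\cA$ with polynomial overhead, so the self-reduction runs in polynomial time. The main technical point is the key claim of Stage 2, which relies crucially on the assumption that every graph in $\cF$ and $\cS$ is connected: without this an $H_i$ could split into several connected components of $G^{**}$, and the components would only form a \emph{refinement} of the desired partition, which would not suffice to extract the individual packing members directly.
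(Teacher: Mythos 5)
Your proposal is correct and its first two stages (vertex-criticality and edge-criticality pruning) coincide exactly with the paper's. The two places you diverge are the proof of the component-structure claim and the final witness extraction. For the component-structure claim, your argument is cleaner and more explicit than the paper's: you state upfront that WLOG each $H_i$ may be taken connected (giving a concrete construction from the branch sets), and then ``every edge of $G^{**}$ lies inside a single $V(H_i)$'' plus connectedness immediately pins down the components as the $V(H_i)$'s. The paper instead proves the component count equals $k$ by a counting/pigeonhole argument, and leaves the WLOG connectedness of each $G'_i$ implicit; your version makes that dependency on the connectedness hypothesis of $\cF$ (resp.\ $\cS$) fully visible, which is valuable. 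For witness extraction, the paper continues to self-reduce through the packing oracle $\cA$: on each component it contracts edges $uv$ while $\cA$ reports $(G'/uv, k)$ is a yes-instance, and argues the terminal graph is isomorphic to an $F_i$. You instead step outside the oracle and invoke a Robertson--Seymour fixed-minor test on each component (for \fpacking) and a constant-time brute-force isomorphism check (for \spacking). Both fit the paper's definition of self-reducibility, since polynomial external overhead is allowed. The paper's route is self-contained --- it uses nothing beyond $\cA$ --- whereas yours relies on the heavyweight Robertson--Seymour machinery, and the phrase ``apply \dots to obtain a minor model'' quietly hides that the constructive version of minor testing itself requires a self-reduction layer on top of the decision algorithm. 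Neither point is an error; your proposal is a valid alternative route with a slightly sharper structural argument but a less economical extraction step.
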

\begin{proof}
	We prove the theorem for \fpacking and then discuss the case of \spacking, which is only simpler. Let $\cA$ be an algorithm for the decision version of \fpacking, which we denote by $\Pi$ for brevity. 
	
	Consider an input $(G, k)$. First, we run $\cA$ on $(G, k)$, and if it outputs that it is a no-instance, then we report the same. Otherwise, suppose that $(G, k)$ is a yes-instance. Then, there exists a solution $\LR{(H_i, F_i, \varphi_i)}_{i \in [k]}$. Let $U \coloneqq \bigcup_{i \in [k]} V(H_i)$, and let $F = \bigcup_{i \in [k]} E(H_i)$. We prove the following claim, which follows quite easily from the definitions of a minor model and of a solution.
	\begin{observation} \label{cl:vertex-edge-deletion}
		Suppose that for some $u \in V(G)$, $(G - u, k)$ is a yes-instance. Then, any solution for $(G-u, k)$ for $\Pi$ is a solution for $(G, k)$. Similarly, suppose that for some $uv \in V(G)$, $(G - uv, k)$ is a yes-instance. Then, any solution for $(G-uv, k)$ for $\Pi$ is a solution for $(G, k)$. 
	\end{observation}
	Let $G'$ be a copy of $G$. We first iterate over each vertex $u \in V(G')$, and check whether $(G - u, k)$ is a yes-instance using $\cA$. If it is a yes-instance, then we redefine $G' \gets G' - u$, and proceed with the next iteration; otherwise we leave the graph unchanged. Next, we do the same for each edge $uv \in E(G')$, i.e., check whether $(G' - uv, k)$ is a yes-instance using $\cA$, and delete the edge from $G'$ is that is the case. Note that we make at most $|V(G)| + |E(G)|$ calls to $\cA$. If $G'$ denotes the graph at the end of this procedure, then \Cref{cl:vertex-edge-deletion} implies that $(G', k)$ is a yes-instance of \fpacking.

	Now, consider a solution $\sigma = \{(G'_i, F_i, \varphi_i)\}_{i \in [k]}$ of $\Pi$ on $G'$. First, we have the following claim.

	\begin{claim}
		For every $(G'_i, F_i, \varphi_i) \in \sigma$, there exists a unique connected component $C$ of $G'$ such that $V(G'_i) = C$.
	\end{claim}
	\begin{proof}
		First we show that the number of connected components in $G'$ is exactly $k = |\sigma|$. Suppose $|\cc(G')| > k$, then there exists a non-empty connected component $C \in \cc(G)$ such that $C \cap V(G'_i) = \emptyset$ for all $i \in [k]$. Then, for any $v \in C$, $\sigma$ remains a solution for $(G - v, k)$, which contradicts the assumption on $G'$. Now, suppose $|\cc(G')| < k$. Then, by pigeonhole principle, there exists a connected component $C$ such that $C \cap V(G'_i) \neq \emptyset$ and $C \cap V(G'_j) \neq \emptyset$ for two distinct $G'_i, G'_j$. Thus, there exists a path between vertices $u$ and $v$ in $(G'[C])$ such that $u \in V(G'_i)$ and $v \in V(G'_j)$. Now, there are two cases. If there exists an edge $u'v' \in E(G'[C])$ such that $u' \in V(G'_i)$ and $v' \in V(G'_j)$, then we observe that $\sigma$ is a solution for $(G - uv, k)$, which is a contradiction. Otherwise, there is an edge $u'v' \in E(G'[C])$ such that $u' \in V(G'_i)$, but $v' \not\in \bigcup_{j' \in [k]} V(G'_{j'})$. This is again a contradiction since $\sigma$ is a solution $(G-v', k)$. Thus, the number of connected components of $G'$ is exactly $k'$, and each connected component contains exactly the vertices from $V(G'_i)$ for some $i \in [k]$. 
	\end{proof}
	Thus, consider a connected subgraph $G'_i$ of $G'$ corresponding to a minor model tuple $(G'_i, F_i, \varphi_i)$, and an edge $uv \in E(G'_i)$. If $\varphi(u) = \varphi(v)$, then $(G'\backslash uv, k)$ is a yes-instance. Similarly, if for some $u'v' \in E(G'_i)$, $(G' \backslash uv, k)$ is a yes-instance, then the solution for $(G' \backslash uv, k)$ can be translated into a solution for $(G', k)$. Therefore, we iterate over the edges of $G'_i$, and use $\cA$ to decide the instance $(G' \backslash uv, k)$. If it returns yes, we contract the edge and proceed to the next iteration. We stop when none of the edges can be contracted. It follows that the resulting graph is isomorphic to some $F_i \in \cF$. By iterating over all connected components, we obtain a solution of size $k$, which can be translated into a solution for $G$.
\end{proof}

%!TEX root = main.tex

\subsection{FPT-ASes for Packing Problems Parameterized by $\mdh$} \label{subsec:packing-fptas}

Here, we show that a natural variant of the bucket versus ocean also yields \FPTAS{}es for packing problems such as \textsc{Independent Set} and \textsc{Cycle Packing}.  Let $\cH$ be an apex-minor closed family. We are given a graph $G = (V, E)$, and a set $M \subseteq V(G)$ of size at most $p$ such that $G' = G\setminus M \in \cH$, Let $H = V(G) \setminus M$.

\medskip\noindent\textbf{\textsf{Independent Set.}} By iterating over all subsets of $M$, we guess the intersection of an optimal independent set $I$ with $M$. Consider the iteration corresponding to $Y = I \cap M$. Then, let $R = N(Y) \cap H$. We compute a $(1-\epsilon)$-approximation $I'$ to the maximum independent set in the graph $G' \setminus R$, using an $2^{\Oh(1/\epsilon)} \cdot \polyn$ time algorithm from Fomin et al.~\cite{FominLS18Grid}. It follows that $|Y| + |I'| \ge |Y| + (1-\epsilon) \cdot |I \setminus Y| \ge (1-\epsilon) \cdot |I| = \OPT(G)$. Thus, we get the following result.

\begin{theorem} \label{thm:indset-fptas-modh}
	Let $\cH$ be an apex-minor free graph family. Then, there exists an \FPTAS for \indset parameterized by $\mdh(\cdot)$ that runs in time $2^{\Oh(\mdh(G) + 1/\epsilon)} \cdot \polyn$.
\end{theorem}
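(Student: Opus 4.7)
The plan is a straightforward ``guess on the modulator, approximate on the rest'' scheme, closely mirroring the Bucket vs Ocean idea for \vc but exploiting a cleaner decoupling that packing problems enjoy. First I would take the given modulator $M \subseteq V(G)$ of size $p = \mdh(G)$, so that $G' := G - M \in \cH$ and $H := V(G) \setminus M$. Then I would iterate over all $2^p$ subsets $Y \subseteq M$: if $Y$ is not independent in $G[M]$ I discard it; otherwise I set $R := N(Y) \cap H$ and invoke the $2^{\Oh(1/\epsilon)} \cdot \polyn$-time \EPTAS of Fomin et al.~\cite{FominLS18Grid} for \indset on the apex-minor-free graph $G' - R$ with error parameter $\epsilon$, obtaining $I' \subseteq H \setminus R$ with $|I'| \ge (1-\epsilon)\,\OPT(G' - R)$. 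The final output is the largest $Y \cup I'$ produced across all iterations.

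For correctness, fix any optimum independent set $I^*$ of $G$ and set $Y^* := I^* \cap M$. Since $I^*$ is independent, $Y^*$ is independent in $G[M]$, so the iteration with $Y = Y^*$ is not discarded; moreover, no vertex of $I^* \setminus Y^*$ is adjacent to $Y^*$, hence $I^* \setminus Y^* \subseteq H \setminus R$ and this set is independent in $G' - R$. The \EPTAS therefore returns $I'$ with $|I'| \ge (1-\epsilon)\,|I^* \setminus Y^*|$, and because $I' \cap N(Y^*) = \emptyset$ while both $Y^*$ and $I'$ are independent, the union $Y^* \cup I'$ is an independent set of $G$ satisfying
\[
|Y^*| + |I'| \;\ge\; |Y^*| + (1-\epsilon)\,|I^* \setminus Y^*| \;\ge\; (1-\epsilon)\,|I^*| \;=\; (1-\epsilon)\,\OPT(G).
\]
Hence at least one iteration produces a $(1-\epsilon)$-approximation, which the final output matches or beats.

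For the running time, there are $2^p$ iterations, each spending polynomial time on independence and neighborhood computations plus $2^{\Oh(1/\epsilon)}\cdot\polyn$ on the \EPTAS call, giving a total of $2^{\Oh(p + 1/\epsilon)} \cdot \polyn$ as desired. I do not anticipate any real obstacle: unlike \vc, where the Bucket vs Ocean dichotomy was needed to absorb the additive contribution of $M$ in an approximate solution, \indset (and packing problems more generally) admits an \emph{exact} decoupling once the modulator trace is guessed, which is precisely what makes the brute-force $2^p$ enumeration tight here. The only step requiring mild care is the observation that restricting the \EPTAS to $G' - R$ rather than $G'$ is both necessary (to guarantee $Y \cup I'$ remains independent) and sufficient (since any independent set of $G$ extending $Y^*$ must already avoid $R = N(Y^*) \cap H$).
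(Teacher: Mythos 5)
Your proposal is correct and matches the paper's proof essentially verbatim: both guess the trace $Y = I^* \cap M$ over all $2^p$ subsets, delete $R = N(Y) \cap H$, run the Fomin et al.\ \EPTAS on the remaining apex-minor-free graph, and return $Y \cup I'$, with the same approximation bound $|Y|+(1-\epsilon)|I^*\setminus Y| \ge (1-\epsilon)\OPT(G)$. Your added observation that non-independent guesses $Y$ should be discarded (or the output independence rechecked) is a harmless tightening of a detail the paper leaves implicit.
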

%\medskip
\noindent\textbf{\textsf{Cycle Packing.}} We compute an $(1-\epsilon/2)$-approximate solution $S$ to cycle packing on the graph $G'$ in time $f(\epsilon) \cdot \polyn$. Note that \textsc{Cycle Packing} is a special case of \fpacking, where $\cF$ contains a single graph $C_3$ (i.e., a cycle on $3$ vertices).  We begin by stating the following two inequalities.
\begin{align}
	\label{eqn:cyclepacking1} \OPT(G)   \ge  \OPT(G') \ge |S| \ge (1-\frac{\epsilon}{2}) \cdot \OPT(G') \\
%\end{equation}
%\begin{equation}
	\label{eqn:cyclepacking2}\OPT(G') + |M|   \ge  \OPT(G) \ge \OPT(G') 
\end{align}
This follows from the fact that any solution to \textsc{Cycle Packing} on $G'$ is also a solution to $G$. On the other hand, $G' = G \setminus X$. Now consider a set $S^*$ of disjoint cycles in $G$. Note that at most $|M|$ cycles are incident to a vertex of $X$. Therefore, it is possible to obtain a set $\tilde{S}^*$ of disjoint cycles in $G'$ of size at least $|S^*| - |M|$. Now, consider two cases.

\textbf{Case 1.} If $|M| \le \frac{\epsilon}{2} \cdot |S|$. We have the following lemma.
\begin{claim}
	If $|M| \le \frac{\epsilon}{2} \cdot |S|$, then $|S| \ge (1-\epsilon) \OPT(G)$.
\end{claim}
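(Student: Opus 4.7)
The goal is to convert the two chains of inequalities \eqref{eqn:cyclepacking1} and \eqref{eqn:cyclepacking2}, together with the case hypothesis $|M| \le (\epsilon/2)|S|$, into the multiplicative bound $|S| \ge (1-\epsilon)\OPT(G)$. My approach is to bound the additive gap $\OPT(G) - |S|$ by $\epsilon \cdot \OPT(G)$; this rearranges to the desired conclusion. Working with the additive gap (rather than a multiplicative ratio) keeps all manipulations linear and avoids comparing expressions of the form $1/(1-\epsilon/2)$ with $1/(1-\epsilon)$.

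The first step is to use $\OPT(G) \le \OPT(G') + |M|$ from \eqref{eqn:cyclepacking2} to split the gap as
\[
\OPT(G) - |S| \;\le\; \bigl(\OPT(G') - |S|\bigr) + |M|,
\]
and then bound each of the two summands on the right by $(\epsilon/2)\,\OPT(G)$. The first summand is handled by converting the multiplicative guarantee $|S| \ge (1-\epsilon/2)\OPT(G')$ into the additive statement $\OPT(G') - |S| \le (\epsilon/2)\,\OPT(G')$, and then invoking $\OPT(G') \le \OPT(G)$ from \eqref{eqn:cyclepacking1}. The second summand is handled by the case hypothesis combined with the trivial bound $|S| \le \OPT(G')\le \OPT(G)$, which together give $|M| \le (\epsilon/2)|S| \le (\epsilon/2)\OPT(G)$.

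Summing the two bounds yields $\OPT(G) - |S| \le \epsilon \cdot \OPT(G)$, which rearranges to $|S| \ge (1-\epsilon)\OPT(G)$. I do not anticipate any genuine obstacle: once the additive-gap reformulation is in place, the argument is essentially a two-line telescoping that only invokes \eqref{eqn:cyclepacking1}, \eqref{eqn:cyclepacking2}, and the case hypothesis, each exactly once.
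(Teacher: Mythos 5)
Your proof is correct, and it is essentially the same argument as the paper's: both combine the $(1-\epsilon/2)$-approximation guarantee, $\OPT(G) \le \OPT(G') + |M|$, the case hypothesis $|M| \le (\epsilon/2)|S|$, and $|S| \le \OPT(G') \le \OPT(G)$, each used once, via elementary linear manipulations. The paper writes the chain multiplicatively starting from $|S| \ge (1-\epsilon/2)\OPT(G')$ and then expands, while you reorganize the same steps around the additive gap $\OPT(G)-|S|$; this is a presentational difference rather than a genuinely different route.
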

\begin{proof}
	Now, consider:
	\begin{align*}
		|S| \ge (1- \frac{\epsilon}{2}) \OPT(G') &\ge (1-\frac{\epsilon}{2}) \cdot \lr{\OPT(G) - |M|} \tag{From (\ref{eqn:cyclepacking2})}
		\\&\ge (1-\frac{\epsilon}{2}) \cdot \OPT(G) - (1-\frac{\epsilon}{2}) \cdot \frac{\epsilon}{2} |S| \tag{Since $|M| \le \frac{\epsilon}{2}|S|$}
		\\&\ge (1-\frac{\epsilon}{2}) \cdot \OPT(G) - \frac{\epsilon}{2} \cdot \OPT(G) \tag{From (\ref{eqn:cyclepacking1})}
		\\&= (1-\epsilon) \cdot \OPT(G)
	\end{align*}
\end{proof}

\textbf{Case 2.} $|M| \ge \frac{\epsilon}{2}|S|$. Then, 
\begin{lemma}
	If $|M| \ge \frac{\epsilon}{2} |S|$, then $\OPT(G) \le \lr{1 + \frac{4}{\epsilon}} p$.
\end{lemma}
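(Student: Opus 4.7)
The plan is to chain three elementary estimates: first bound $|S|$ in terms of $p$ using the case hypothesis, then lift this to a bound on $\OPT(G')$ via the approximation guarantee, and finally lift to $\OPT(G)$ via inequality (\ref{eqn:cyclepacking2}).

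First, combining the case hypothesis $|M| \ge \frac{\epsilon}{2} |S|$ with the trivial bound $|M| \le p$ immediately gives $|S| \le \frac{2p}{\epsilon}$. Next, since $S$ is a $(1-\epsilon/2)$-approximate solution to \textsc{Cycle Packing} on $G'$, we have $|S| \ge (1-\epsilon/2) \cdot \OPT(G')$, which rearranges to
\[
\OPT(G') \le \frac{|S|}{1-\epsilon/2}.
\]
Assuming without loss of generality that $\epsilon \le 1$ (otherwise the claimed bound is trivial since $(1 + 4/\epsilon)p \ge \OPT(G)$ holds for smaller constants by monotonicity), we have $1 - \epsilon/2 \ge 1/2$, so $\OPT(G') \le 2|S| \le \frac{4p}{\epsilon}$.

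Finally, applying inequality (\ref{eqn:cyclepacking2}) and using $|M| \le p$,
\[
\OPT(G) \le \OPT(G') + |M| \le \frac{4p}{\epsilon} + p = \lr{1 + \frac{4}{\epsilon}} p,
\]
which is the desired inequality. The main (minor) obstacle is simply being careful with the factor $\frac{1}{1-\epsilon/2}$ coming from the approximation ratio; handling this by restricting to $\epsilon \le 1$ (which is standard and loses nothing in approximation schemes) keeps the arithmetic clean and yields the stated constant.
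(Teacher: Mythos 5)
Your proof is correct and uses essentially the same chain of estimates as the paper: combine the case hypothesis with the $(1-\epsilon/2)$-approximation guarantee (\ref{eqn:cyclepacking1}) to bound $\OPT(G')$ by $\frac{4}{\epsilon}|M|$, then apply (\ref{eqn:cyclepacking2}). The only cosmetic difference is that the paper chains the inequalities starting from $|M|$ (writing $|M|\ge \frac{\epsilon}{2}(1-\frac{\epsilon}{2})\OPT(G')\ge \frac{\epsilon}{4}\OPT(G')$), while you isolate $|S|\le 2p/\epsilon$ first; both hinge on the same simplification $1-\epsilon/2\ge 1/2$, which you make explicit by stating $\epsilon\le 1$, whereas the paper leaves this assumption implicit.
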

\begin{proof}
	Note that $|M| \ge \frac{\epsilon}{2} \cdot |S| \ge (1-\frac{\epsilon}{2}) \cdot \frac{\epsilon}{2} \cdot \OPT(G') \ge \frac{\epsilon}{4} \OPT(G')$, where the second-last inequality follows from (\ref{eqn:cyclepacking1}). In other words, $\OPT(G') \le \frac{4}{\epsilon} |M|$. Now, consider
	\begin{align*}
		\OPT(G) \le \OPT(G') + |M| \le \lr{1 + \frac{4}{\epsilon}} |M| = \lr{1 + \frac{4}{\epsilon}} p.
	\end{align*}
\end{proof}
In this case, we can use an algorithm for \textsc{Cycle Packing} parameterized by the solution size that runs in time $2^{\Oh(\frac{k \log^2 k}{\log\log k})} \cdot \polyn$ time~\cite{LokshtanovMSZ19}. Therefore, we obtain the following result.

\begin{corollary}
	Let $\cH$ be a family of apex-minor of graphs. Then, there exists an \FPTAS for \textup{\textsc{Cycle Packing}} that runs in time $2^{\tilde{\Oh}({\mdh(G)/\epsilon})} \cdot \polyn$.
\end{corollary}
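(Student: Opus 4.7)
The plan is to package together the two supporting claims that immediately precede the corollary into a single \FPTAS. First, I would invoke the assumed $(1-\epsilon/2)$-approximation for \textsc{Cycle Packing} on apex-minor-free graphs, which is available as an \EPTAS via Fomin et al.~\cite{FominLS18Grid} applied to \fpacking with $\cF = \{C_3\}$, to compute a maximal disjoint cycle collection $S$ on the modulated graph $G' = G - M$ in time $f(\epsilon) \cdot \polyn$. The algorithm then branches on whether $|M| \le (\epsilon/2)|S|$ or $|M| > (\epsilon/2)|S|$, exactly matching the dichotomy already set up in the subsection.

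In the ``ocean'' branch $|M| \le (\epsilon/2)|S|$, the first claim preceding the corollary already certifies that $|S| \ge (1-\epsilon) \cdot \OPT(G)$, so I would simply return $S$. In the ``bucket'' branch $|M| > (\epsilon/2)|S|$, the subsequent lemma gives $\OPT(G) \le (1 + 4/\epsilon)\,p$ with $p = \mdh(G)$, bounding the optimum by a function of $p$ and $\epsilon$. I would then invoke the \FPT algorithm of Lokshtanov, Mukherjee, Saurabh, Zehavi~\cite{LokshtanovMSZ19} for \textsc{Cycle Packing} parameterized by the solution size $k$, which runs in $2^{\Oh(k \log^2 k / \log\log k)} \cdot \polyn = 2^{\tilde{\Oh}(k)} \cdot \polyn$ time, with budget $k = \lfloor (1+4/\epsilon)p \rfloor$. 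This produces an \emph{exact} optimum, which is trivially a $(1-\epsilon)$-approximation. Correctness is immediate from the two preceding statements together with the fact that the two branches cover all cases.

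For the running time, the \EPTAS call costs $f(\epsilon) \cdot \polyn$ and is subsumed by the \FPT branch, whose cost is $2^{\tilde{\Oh}((1+4/\epsilon) p)} \cdot \polyn = 2^{\tilde{\Oh}(p/\epsilon)} \cdot \polyn$, matching the claimed $2^{\tilde{\Oh}(\mdh(G)/\epsilon)} \cdot \polyn$. I do not anticipate any genuine obstacle, since the proof is essentially assembly of the two earlier claims into a single algorithm; the only things I would want to be careful about are (i) invoking self-reducibility (\Cref{lem:self-reducibility-packing}) to convert a decision-style \FPT routine into one that actually outputs the packing, and (ii) noting that apex-minor-free implies $H$-minor-free so that the grid-theoretic \EPTAS of~\cite{FominLS18Grid} for \fpacking does in fact apply on $G'$.
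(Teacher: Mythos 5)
Your proposal is correct and takes essentially the same approach as the paper: both proofs compute the $(1-\epsilon/2)$-approximate packing $S$ on $G-M$, branch on the comparison between $|M|$ and $(\epsilon/2)|S|$ using the two preceding claims (the ``ocean'' claim that $|S| \ge (1-\epsilon)\OPT(G)$ and the ``bucket'' lemma that $\OPT(G) \le (1+4/\epsilon)p$), and in the latter branch invoke the $2^{\tilde{\Oh}(k)}\polyn$-time algorithm of Lokshtanov et al. with $k = \Oh(p/\epsilon)$. Your additional care about self-reducibility for producing an explicit packing and about apex-minor-free being a subclass of $H$-minor-free is consistent with the paper's framing and raises no issues.
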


\subsection{Equivalence Theorems for Packing Problems} \label{subsec:packing-equiv}
First we prove an equivalence theorem for \indset (\Cref{thm:indset-fptas}), and then proceed to proving a similar theorem for \fpacking and \spacking (\Cref{thm:fpacking-fptas}), whose proof is technically more involved.

\subsubsection{Independent Set}

\begin{theorem} \label{thm:indset-fptas}
	Let $\cH$ be a well-behaved family of graphs, and let $\Pi = \indset$. Then, the following statements are equivalent.
	\begin{enumerate}
		\item $\Pi$ admits an \FPTAS parameterized by $\mdh(\cdot)$ and $\epsilon$.
		\item $\Pi$ admits an \FPTAS parameterized by $\edh(\cdot)$ and $\epsilon$.
		\item $\Pi$ admits an \FPTAS parameterized by $\twh(\cdot)$ and $\epsilon$.
	\end{enumerate}
\end{theorem}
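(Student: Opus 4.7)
The implications $3 \Rightarrow 2 \Rightarrow 1$ are immediate from the general inequalities $\twh(G) \le \edh(G) \le \mdh(G)$ that hold for every well-behaved $\cH$. For the non-trivial direction $1 \Rightarrow 3$, the plan is to mirror the proof of \Cref{thm:monotone-fii-fptas} in the maximization setting, exploiting that \indset is CMSO-definable, has \FII, and is \FPT parameterized by standard treewidth. Let $\pmd$ denote the assumed \FPTAS parameterized by $\mdh$, and let $(T, \chi, L)$ be an $\cH$-tree decomposition of $G$ of width $\ell := \twh(G)$. For each leaf $t$, set $H_t := \chi(t) \cap L$ and $R_t := \chi(t) \setminus L$; note $G[H_t] \in \cH$ and $|R_t| \le \ell+1$. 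I would first run $\pmd$ on each $G[H_t]$ with empty modulator and accuracy $\epsilon/2$ to obtain an independent set $I_t \subseteq H_t$ of size at least $(1-\epsilon/2)\OPT(G[H_t])$, and then use the bucket-vs-ocean dichotomy to call $t$ \emph{good} when $|R_t| \le (\epsilon/2)|I_t|$ and \emph{bad} otherwise; in the latter case $\OPT(G[H_t]) = O(\ell/\epsilon)$.

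The first key step is to commit to the partial solution $S_1 := \bigcup_{t \text{ good}} I_t$ and to forbid the set $D := \bigcup_{t \text{ good}}(N_G(I_t) \cap R_t)$. The separator property of tree decompositions guarantees that the only neighbours of $H_t$ outside $H_t$ lie in $R_t$, so $D$ is exactly the set of vertices outside $V_g := \bigcup_{t \text{ good}} H_t$ that any extension of $S_1$ to an independent set of $G$ must avoid. The second key step is to process each bad leaf via an \FII-based graph replacement, in the spirit of \Cref{lem:fii-replacement}: the bounded optimum $O(\ell/\epsilon)$ of each bad base component lets $\pmd$, when invoked with a sufficiently small accuracy parameter, serve as an exact oracle via integer rounding, and \FII then yields an equivalent boundaried replacement of size bounded by some function $g(\ell, \epsilon)$. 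After deleting $V_g \cup D$ and performing these replacements, the resulting graph $G'$ has standard treewidth bounded by a function of $\ell$ and $\epsilon$, so Courcelle's theorem applied to the CMSO description of \indset produces an exact maximum independent set $S_b$ of $G'$, which lifts back through the \FII replacements to an independent set of the unreplaced residual graph of the same size. The output is $S := S_1 \cup S_b$, which is independent in $G$ by construction.

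For the approximation guarantee, I would compare $S$ against a witness obtained from an arbitrary optimum $I^*$ of $G$, namely $\tilde I := (I^* \setminus (V_g \cup D)) \cup S_1$. The same separator argument shows $\tilde I$ is independent in $G$, and a short calculation plugging in $|I_t| \ge (1-\epsilon/2)|I^* \cap H_t|$, $|R_t| \le (\epsilon/2)|I_t|$ for good leaves, together with $\sum_t |I^* \cap H_t| \le |I^*|$ and $\sum_t |I_t| \le \OPT(G) = |I^*|$, gives $|\tilde I| \ge (1-\epsilon)|I^*|$, which is the analogue of \Cref{lem:fii-approx}. Because $\tilde I \setminus S_1$ is an independent set of the residual graph, its size is at most $|S_b|$, whence $|S| = |S_1| + |S_b| \ge |\tilde I| \ge (1-\epsilon)\OPT(G)$, as required.

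The step I expect to be the main obstacle is the \FII-based replacement on bad leaves. Two subtleties arise: first, the \FII machinery must be instantiated for \indset rather than for \deltof, so the analogue of \Cref{lem:fii-replacement} has to be re-stated and re-proved for the maximization objective, with the additive shift incurred by each replacement tracked carefully; second, the effective boundary of a bad leaf can shrink because parts of its $R_t$ may lie inside $D$ (note that $V_g \subseteq L$ does not touch $R_t$ at all), so the replacement must be performed on the restricted boundaried graph $(G[H_t \cup (R_t \setminus D)], R_t \setminus D)$ and the reconstruction of $S_b$ must account for this. Once these points are handled, the whole procedure runs in \FPT time in $\twh(G)$ and $\epsilon$, establishing the desired \FPTAS and completing the equivalence.
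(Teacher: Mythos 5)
Your proof is essentially correct and follows the same overall strategy as the paper: a bucket-vs-ocean dichotomy on the leaves, committing to approximate solutions on the good leaves, an \FII-based replacement on the bad leaves with $\pmd$ as an effectively-exact oracle, and Courcelle's theorem on the residual bounded-treewidth graph. The one genuine difference is in how you remove the good leaves from the residual graph. The paper sets $V_g = \bigcup_{t \in U_g} \chi(t)$ and deletes the \emph{entire} good bags, so that the residual graph contains no vertex of $R_t$ for a good leaf $t$; independence of $S_1 \cup S_b$ is then automatic from the separator property, and the approximation bound is obtained by partitioning an optimum $S^*$ into $S^* \cap H_t$, $S^* \cap R_t$, and $S^* \cap V_b$. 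You instead set $V_g = \bigcup H_t$ only, and additionally excise $D = \bigcup N_G(I_t) \cap R_t$, keeping $R_t \setminus D$ in the residual graph; the bound is then obtained by comparing $S$ against the witness $\tilde{I} = (I^* \setminus (V_g \cup D)) \cup S_1$. Both analyses go through (up to harmless rescaling of $\epsilon$). The paper's choice is strictly simpler: since $|R_t| \le (\epsilon/2) |S_t|$ on a good leaf, it can afford to discard all of $R_t$, not just the neighbours of $I_t$, and this sidesteps entirely the ``effective boundary shrinkage'' issue you flag in the last paragraph — the projected decomposition on $V_b$ is obtained by removing $V_g$ from every bag uniformly, and the replacement lemma applies as-is. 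Your extra bookkeeping with $D$ buys nothing here (the savings in the residual graph do not improve the guarantee), so you may as well adopt the paper's coarser deletion. Everything else you describe — the $O(\ell/\epsilon)$ bound on $\OPT(G[H_t])$ for bad leaves, using $\pmd$ with accuracy $\epsilon' \le 1/(2(3\ell/\epsilon + \mu))$ so it becomes exact in the bounded-optimum regime, re-instantiating \Cref{lem:fii-replacement} for a maximization objective, and lifting $S_b$ back through the replacements via the translation constants — matches the paper's intended argument.
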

\begin{proof}
	
	The overall structure of the proof follows the same outline as the proof of \Cref{thm:monotone-fii-fptas}, so we only discuss the modifications required. 
	As before, it suffices to prove $1 \implies 3$. %Most of the proof goes through regardless of the choice of $\Pi$, so we discuss the specific $\Pi$ only to discuss the differences. 
	We also omit the subscript $\Pi$ from $\OPT_\Pi(\cdot)$. Let $\pmd$ denote the supposed \FPTAS for $\Pi$. 
	
	Let $G = (V, E)$ be a graph, and let $(T, \chi, L)$ be the given $\cH$-tree decomposition of width $\ell-1$. For each node $t \in V(T)$, let $\chi(t) = H_t \uplus R_t$, where $H_t \subseteq L$ with $G[H_t] \in \cH$, and $|R_t| \le \ell$.  
By iterating over each leaf node $t \in V(T)$, we classify as \emph{good} or \emph{bad}, as follows. We use \pmd on $G[H_t]$ to obtain an independent set $S_t \subseteq H_t$ such that $|S_t| \ge (1-\epsilon/2) \cdot \OPT(G[H_t])$, which takes $f(0, \epsilon/2) \cdot |H_t|^{\Oh(1)} = h(\epsilon) \cdot \polyn$ time. Now, if $|R_t| \le \epsilon/2 \cdot |S_t|$, then we say that the leaf node $t$ is \emph{good}; otherwise we say that $t$ is \emph{bad}. Furthermore, all internal nodes are classified as bad. Let $U_g, U_b$ denote the sets of good and bad nodes respectively, and let $S_1 = \bigcup_{t \in U_g} S_t$. Finally, let $V_g = \bigcup_{t \in U_g} \chi(t)$ and $V_b = V \setminus V_g$. Let $S_b \subseteq V_b$ denote an $(1-\epsilon)$-approximate solution to $\Pi$ on $G[V_b]$. We prove the following lemma.
	
	\begin{lemma} \label{lem:indset-appr}
		$|S| \ge (1-\epsilon) \cdot \OPT(G)$, where $S \coloneqq S_1 \cup S_b$, as defined above. 
	\end{lemma}
	\begin{proof}
		Let $S^* \subseteq V(G)$ denote a maximum-size independent set. Let $S^*_1 \coloneqq S^* \cap \bigcup_{t \in U_g} H_t$, $S^*_2 \coloneqq S^* \cap \bigcup_{t \in U_g} R_t$, and $S^*_b \coloneqq S^* \cap V_b$. Note that $\{S^*_1, S^*_2, S^*_3\}$ is a partition of $S^*$. 
		
		For each $t \in U_g$, let $S^*_{1, t} \coloneqq S^*_1 \cap H_t$, and $S^*_{2, t} \coloneqq S^*_2 \cap R_t$. Note that $\LR{S^*_{1, t}}_{t \in U_g}$ is a partition of $S^*_1$, whereas a vertex $u$ may belong to multiple $S^*_{2, t}$'s. Also, note that for any $t\in U_g$, $S^*_{1, t}$ is an independent set in $G[H_t]$, which implies that $|S^*_{1, t}| \le \OPT(G[H_t])$. Now, consider
		\begin{align}
			|S^*_1| + |S^*_2| &\le \sum_{t \in U_g} |S^*_{1, t}| + |S^*_{2, t}| \nonumber
			\\&\le \sum_{t \in U_g} \OPT(G[H_t]) + |R_t| \tag{Since $S^*_{2, t} \subseteq R_t$ }
			\\&\le \sum_{t \in U_g} \frac{1}{1-\epsilon/2} \cdot |S_t| + \frac{\epsilon}{2} \cdot |S_t| \tag{Since $S_t$ is $1-\epsilon/2$-approximation, and by definition of good node}
			\\&\le (1+\epsilon) \cdot \sum_{t \in U_g} |S_t| \nonumber
			\\&\le \frac{1}{1-\epsilon} \cdot |S_1|  \label{eqn:indset-1}
		\end{align}
		Note that $S^*_b$ is an independent set in $G[V_b]$, which implies that 
		\begin{align}
			|S^*_b| \le \OPT(G[V_b]) \le \frac{1}{1-\epsilon} \cdot |S_b| \label{eqn:indset-2}
		\end{align}
		Therefore, 
		\begin{align}
			\OPT(G) = |S^*| &= |S^*_1| + |S^*_2| + |S^*_b| \nonumber
			\\&\le \frac{1}{1-\epsilon} \cdot |S_1| + \frac{1}{1-\epsilon} \cdot |S_b| \tag{from (\ref{eqn:indset-1}) and (\ref{eqn:indset-2})}
			\\&\le \frac{1}{1-\epsilon} \cdot |S|
		\end{align}
		Which concludes the proof.
	\end{proof}
	Now, since \indset has \FII, and the size of an optimal solution in each $G[H_t]$ for $t \in U_b$ is bounded, we can prove a lemma analogous to \Cref{lem:fii-replacement}, which uses \pmd to return an optimal solution $S_b \subseteq G[V_b]$ in time $g(\twh(G), \epsilon) \cdot \polyn$. This concludes the proof of the theorem.
\end{proof}

\subsubsection{\fpacking and \spacking}

Now we adapt the previous proof to \fpacking and \spacking.
\begin{theorem} \label{thm:fpacking-fptas}
	Let $\cH$ be a well-behaved family of graphs. Let $\Pi = \fpacking$ (resp.\ \spacking), where $\cF$ ($\cS$) is a fixed finite family of connected graphs, such that $\Pi$ is CMSO-definable and has \FII. Then, the following statements are equivalent.
	\begin{enumerate}
		\item $\Pi$ admits an \FPTAS parameterized by $\mdh(\cdot)$ and $\epsilon$.
		\item $\Pi$ admits an \FPTAS parameterized by $\edh(\cdot)$ and $\epsilon$.
		\item $\Pi$ admits an \FPTAS parameterized by $\twh(\cdot)$ and $\epsilon$.
	\end{enumerate}
\end{theorem}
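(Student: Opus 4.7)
The plan is to adapt the proof of \Cref{thm:indset-fptas} almost verbatim, replacing the vertex-by-vertex accounting in the approximation analysis by a connectivity-based charging argument suited to packing solutions. The implications $3 \Rightarrow 2 \Rightarrow 1$ are immediate from $\mdh(G) \ge \edh(G) \ge \twh(G)$, so the task reduces to $1 \Rightarrow 3$. Let $\pmd$ be the supposed \FPTAS parameterised by $\mdh$, and let $(T, \chi, L)$ be the given $\cH$-tree decomposition of $G$ of width $\ell-1$. For every leaf $t \in V(T)$, set $H_t \coloneqq \chi(t) \cap L$ and $R_t \coloneqq \chi(t) \setminus L$, and invoke $\pmd$ on $G[H_t]$ (which lies in $\cH$, so has modulator size $0$) to obtain a packing $S_t$ with $|S_t| \ge (1-\epsilon/2) \OPT(G[H_t])$; this takes $h(\epsilon) \cdot \polyn$ time. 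Call $t$ \emph{good} if $|R_t| \le \frac{\epsilon}{2} |S_t|$ and \emph{bad} otherwise, and mark every internal node as bad. With $U_g, U_b, V_g, V_b$ defined exactly as in \Cref{thm:indset-fptas}, let $S_1 \coloneqq \bigcup_{t \in U_g} S_t$, which is a valid packing in $G$ since the $G[H_t]$'s are vertex-disjoint and pairwise non-adjacent.

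The main new ingredient is the approximation analysis. Let $\sigma^*$ be an optimal packing in $G$. Because every graph in $\cF$ (respectively $\cS$) is connected, each model in $\sigma^*$ has a connected vertex set in $G$. Partition $\sigma^*$ into three classes: (a) models entirely inside some $H_t$ with $t \in U_g$; (b) models entirely inside $V_b$; and (c) everything else. The $\cH$-tree decomposition guarantees $N_G(u) \subseteq \chi(t) = H_t \cup R_t$ for every $u \in H_t$, since $u \in L$ lies in the unique leaf $t$; thus $R_t$ separates $H_t$ from $V(G) \setminus \chi(t)$ in $G$. Therefore every connected model in class (c) must contain a vertex of $\bigcup_{t \in U_g} R_t$, and vertex-disjointness of $\sigma^*$ permits charging distinct models in (c) to distinct vertices of this union. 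This yields $|\sigma^*_{(c)}| \le \sum_{t \in U_g} |R_t| \le \frac{\epsilon}{2} |S_1|$, while $|\sigma^*_{(a)}| \le \sum_{t \in U_g} \OPT(G[H_t]) \le \frac{|S_1|}{1-\epsilon/2}$ and $|\sigma^*_{(b)}| \le \OPT(G[V_b])$. If $S_b$ denotes an optimal (or sufficiently good) packing in $G[V_b]$, a calculation mirroring \Cref{lem:indset-appr} gives $|S_1| + |S_b| \ge (1-\epsilon) \OPT(G)$ after a mild reparametrisation of $\epsilon$.

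It remains to compute $S_b$. For every bad leaf $t$, the bad classification forces $|S_t| \le \frac{2|R_t|}{\epsilon} \le \frac{2\ell}{\epsilon}$, hence $\OPT(G[H_t]) \le \frac{|S_t|}{1-\epsilon/2} = \Oh(\ell/\epsilon)$, while internal bags already have size at most $\ell$. The restricted $\cH$-tree decomposition of $G[V_b]$ therefore satisfies the analogue of the hypothesis of \Cref{lem:fii-replacement}: the optimum in every leaf base component is bounded by a function of $\ell$ and $\epsilon$. Combining the CMSO-definability and \FII property of $\Pi$ with the \FPTAS $\pmd$, we perform protrusion-style replacements inside every leaf base component, shrinking it to an equivalent gadget of size bounded by a function of $\ell$ and $\epsilon$; the resulting graph has treewidth $f(\ell, \epsilon)$, and Courcelle's theorem returns an optimum packing on it, which lifts back to an optimum $S_b$ for $G[V_b]$ in time $g(\ell,\epsilon) \cdot \polyn$.

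The main technical obstacle is formulating the packing variant of \Cref{lem:fii-replacement}. In the \deltof setting a solution is a vertex subset, and \FII equivalence for a boundaried graph amounts to recording, for each subset of the boundary, the optimum cost of completing it. For packing a solution is a collection of vertex-disjoint models, so the appropriate equivalence relation must track, for each boundaried subgraph, not only the maximum number of models that can be packed strictly inside but also precisely which boundary vertices are consumed by a model that straddles the boundary; providing this finer equivalence is exactly the role of the assumed \FII hypothesis. Once this equivalence is in hand, the replacement argument proceeds exactly as in the proof of \Cref{lem:fii-replacement} and Theorem~6.1 of \cite{AgrawalKLPRSZ22Elimination}, with $\pmd$ invoked to control the sizes encountered during every iteration of the replacement. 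The connectedness of graphs in $\cF$ and $\cS$ is essential to the class-(c) charging; relaxing it would require a separator argument of a genuinely different flavour.
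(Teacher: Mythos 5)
Your proposal is correct and follows the paper's own bucket-versus-ocean strategy essentially verbatim: classify leaves as good/bad, collect $S_1=\bigcup_{t\in U_g}S_t$, charge the optimum models that cross out of good base components to the separators $R_t$ using connectivity and the fact that $R_t$ separates $H_t$ from the rest, and finish by proving a packing analogue of \Cref{lem:fii-replacement} so that $G[V_b]$ can be shrunk and solved exactly via CMSO-definability. The one place where you deviate is the definition of $V_g,V_b$: you reuse the \indset convention $V_g=\bigcup_{t\in U_g}\chi(t)$, whereas the paper deliberately changes this to $V_g=\bigcup_{t\in U_g}H_t$ so that the separators $R_t$ of good leaves remain inside $V_b$ (and the paper explicitly flags this as a departure from the \indset proof). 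Under your choice, optimum models living entirely inside the good $R_t$'s fall into class (c) and are discarded, rather than being recaptured by the subsolution on $G[V_b]$ as in the paper; this is harmless, since your charge $|\sigma^*_{(c)}|\le\sum_{t\in U_g}|R_t|\le(\epsilon/2)|S_1|$ absorbs them, and the arithmetic closes with the same reparametrisation of $\epsilon$. So both bookkeepings are sound; the paper's version is marginally tighter (it loses only the models straddling $H_t$), while yours is marginally simpler (it reuses the \indset partition unchanged), but the key lemmas, the use of the separator property to bound class (c), the bound $\OPT(G[H_t])=\Oh(\ell/\epsilon)$ on bad leaves, and the replacement machinery match the paper's proof.
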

\begin{proof}
	We focus on $\Pi \coloneqq \fpacking$, since the arguments for \spacking are only easier. Also, as before, we prove $1 \implies 3$. Let $\pmd$ be the \FPTAS for $\Pi$ parameterized by $\mdh(\cdot)$ and $\epsilon$. 
	
	Let $G = (V, E)$ be the input graph on $n$ vertices, and $(T, \chi, L)$ be its $\cH$-tree decomposition of width $\ell$. By iterating over each leaf node $t \in V(T)$, we use \pmd on $G[H_t]$, to obtain a solution $\sigma_t \coloneqq \LR{(G_{i, t}, F_{i, t}, \varphi_{i, t})}_{i \in [k_t]}$ for some $k_t \ge 0$, such that $|\sigma_t| \ge (1-\epsilon) \cdot \OPT(G[H_t])$. Let $S_t \coloneqq \bigcup_{i \in [k_t]} V(G_{i, t})$. If $|R_t| \le \epsilon/2 \cdot |\sigma_t|$, then we say that the leaf node $t$ is \emph{good}; \emph{bad} otherwise. All internal nodes are classified as \emph{bad}. As before, let $U_g, U_b$ denote the sets of good and bad nodes respectively. These definitions follow the same idea as in the proof of \Cref{thm:indset-fptas} so far. Now, we slightly deviate, and use $V_g \coloneqq \bigcup_{t \in U_g} H_t$, and let $V_b = V(G) \setminus V_g$. (Note that these definitions are different from \Cref{thm:indset-fptas}, where we used $\bigcup_{t \in U_g} \chi(t)$. That is, the vertices of $R_t$ belong to $V_b$ instead of $V_g$.)
	Let $\sigma_1 = \bigcup_{t \in U_g} \sigma_t$, and let $\sigma_b$ denote an $(1-\epsilon)$-approximate solution to $\Pi$ on $G[V_b]$. We claim the following
	
	\begin{lemma} \label{lem:fpacking-appr}
		$|\sigma| \ge (1-\epsilon) \cdot \OPT(G)$, $\sigma = \sigma_1 \cup \sigma_b$, as defined above.
	\end{lemma}
	\begin{proof}
		Let $\sigma^*$ denote an optimal solution to $\Pi$ on $G$. Let us classify the minor models from $\sigma^*$ into three types. Consider some minor model tuple $M = (G_i, F_i, \varphi_i) \in \sigma^*$ that is a minor model tuple. Note that the subgraphs $G_i$ are pairwise vertex-disjoint. Then, 
		\begin{itemize}
			\item $M$ is \textbf{type 1} if for some good node $t \in U_g$, $V(G_i) \subseteq H_t$.
			\item $M$ is \textbf{type 2} if for some good node $t \in U_g$, it holds that $V(G_i) \cap H_t \neq \emptyset$, and $V(G_i) \setminus H_t \neq \emptyset$.
			\\%Note that 
			Since $\cF$ contains connected graphs, this implies that, in particular, $V(G_i) \cap R_t \neq \emptyset$.
			\item $M$ is \textbf{type 3} if $V(G_i) \subseteq V_b$.
		\end{itemize}
		Let $\sigma^*_1, \sigma^*_2, \sigma^*_3$ denote the subsets of $\sigma$ containing minor model tuples of \textbf{type 1, 2, 3} respectively.  For each $t \in U_g$, let $\sigma^*_{1, t} = \LR{ (G_i, F_i, \varphi_i) \in \sigma^*_1 : V(G_i) \subseteq H_t }$, and note that $\LR{\sigma^*_{1, t}}_{t \in U_g}$ is a partition of $\sigma^*_1$. Note that 
		\begin{equation}
			|\sigma^*_{1, t}| \le \OPT(G[H_t]) \label{eqn:fpacking-1}
		\end{equation}
		
		Similarly, for each $t \in U_g$, let $\sigma^*_{2, t} = \LR{ (G_i, F_i, \varphi_i) \in \sigma^*_2 : V(G_i) \cap H_t \neq \emptyset }$. Note that $\LR{\sigma^*_{2, t}}_{t \in U_g}$ is not necessarily partition of $\sigma^*_2$, since a minor model from $\sigma^*_2$ may span multiple $G[H_t]$'s. We first have the following observation.
		\begin{observation} \label{obs:minor-models-small}
			For any $t \in U_g$, $|\sigma^*_{2, t}| \le |R_t| \le \epsilon/2 \cdot |\sigma_t|$.
		\end{observation}
		\begin{proof}
			For each $(G_i, F_i, \varphi_i) \in \sigma^*_{2, t}$, note that $V(G_i) \cap R_t \neq \emptyset$. However, since $\sigma^*_{2, t} \subseteq \sigma^*$, it follows that the vertex sets of $G_i$'s from $\sigma^*_{2, t}$ are pairwise disjoint. This implies that $|\sigma^*_{2, t}| \le |R_t|$. Finally, since $t$ is a good node, by definition, we have that $|R_t|\le \epsilon/2 \cdot |\sigma_t|$.
		\end{proof}
		Now, we proceed exactly as in the proof of \Cref{lem:indset-appr}. Consider,
		\begin{align}
			|\sigma^*_1| + |\sigma^*_2| &\le \sum_{t \in U_g} |\sigma^*_1, t| + |\sigma^*_{2, t}| \nonumber
			\\&\le \sum_{t \in U_g} \OPT(G[H_t]) + \frac{\epsilon}{2} \cdot |\sigma_t| \tag{From (\ref{eqn:fpacking-1}) and \Cref{obs:minor-models-small}}
			\\&\le \sum_{t \in U_g} \frac{1}{1-\epsilon/2} |\sigma_t| + \frac{\epsilon}{2} |\sigma_t|
			\\&\le \frac{1}{1-\epsilon} \cdot \sum_{t \in U_g} |\sigma_t| = \frac{1}{1-\epsilon} |\sigma_1| \label{eqn:fpacking-2}
		\end{align}
		Since $\sigma^*_b$ is a valid solution to $\Pi$ on $G[V_b]$, it follows that
		\begin{equation}
			|\sigma^*_b| \le \OPT(G[V_b]) \le \frac{1}{1-\epsilon} |\sigma_b| \label{eqn:fpacking-3}
		\end{equation}
		Finally, consider,
		\begin{align*}
			\OPT(G) = |\sigma^*| &= |\sigma^*_1| + |\sigma^*_2| + |\sigma^*_3| \\&\le \frac{1}{1-\epsilon} |\sigma_1| + \frac{1}{1-\epsilon} |\sigma_b| \tag{From (\ref{eqn:fpacking-2}) and (\ref{eqn:fpacking-3})}
			\\&= \frac{1}{1-\epsilon} |\sigma|
		\end{align*}
		This concludes the proof of the lemma.
	\end{proof}
	Now, the task is reduced to finding a solution $\sigma_b$ to $\Pi$ on $G[V_b]$ such that $|\sigma_b| \ge (1-\epsilon) \cdot \OPT(G[V_b])$. As before, we can prove an analogous version of \Cref{lem:fii-replacement}, where we use $\pmd$ and the fact that $\Pi$ is CMSO-definable, to obtain an optimal solution on $G[V_b]$. Again, the crucial observation is that each replacement operation (i.e., application \Cref{lem:red2finiteindex}) requires solving $\Pi$ exactly on an instance, where the instances are of the form $G[\chi(t)] \oplus Y$, where $|V(Y)| \le g(\ell, \epsilon)$. Since $\OPT(G[\chi(t)]) \le \OPT(G[H_t]) + |R_t| \le 2\ell/\epsilon + \ell$, and each vertex in $Y$ can contribute to the image of at most one minor model. Therefore, $\OPT(G[H_t \oplus Y])$ is upper bounded by a function of $\ell$ and $\epsilon$. Therefore, by suitably choosing the value of $\epsilon'$, $\pmd$ can be made to return an exact solution on such instance in time \FPT in $\ell$ and $\epsilon$. This concludes the proof of the theorem for \fpacking. Finally, we note that all arguments only rely on assumption that the graphs in $\cF$ are connected, and thus also hold for \spacking.
\end{proof}

\section{\FPTASes for Dominating Set} \label{sec:domset}
In this section, we consider \textsc{Dominating Set}, defined as follows. 
\begin{tcolorbox}[colback=white!5!white,colframe=gray!75!black]
	\ads
	\\\textbf{Input:} An instance $(G, k)$, where $G$ is a graph, and $k$ is a non-negative integer
	\\\textbf{Question:} Does $G$ contain a dominating set of size at most $k$, i.e., does there exist a set $S \subseteq V(G)$ such that for each $u \in V(G)$, $N[v] \cap S \neq \emptyset$?
\end{tcolorbox}
Note that the main difficulty is that \textsc{Dominating Set} is not monotone, i.e., it does not necessarily hold that for any $S \subseteq V(G)$, $\OPT(G - S) \le \OPT(G)$. Thus, the approach for \deltof does not immediately generalize, and the arguments for \domset are technically more involved. Furthermore, it is not immediately clear that \domset is \emph{self-reducible}. Therefore, we need to rely on different \emph{annotated versions} of \domset to design approximation algorithms for the problem.

First, in \Cref{subsec:fptas-domset}, we design an \FPTAS for \domset, parameterized by the $\mdh(\cdot)$ and $\epsilon$, for any apex-minor free family $\cH$. Then we prove the equivalence theorem (cf. \Cref{thm:equiv-domset}), which implies \FPTAS for \domset, parameterized by $\edh(\cdot)$ (resp.\ $\twh(\cdot)$) and $\epsilon$. The proof of this theorem is conceptually similar to that of \Cref{thm:monotone-fii-fptas}; however, it is technically much more involved due to the reasons mentioned above. 

%!TEX root = main.tex
\subsection{\FPTAS for \domset Parameterized by $\mdh$} \label{subsec:fptas-domset}

In the following, we combine the case analysis from the previous approach, as well as use arguments involving bidimensionality, as in Fomin et al.\ \cite{FominLS18Grid}, to design an \FPTAS for \textsc{Dominating Set} parameterized by $\mdh(\cdot)$ and $\epsilon$, where $\cH$ is a family of apex-minor free graphs, which we fix in the following discussion. Before proceeding to the algorithm for \domset, we consider an annotated version of \domset and discuss its properties.

\medskip\noindent\textbf{\textsf{Annotated Dominating Set.}} We are given a graph $G'$, and a set of vertices $D \subseteq V(G')$ that are marked as \emph{dominated}, and the vertices of $V(G') \setminus D$ are marked as \emph{undominated}. Furthermore, $G'[D]$ is edgeless. The goal of \anndomset is to compute a minimum-size set of vertices that dominates all the vertices of $V(G') \setminus D$, i.e., to compute a dominating set for all the undominated vertices. We use $(G', D)$ to denote this instance of \anndomset. 

In the definition of \anndomset, we have enforced that $G'[D]$ is edgeless (i.e., the dominated vertices induce an independent set). This assumption is convenient in the subsequent arguments, and is without loss of generality, for the following reason.
%is convenient to assume that the set of dominated vertices $D$ forms an independent set in $G'$, and this assumption is without loss of generality for the following reason. 
Note that a solution for $(G', D)$ is allowed to contain vertices from $D$, but it is not required to dominate all the vertices of $D$. Thus, if a solution contains a vertex $v \in D$, then $v$ must dominate one or more vertices from $V(G') \setminus D$ -- otherwise we may obtain a smaller solution by deleting $v$. However, observe that the set of neighbors of $v$ outside $D$ is unaffected by deleting all the edges among the vertices in $D$. 

Note that this is a generalization of the usual \domset problem, which is obtained by setting $D = \emptyset$. We observe that Fomin et al.\ \cite{FominLS18Grid} design an \EPTAS even for \anndomset, when the input graph belongs to a family of apex-minor free graphs \footnote{Since $\cH$ is a minor-closed family, the graph obtained after deletion of edges among $D$ still belongs to $\cH$.}. In our algorithm, we use this \EPTAS as a black-box. A crucial property used to obtain an EPTAS is that the problem is contraction bidimensional. We sketch this argument, since we will subsequently use the property in our analysis.

Consider an instance $(G', D)$ of \anndomset. Suppose we contract an edge $e = uv \in E(G')$ and obtain a graph $G'' \coloneqq G' / e$. Then, let $D'' \coloneqq D \setminus \{u, v\}$, and let $(G'', D'')$ be the resulting instance of \anndomset. In other words, the vertex $w$ resulting from the contraction of the edge $uv$ is always marked as \emph{undominated}. The intuition for this is as follows. Recall that $D$ induces an independent set in $G'$. Therefore, the contracted edge $uv$ must have at least one undominated endpoint. Thus, the vertex resulting from the contraction must be marked as \emph{undominated}. We now have the following simple observations.

\begin{observation} \label{obs:contraction-closed}
	\ 
	\begin{enumerate}
		\item $\OPT(G'', D'') \le \OPT(G', D)$, where $G'' \coloneqq G' / uv$ and $D'' \coloneqq D \setminus \{u, v\}$.
		\item $D''$ forms an independent set in $G''$.
	\end{enumerate}
\end{observation}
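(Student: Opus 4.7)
The plan is to prove the two items separately, both being relatively direct consequences of the definitions, with the key leverage in item 1 coming from the hypothesis that $G'[D]$ is edgeless.

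Item 2 is essentially immediate. I would observe that $D'' = D \setminus \{u, v\} \subseteq V(G') \setminus \{u, v\}$, so the contracted vertex $w$ does not belong to $D''$. By the definition of edge contraction used in the paper, the induced subgraph $G''[D'']$ coincides with $G'[D'']$, which is a subgraph of $G'[D]$ and is hence edgeless.

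For item 1, my plan is to take an optimal solution $S$ to $(G', D)$ and build a solution $S''$ of $(G'', D'')$ with $|S''| \le |S|$. The natural choice is to set $S'' \coloneqq (S \setminus \{u, v\}) \cup \{w\}$ when $S \cap \{u, v\} \neq \emptyset$, and $S'' \coloneqq S$ otherwise; in either case $|S''| \le |S|$. I would then verify that $S''$ dominates every vertex of $V(G'') \setminus D''$ by a short case analysis. For an old undominated vertex $x \in (V(G') \setminus D) \setminus \{u, v\}$, any dominator $y \in S$ of $x$ in $G'$ either lies in $V(G') \setminus \{u, v\}$ (and is then in $S''$, still dominating $x$ since the corresponding edge survives in $G''$), or lies in $\{u, v\}$ (in which case $w \in S''$ and $xw \in E(G'')$ by the contraction rule, so $w$ dominates $x$). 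For the new vertex $w$, if $w \notin D''$ and $S \cap \{u, v\} \neq \emptyset$, then $w \in S''$ by construction and the claim is trivial.

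The main obstacle is the remaining subcase where $w \notin D''$ and $S \cap \{u, v\} = \emptyset$; here the independence hypothesis on $D$ is essential. Because $uv \in E(G')$ and $G'[D]$ is edgeless, at least one endpoint, say $u$, must lie in $V(G') \setminus D$, so $u$ is undominated in $G'$ and there exists $y \in S$ dominating $u$. Since $u, v \notin S$, we have $y \neq u$ and $y \neq v$, so $uy \in E(G')$ with $y \in V(G') \setminus \{u, v\}$; the contraction rule then yields $yw \in E(G'')$, so $y \in S'' = S$ dominates $w$. Putting the cases together gives $\OPT(G'', D'') \le |S''| \le |S| = \OPT(G', D)$, which completes the proof plan.
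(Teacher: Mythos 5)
Your proposal is correct and follows essentially the same approach as the paper: the same case split on whether $S \cap \{u,v\}$ is empty, the same replacement $(S \setminus \{u,v\}) \cup \{w\}$, and the same use of independence of $D$ to ensure a dominator of $w$ exists in the non-trivial case. You simply spell out the verification that the paper leaves implicit ("Note that in each of the two cases, $S''$ is a valid solution").
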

\begin{proof}
	For item 1, let $S'$ be a solution for \anndomset instance $(G', D)$. Recall that at least one of the two vertices $u$ and $v$ must be undominated, say $u$. We consider two cases. (A) If $S' \cap \{u, v\} \neq \emptyset$, then let $S'' \coloneqq (S' \setminus \{u, v\}) \cup \{w\}$. (B) Otherwise, $S'$ contains a neighbor $u'$ of $u$. Then, we let $S'' \coloneqq S'$. Note that each of the two cases, $S''$ is a valid solution for the resulting instance $(G'', D'')$. 
	
	For item 2, we note that the vertex resulting from an edge contraction is always an undominated vertex. It follows that two dominated vertices can never become adjacent as a result of an edge contraction.
\end{proof}

We also need the following observation.
\begin{observation} \label{obs:gamma-contraction-minor}
	Let $(G', D)$ be an original instance, and suppose we obtain an instance $(\Gamma_k, D')$ from $(G', D)$ after a series of edge contractions, as defined above, where $\Gamma_k$ is the $k \times k$ triangulated grid for a sufficiently large integer $k \ge 0$ (see \Cref{sec:gammagrid} for a formal definition). Then, $\OPT(\Gamma_k, D') \ge \Omega(k^2)$.
\end{observation}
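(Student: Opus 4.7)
The plan is to exploit two properties in tandem: (i) the fact that $D'$ must remain an independent set in $\Gamma_k$ after any number of contractions, and (ii) the bounded maximum degree of the triangulated grid. Together these force any dominating set for the undominated vertices of $\Gamma_k$ to have size $\Omega(k^2)$.

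First I would argue that $D'$ is an independent set in $\Gamma_k$. By the definition of \anndomset, the dominated set in the original instance $(G', D)$ forms an independent set in $G'$. Observation~\ref{obs:contraction-closed}(2) states that this property is preserved by a single edge contraction of the form specified (intuitively, the contracted vertex is always marked undominated, so two dominated vertices cannot become adjacent). Iterating along the entire contraction sequence from $(G', D)$ to $(\Gamma_k, D')$ then shows that $D'$ is independent in $\Gamma_k$.

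Next I would use a structural bound on the independence number of $\Gamma_k$. The triangulated grid on $k^2$ vertices admits a near-perfect matching (for instance, by pairing horizontally consecutive vertices row by row), so $\alpha(\Gamma_k) \le \lceil k^2/2 \rceil$. Since $D' \subseteq V(\Gamma_k)$ is independent, this yields $|D'| \le \lceil k^2/2 \rceil$, and hence the set $U \coloneqq V(\Gamma_k) \setminus D'$ of undominated vertices satisfies $|U| \ge \lfloor k^2/2 \rfloor$.

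Finally, every vertex of $\Gamma_k$ has degree at most a constant $\Delta$ (at most $6$ for the standard triangulated grid), so any vertex of a feasible solution dominates at most $\Delta+1$ vertices of $G$. A feasible solution for $(\Gamma_k, D')$ must dominate every vertex of $U$, so $\OPT(\Gamma_k, D') \ge |U|/(\Delta+1) = \Omega(k^2)$, as required. The only mild subtlety is ensuring that the independent-set invariant transfers cleanly along every contraction step even though vertices are removed from $D$ rather than remaining in it; this is exactly what Observation~\ref{obs:contraction-closed}(2) delivers, so I do not anticipate any real obstacle.
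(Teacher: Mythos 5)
Your overall strategy is sound and genuinely different from the paper's, but it has a concrete gap in the degree bound. The claim that ``every vertex of $\Gamma_k$ has degree at most a constant $\Delta$'' is false for the $\Gamma_k$ defined in \Cref{sec:gammagrid}: the corner vertex $(k,k)$ is made adjacent to the entire perimetric border of $\boxplus_k$, so it has degree $\Theta(k)$, not $O(1)$. Consequently, the inequality $\OPT(\Gamma_k,D') \ge |U|/(\Delta+1)$ does not follow as stated, because a single solution vertex $(k,k)$ could in principle dominate $\Theta(k)$ undominated vertices at once.

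The gap is easily repaired: aside from $(k,k)$, every vertex of $\Gamma_k$ has degree at most $7$ (the six triangulated-grid neighbours plus possibly one edge to $(k,k)$). So $(k,k)$ dominates at most $O(k)$ of the at least $\lfloor k^2/2\rfloor$ undominated vertices in $U$; the remaining $\Omega(k^2) - O(k) = \Omega(k^2)$ vertices of $U$ must each be dominated by a solution vertex of degree at most $7$, giving $\OPT(\Gamma_k,D') \ge \Omega(k^2)$. With this one-line fix your argument goes through. It is worth noting that the paper takes a different route that sidesteps the issue automatically: it shows that every $5\times 5$ sub-grid lying sufficiently far inside $\Gamma_k$ must contain a solution vertex (splitting into the cases where the centre lies in $D'$ or not, and using that neighbours of $D'$-vertices are undominated), and then packs $\Omega(k^2)$ vertex-disjoint such sub-grids into the interior. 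Since that argument only looks at interior sub-grids, the anomalous corner vertex $(k,k)$ never enters the picture. Your counting/matching approach is more elementary and avoids the explicit sub-grid packing, at the cost of having to handle $(k,k)$ separately.
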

\begin{proof}
	We prove that any subgraph forming a $5 \times 5$ triangulated grid must have at least one vertex in the solution. Then, the claim follows. Consider a vertex $u \in D'$ that is at least $5$ rows and columns away from the boundary of the grid. Let $W = N(u)$ be the set of neighbors of $u$. Note that $|W| = 6$ and $W$ is a subset of the $3 \times 3$ sub-grid centered at $u$. Furthermore, \Cref{obs:contraction-closed} implies that $W \cap D' = \emptyset$, since the instance $(\Gamma, D')$ is obtained from a series of edge contractions from an original instance $(G', D)$, wherein $D$ forms an independent set. This implies that any solution must contain at least one vertex from the set of vertices $P$ forming a $5 \times 5$ sub-grid centered at $u$, since $N[W] \subseteq P$.
\end{proof}

\Cref{obs:contraction-closed} and \Cref{obs:gamma-contraction-minor} imply the following Lemma.
\begin{lemma} \label{lem:contraction-bidim}
	\anndomset is contraction bidimensional.
\end{lemma}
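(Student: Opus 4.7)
The plan is to assemble the two preceding observations into a direct verification of the two defining conditions of contraction bidimensionality. Recall that a graph parameter $\mu(\cdot)$ is contraction bidimensional if (i) $\mu$ does not increase under edge contractions (in the appropriate annotated sense here) and (ii) $\mu(\Gamma_k) = \Omega(k^2)$, where $\Gamma_k$ is the $k \times k$ triangulated grid.

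First, I would make condition (i) precise in the annotated setting. Instances of \anndomset are pairs $(G', D)$ with $D$ independent in $G'$, and the natural contraction rule, made explicit just before \Cref{obs:contraction-closed}, is: contracting an edge $uv$ produces $(G'/uv,\, D \setminus \{u, v\})$ with the merged vertex marked undominated. Part 2 of \Cref{obs:contraction-closed} shows the class of valid instances (those with $D$ independent) is closed under this operation, and part 1 yields $\OPT(G'', D'') \le \OPT(G', D)$ for a single contraction. A straightforward induction on the number of contractions then extends this inequality to any finite sequence of contractions, establishing the required monotonicity.

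For condition (ii), I would simply appeal to \Cref{obs:gamma-contraction-minor}: any \anndomset instance on $\Gamma_k$ reached by a sequence of contractions from a valid original instance has $\OPT(\Gamma_k, D') = \Omega(k^2)$. In the standard bidimensionality framework it suffices to exhibit this lower bound along a canonical family of grid-like contracted instances, which is precisely what \Cref{obs:gamma-contraction-minor} provides.

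The main obstacle, and really the only subtlety, is not technical but rather one of bookkeeping: the annotated contraction operation must preserve the independence of $D$ and must mark each merged vertex as undominated, because the density argument underpinning \Cref{obs:gamma-contraction-minor} crucially exploits that every neighbour of an interior undominated vertex is itself undominated. Since the two preceding observations were formulated under exactly this convention, simply combining (i) and (ii) yields the lemma.
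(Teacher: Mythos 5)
Your proposal is correct and takes exactly the same route as the paper: the paper's proof of the lemma consists precisely of the remark that \Cref{obs:contraction-closed} (closure and monotonicity of $\OPT$ under annotated contractions) and \Cref{obs:gamma-contraction-minor} (the $\Omega(k^2)$ bound on $\Gamma_k$) together verify the two defining conditions of contraction bidimensionality. Your added remarks about the convention that merged vertices are marked undominated and that $D$ remains independent are useful clarifications but do not change the substance of the argument.
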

Additionally, we state the following result from Fomin et al.\ \cite{FominLS18Grid}.

\begin{proposition}[Corollary of Lemma 3 from \cite{FominLS18Grid}] \label{prop:sqgc}
	Let $(G', D)$ be an instance of \anndomset, where $G'$ belongs to a family of apex-minor free graphs $\cH$. Then, $\tw(G') \le \Oh\lr{\sqrt{OPT(G', D)}}$.
\end{proposition}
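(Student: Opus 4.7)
The proposition is a direct instance of the bidimensionality paradigm for apex-minor-free graph classes. Lemma~3 of \cite{FominLS18Grid} states, roughly, that whenever a minimization parameter $\pi$ on an apex-minor-free class $\cH$ is contraction-closed and satisfies $\pi(\Gamma_k) = \Omega(k^2)$, every $G \in \cH$ has $\tw(G) = \Oh(\sqrt{\pi(G)})$. The plan is therefore to verify these two conditions for the parameter $\pi(G', D) \coloneqq \OPT(G', D)$ and then invoke this lemma as a black box. The two required conditions are exactly what \Cref{obs:contraction-closed} and \Cref{obs:gamma-contraction-minor} supply, and their combination is the content of \Cref{lem:contraction-bidim}; so the proof should essentially consist of pointing at these lemmata and citing Lemma~3 of \cite{FominLS18Grid}.

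Unpacking the black box briefly, I would proceed as follows. The Demaine--Hajiaghayi Excluded Grid Theorem for apex-minor-free classes guarantees a constant $c_\cH$ such that $\tw(G') \ge c_\cH \cdot k$ forces $G'$ to admit a sequence of edge contractions producing a graph that contains $\Gamma_k$ as a ``contraction minor''. Iteratively applying \Cref{obs:contraction-closed} along this sequence carries the annotation $D$ to an annotation $D'$ on $\Gamma_k$, preserves the independent-set property of the annotation, and yields an \anndomset instance with $\OPT(\Gamma_k, D') \le \OPT(G', D)$. Plugging in \Cref{obs:gamma-contraction-minor} gives $\OPT(G', D) \ge \OPT(\Gamma_k, D') = \Omega(k^2) = \Omega(\tw(G')^2/c_\cH^2)$, which rearranges into the required upper bound $\tw(G') = \Oh(\sqrt{\OPT(G', D)})$.

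The only real obstacle is a bookkeeping one: ensuring that the annotation behaves correctly throughout the contraction sequence, so that every intermediate instance is a legitimate \anndomset instance (in particular, $D$ remains independent) and its optimum genuinely lower-bounds $\OPT(G', D)$. This is precisely the reason \Cref{obs:contraction-closed} was phrased to deliver both the optimality inequality and the preservation of the independence of $D$, rather than just one of the two. Once these are in hand, no further problem-specific reasoning is required beyond applying Lemma~3 of \cite{FominLS18Grid}.
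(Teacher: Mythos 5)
Your proposal is correct and follows exactly the route the paper intends: Proposition~\ref{prop:sqgc} is stated as a corollary of the parameter--treewidth bound for contraction-bidimensional problems on apex-minor-free classes (Lemma~3 of \cite{FominLS18Grid}), and the paper's \Cref{obs:contraction-closed}, \Cref{obs:gamma-contraction-minor}, and \Cref{lem:contraction-bidim} are set up precisely to certify the two hypotheses (contraction-closedness carrying the annotation correctly, and the $\Omega(k^2)$ lower bound on $\Gamma_k$) that you verify. Your unpacking of the black box via the excluded-grid theorem for apex-minor-free graphs matches the standard derivation of that lemma, so there is no substantive difference from the paper.
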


\medskip\noindent\textbf{\textsf{Paramaterized Approximation Scheme.}} To design an \FPTAS for \domset, we combine the case analysis from the previous section along with arguments involving bidimensionality. Recall that the input is $G = (V, E)$ and $M \subseteq V(G)$ with $|M| = p$, such that $G' \coloneqq G \setminus M \in \cH$. Let $H \coloneqq V(G) \setminus M$, and let $D \coloneqq N(M) \subseteq H$. Note that $G' = G[H]$. Let $(G', D)$ denote an instance of \anndomset, where we want to compute a dominating set for the vertices in $H \setminus D$. Since $G' \in \cH$, we use the EPTAS from \cite{FominLS18Grid} to compute a $(1+\frac{\epsilon}{2})$-approximate solution $S$. That is, $|S| \le (1+\frac{\epsilon}{2}) \OPT(G', D)$. Note that this takes $2^{\Oh(1/\epsilon)} \cdot \polyn$ time. Now, we consider two cases.

\textbf{Case 1.} $p \le \frac{\epsilon}{3} |S|$. In this case, we have the following claim.
\begin{claim}
	If $p = |M| \le \frac{\epsilon}{3} \cdot |S|$, then $M \cup S$ is a $(1+\epsilon)$-approximation for \domset on the graph $G$.
\end{claim}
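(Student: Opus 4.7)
The plan is to verify two things in sequence: first, that $M \cup S$ is a feasible dominating set for $G$, and second, that its size is within a $(1+\epsilon)$ factor of $\OPT_{\domset}(G)$. Feasibility is easy: every vertex of $M$ dominates itself, every vertex of $D = N(M) \cap H$ is dominated by some vertex of $M$ by definition, and every vertex of $H \setminus D$ is the set of ``undominated'' vertices in the \anndomset instance $(G', D)$ that $S$ was designed to dominate. Hence $M \cup S$ covers $V(G)$.

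The approximation bound hinges on the inequality $\OPT(G', D) \le \OPT_{\domset}(G)$, which is the only nontrivial step. To establish it, I would take an optimum dominating set $T^\star$ of $G$ and argue that $T^\star \cap H$ is a feasible solution to the annotated instance $(G', D)$. Indeed, any vertex $u \in H \setminus D$ has (by construction of $D$) no neighbor in $M$, so if $T^\star$ dominates $u$, the dominating vertex must already lie in $H$. Thus $T^\star \cap H$ dominates $H \setminus D$ inside $G' = G[H]$, giving $\OPT(G', D) \le |T^\star \cap H| \le |T^\star| = \OPT_{\domset}(G)$.

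Combining, one chains the following inequalities:
\begin{align*}
|M \cup S| \;\le\; |M| + |S| \;&\le\; \tfrac{\epsilon}{3}\,|S| + |S| \;=\; \bigl(1 + \tfrac{\epsilon}{3}\bigr)\,|S| \\
&\le\; \bigl(1 + \tfrac{\epsilon}{3}\bigr)\bigl(1 + \tfrac{\epsilon}{2}\bigr)\,\OPT(G', D) \\
&\le\; (1+\epsilon)\,\OPT(G', D) \;\le\; (1+\epsilon)\,\OPT_{\domset}(G),
\end{align*}
where the first line uses the case hypothesis $|M| \le \tfrac{\epsilon}{3}|S|$, the second uses the guarantee on $S$ from the \EPTAS of \cite{FominLS18Grid}, the third is the elementary inequality $(1+\tfrac{\epsilon}{3})(1+\tfrac{\epsilon}{2}) = 1 + \tfrac{5\epsilon}{6} + \tfrac{\epsilon^2}{6} \le 1 + \epsilon$ valid for $\epsilon \in (0,1]$, and the final step is the projection inequality proven above.

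I expect the only subtle point to be the projection $T^\star \mapsto T^\star \cap H$; everything else is arithmetic. The subtlety is that \domset is not monotone under vertex deletion in general, so one cannot appeal to an analogue of \Cref{obs:monotone}. The argument instead exploits a very specific structural fact about the annotation $D = N(M) \cap H$: vertices outside $D$ cannot possibly be dominated from $M$, which is precisely what makes the projection work and what makes the ``bucket versus ocean'' step go through for \domset in this case.
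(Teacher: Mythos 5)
Your proposal is correct and follows essentially the same argument as the paper: feasibility of $M \cup S$ from the structure of the annotation, the projection inequality $\OPT(G',D) \le \OPT(G)$ by restricting an optimal dominating set of $G$ to $H$ (the paper writes $S^* \setminus M$, which is the same set as your $T^\star \cap H$), and the same arithmetic chain $(1+\epsilon/3)(1+\epsilon/2) \le 1+\epsilon$. The one cosmetic difference is that the paper explicitly splits on whether $u \in S^*$ before invoking a dominating neighbor, whereas you phrase it as ``the dominating vertex must already lie in $H$'' (which silently includes $u$ itself); both are fine.
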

\begin{proof}
	The solution $S$ dominates all vertices of $H \setminus D$. Any remaining vertex of $V(G)$ either belongs to $M$ or to $D$. However, since $D = N(M)$, the solution $M \cup S$ is a dominating set for $G$. 
	
	Now we argue about the approximation guarantee. Let $S^*$ denote an optimal \domset solution for $G$. We first argue that $S^* \setminus M$ is a feasible solution for \anndomset instance $(G', D)$. Consider any vertex $u \in H \setminus D$. If $u \not\in S^*$, then $u$ is dominated by a neighbor $v \in S^*$. Note that $v \not\in M$, since otherwise $u \in N(M) = D$, which is a contradiction. Therefore, every $u \in H \setminus D$ is dominated by some vertex of $S^* \setminus M$. Therefore, $\OPT(G', D) \le |S^* \setminus M| \le |S^*|$. Then, the claim follows, since\\$|S| + |M| \le (1+\epsilon/3) (1+\epsilon/2) \cdot \OPT(G', D) \le (1+\epsilon) \cdot|S^*|.$ \end{proof}

\textbf{Case 2.} $p > \frac{\epsilon}{3} \cdot |S| \ge \frac{\epsilon}{3} \cdot \OPT(G', D)$. In this case, we have the following claim.
\begin{claim}
	If $p = |M| > \frac{\epsilon}{3} \cdot \OPT(G', D)$, then $\tw(G) \le p + \Oh\lr{\sqrt{\frac{p}{\epsilon}}}$. 
	\\In this case, there exists a $2^{\Oh\lr{p + \sqrt{\frac{p}{\epsilon}}}} \cdot \polyn$ time algorithm to find an optimal dominating set for $G$.
\end{claim}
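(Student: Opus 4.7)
The plan is to bound the treewidth of $G$ using the assumed lower bound on $|M|$ relative to $\OPT(G',D)$, and then invoke a known single-exponential algorithm for \domset parameterized by treewidth. Let $p = |M|$ and recall that $G' = G - M \in \cH$, where $\cH$ is a family of apex-minor free graphs.

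First I would observe that the hypothesis $p > \frac{\epsilon}{3}\cdot \OPT(G',D)$ rearranges to
\[
\OPT(G',D) \;<\; \frac{3p}{\epsilon}.
\]
Since $G' \in \cH$, \Cref{prop:sqgc} immediately yields
\[
\tw(G') \;=\; \Oh\!\lr{\sqrt{\OPT(G',D)}} \;=\; \Oh\!\lr{\sqrt{p/\epsilon}}.
\]
Next I would lift a tree decomposition of $G'$ to one of $G$ in the standard way: take an optimal tree decomposition $(T, \chi')$ of $G'$ and define $\chi(t) := \chi'(t) \cup M$ for every node $t \in V(T)$. It is routine to verify that $(T,\chi)$ is a valid tree decomposition of $G$, since every edge of $G$ incident to $M$ is covered by any bag that already contains its other endpoint (or is an edge inside $M$, which is covered by every bag). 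Thus
\[
\tw(G) \;\le\; \tw(G') + |M| \;\le\; p + \Oh\!\lr{\sqrt{p/\epsilon}},
\]
proving the first assertion.

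For the algorithmic part, I would then apply a standard dynamic-programming algorithm for \textsc{Dominating Set} parameterized by treewidth, which runs in time $2^{\Oh(\tw)} \cdot n^{\Oh(1)}$ (computing the tree decomposition itself can be done within the same running-time budget using a constant-approximation for treewidth, or, more conveniently, by first computing the tree decomposition of $G'$ via the \EPTAS-ready machinery and then adding $M$ to each bag as above). Plugging in the treewidth bound just established, the total running time becomes
\[
2^{\Oh\!\lr{p + \sqrt{p/\epsilon}}} \cdot \polyn,
\]
as claimed. Combining the two cases yields the overall \FPTAS: in Case~1 we return $M \cup S$ in $2^{\Oh(1/\epsilon)} \cdot \polyn$ time, while in Case~2 we return an optimal dominating set for $G$ in $2^{\Oh(p + \sqrt{p/\epsilon})} \cdot \polyn$ time.

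The only subtle point is the need to produce a tree decomposition of $G$ of the stated width algorithmically rather than merely existentially; however, since $G' \in \cH$ and $\cH$ consists of apex-minor free graphs, we can use any of the classical constant-factor treewidth approximations to compute $(T,\chi')$ within the desired running time, and the augmentation by $M$ is trivial. No other obstacles are expected.
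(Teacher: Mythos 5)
Your proof is correct and follows essentially the same argument as the paper: rearrange the case hypothesis to bound $\OPT(G',D)$ by $3p/\epsilon$, apply \Cref{prop:sqgc} to bound $\tw(G')$, add $|M|$ to every bag to get $\tw(G) \le p + \Oh(\sqrt{p/\epsilon})$, and then run the standard $2^{\Oh(\tw)}\cdot\polyn$ dynamic program for \domset. The extra remarks you include about explicitly constructing the lifted tree decomposition and computing it algorithmically are reasonable elaborations of what the paper leaves implicit.
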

\begin{proof}
	Recall that \Cref{prop:sqgc} implies that $\tw(G') \le \Oh\lr{\sqrt{OPT(G', D)}}$, and due to case assumption, we have that $\OPT(G', D) \le = 3p/\epsilon$. Finally, $\tw(G) \le \tw(G \setminus M) + |M| \le p + \Oh(\sqrt{\frac{p}{\epsilon}})$. Since the treewidth of $G$ is bounded, we can use a $2^{\Oh(\tw(G))} \cdot \polyn$ time algorithm to find an optimal dominating set for $G$. 
\end{proof}

By combining the results from two cases, we get the following theorem.

\begin{theorem}
	Suppose we are given a graph $G$, and a set $M \subseteq V(G)$ of size $p$, such that $G' \coloneqq G \setminus M \in \cH$, where $\cH$ is a family of apex-minor free graphs. Then, there exists an $f(p, \epsilon)\cdot \polyn$ time algorithm to find a $(1+\epsilon)$-approximation for \domset on $G$, where $f(p, \epsilon) = \max \LR{2^{\Oh(p + \sqrt{\frac{p}{\epsilon}})}, \ \ 2^{\Oh(1/\epsilon)}}$.
\end{theorem}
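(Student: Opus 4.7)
The plan is to directly combine the two case analyses already established by the preceding claims into a single algorithm that branches on the relative sizes of $M$ and the approximate \anndomset solution. Concretely, given the input $(G, M)$, I would first construct the \anndomset instance $(G', D)$ with $G' = G - M$ and $D = N(M) \cap V(G')$ (and, to match the formal definition, remove the edges within $D$; this does not affect optimal solutions since no vertex of $D$ needs to be dominated). Since $G' \in \cH$ is apex-minor free, invoking the \EPTAS of Fomin et al.~\cite{FominLS18Grid} for \anndomset yields, in time $2^{\Oh(1/\epsilon)} \cdot \polyn$, a solution $S$ with $|S| \le (1+\epsilon/2) \cdot \OPT(G', D)$.

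With $S$ in hand, the algorithm tests the dichotomy $p \le \frac{\epsilon}{3} |S|$ versus $p > \frac{\epsilon}{3} |S|$. In the first regime, the first claim of the subsection shows that $M \cup S$ is already a $(1+\epsilon)$-approximate dominating set for $G$, so we simply output it. In the second regime, we have $\OPT(G', D) \le |S| \le \frac{3p}{\epsilon}$, and the second claim (which in turn appeals to \Cref{prop:sqgc} for square-root bounded treewidth of contraction-bidimensional problems on apex-minor-free graphs, together with the obvious inequality $\tw(G) \le \tw(G') + |M|$) gives $\tw(G) = p + \Oh(\sqrt{p/\epsilon})$. We then compute a tree decomposition of $G$ of width $\Oh(\tw(G))$ in \FPT time and run the standard $2^{\Oh(\tw)} \cdot \polyn$ dynamic programming algorithm for (unannotated) \domset on $G$ to obtain an \emph{exact} optimum.

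Correctness follows immediately from the two claims: Case~1 produces a feasible dominating set with the claimed ratio, and Case~2 produces an exact optimum (in particular a $(1+\epsilon)$-approximation). For the running time, the \EPTAS invocation contributes $2^{\Oh(1/\epsilon)} \cdot \polyn$ and the treewidth DP in Case~2 contributes $2^{\Oh(p + \sqrt{p/\epsilon})} \cdot \polyn$; taking the maximum of these two bounds yields exactly the function $f(p,\epsilon)$ stated in the theorem.

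I do not anticipate a genuine obstacle — the two claims do all the non-trivial work. The only points that deserve a little care are: (i) verifying that the modification which removes edges inside $D$ keeps the graph in the apex-minor-free class $\cH$ (this holds because $\cH$ is minor-closed, and the algorithm of \cite{FominLS18Grid} applies even when the dominated set induces no edges, as noted in the annotated setup); (ii) pinning down the relationship $\OPT(G', D) \le \OPT_{\domset}(G)$ used implicitly when bounding the approximation ratio in Case~1, which is exactly the inequality shown in the proof of the Case~1 claim via $S^* \setminus M$; and (iii) being explicit that in Case~2 we solve the \emph{original} (unannotated) \domset on $G$ by treewidth DP, rather than \anndomset on $G'$, so that the solution covers $M$ as well.
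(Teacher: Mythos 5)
Your proposal is correct and follows essentially the same route as the paper: the same construction of the \anndomset instance $(G - M, N(M))$, the same invocation of the Fomin et al.\ \EPTAS, the same threshold $\frac{\epsilon}{3}|S|$ for branching, and the same resolution of Case~2 via the contraction-bidimensionality treewidth bound followed by exact DP on $G$. Your three points of care (minor-closedness justifies deleting edges inside $D$, the inequality $\OPT(G',D) \le \OPT_{\domset}(G)$ via $S^*\setminus M$, and solving the unannotated problem in Case~2) are exactly the subtleties the paper handles.
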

%!TEX root = main.tex
\subsection{Equivalence Theorem for \domset} \label{subsec:equiv-domset}
In this section, we prove the following equivalence theorem for \domset, assuming that $\cH$ be \emph{well-behaved}, i.e., hereditary and closed under disjoint union. 
%We also require that $\cH$ satisfies the following property: let $G \in \cH$, and let $u \in V(G)$ be an arbitrary vertex. Let $G'$ be a graph obtained by attaching an arbitrary number of distinct pendant vertices to $u$. Then, $G' \in \cH$. We call a family $\cH$ satisfying these three properties an \emph{amenable} family. While the last assumption may seem arbitrary, we note that many natural graph families are amenable, e.g., (outer)planar graphs, forests, graphs of bounded treewidth, bipartite graphs, etc.

\begin{theorem} \label{thm:equiv-domset}
	Let $\cH$ be any well-behaved family of graphs, and let $\Pi= \domset$. Then, the following statements are equivalent.
	\begin{enumerate}
		\item $\Pi$ admits an \FPTAS parameterized by $\mdh(\cdot)$ and $\epsilon$.
		\item $\Pi$ admits an \FPTAS parameterized by $\edh(\cdot)$ and $\epsilon$.
		\item $\Pi$ admits an \FPTAS parameterized by $\twh(\cdot)$ and $\epsilon$.
	\end{enumerate}
\end{theorem}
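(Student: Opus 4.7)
The overall plan is to imitate the proof of \Cref{thm:monotone-fii-fptas}, showing $1 \implies 3$ (the other implications being immediate from $\twh(G) \le \edh(G) \le \mdh(G)$), while working around the non-monotonicity of \domset. As in \Cref{subsec:fptas-domset}, the vehicle for this is the annotated variant \ads (vertices marked \emph{dominated} or \emph{undominated}), which restores enough monotonicity and self-reducibility to plug into the bucket-vs-ocean framework and into an \FII-style replacement argument. I would therefore first observe that the assumed \FPTAS $\pmd$ for $\domset$ parameterized by $\mdh$ can be adapted to an \FPTAS for \ads parameterized by $\mdh$ (by standard reduction gadgets), and that \ads is CMSO-definable and has \FII with respect to a suitable annotated boundaried-graph equivalence.

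Given an $\cH$-tree decomposition $(T, \chi, L)$ of $G$ of width $\ell \coloneqq \twh(G)$, I would write each leaf bag as $\chi(t) = H_t \uplus R_t$ with $G[H_t] \in \cH$ and $|R_t| \le \ell + 1$, and run the annotated \FPTAS on the \ads instance $(G[H_t], \emptyset)$ to obtain a solution $S_t$ with $|S_t| \le (1 + \epsilon/2)\OPT(G[H_t],\emptyset)$. A leaf $t$ is declared \emph{good} if $|R_t| \le (\epsilon/3)|S_t|$ and \emph{bad} otherwise; all internal nodes are bad. For good nodes, I would commit $S_t \cup R_t$ to the output; the bucket-vs-ocean computation shows this contributes at most $(1+\Theta(\epsilon))\OPT(G[H_t],\emptyset)$ while also dominating every vertex of $H_t$ plus taking care of $R_t$ itself. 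For bad nodes, the definition forces $\OPT(G[H_t],\emptyset) \le 3\ell/\epsilon$, so the optimum within each bad base component is bounded by a function of $\ell$ and $\epsilon$.

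Next I would reduce to a residual annotated problem. Let $V_g = \bigcup_{t\in U_g} H_t$ and $V_b = V(G)\setminus V_g$. Let $\tilde{G} = G[V_b]$ and set $D \subseteq V_b$ to be the set of vertices of $V_b$ already dominated by $\bigcup_{t\in U_g}(S_t \cup R_t)$; by removing edges inside $D$ (which does not affect solutions, cf.\ \Cref{subsec:fptas-domset}) one obtains a legal \ads instance $(\tilde{G}, D)$. The analogue of \Cref{lem:fii-approx} would show that combining the good-node picks with any $(1+\epsilon)$-approximate \ads solution on $(\tilde{G}, D)$ yields a $(1 + \Oh(\epsilon))$-approximate dominating set of $G$; rescaling $\epsilon$ gives the required guarantee. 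Crucially, the $\cH$-tree decomposition restricted to $V_b$ remains an $\cH$-tree decomposition of $\tilde{G}$ of width at most $\ell$.

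The hardest step is producing a near-optimal \ads solution on $(\tilde{G}, D)$ in time \FPT in $\ell$ and $\epsilon$. Following the blueprint of \Cref{lem:fii-replacement}, I would visit each bad leaf $t$ and use the \FII property of \ads to replace the base component $G[H_t]$ (together with its boundary into $R_t$, and respecting the $D$-annotation) by a boundaried representative of size bounded by a function of $\ell$ and $\epsilon$. The key observation, matching the one made in the proof of \Cref{lem:fii-replacement}, is that inside a bad bag $\OPT$ is bounded by $\Oh(\ell/\epsilon)$, so choosing a sufficiently small $\epsilon'$ forces $\pmd$ to return an exact optimum on every bounded-size equivalence-class representative encountered during the replacement procedure. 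After processing all bad bags the annotated graph has treewidth bounded by a function of $\ell$ and $\epsilon$, and Courcelle's theorem applied to the CMSO-definable \ads problem delivers an exact optimum which lifts back to $(\tilde{G}, D)$. The principal obstacle, and the main thing to verify carefully, is that the annotated boundaried-graph equivalence used to witness \FII for \ads is compatible with the $\cH$-tree decomposition interfaces and with the non-monotonic interaction between dominators in $R_t$ and vertices outside the bag; once this is set up correctly, the rest of the argument parallels the proofs of \Cref{thm:monotone-fii-fptas} and \Cref{thm:fpacking-fptas}.
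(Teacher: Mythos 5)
Your high-level plan (show $1\implies 3$, reclassify bags as good/bad by a bucket-vs-ocean test, commit the good bags, and handle the residual problem by \FII{} replacement plus Courcelle) does mirror the paper's structure, but the concrete local step is wrong, and the annotation machinery is hand-waved in exactly the places where the paper has to work hardest.

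The real gap is the local subroutine. You run the annotated \FPTAS{} on $(G[H_t],\emptyset)$, i.e.\ on the plain \domset{} instance $G[H_t]$, and obtain $|S_t|\le(1+\epsilon/2)\OPT_{\domset}(G[H_t])$. But $\OPT_{\domset}(G[H_t])$ is \emph{not} charged against $\OPT(G)$: if $H_t$ is a large independent set whose vertices are all adjacent to a single vertex $r\in R_t$, then $\OPT_{\domset}(G[H_t])=|H_t|$ while $\OPT(G)$ can be $O(1)$ (pick $r$). Your test would classify $t$ as good and commit $S_t\cup R_t$ of size $|H_t|+1$, destroying the ratio. The paper's proof of \Cref{thm:equiv-domset} avoids this with a $v^*$-gadget: it forms $\tilG_t$ by adding a fresh vertex $v^*$ adjacent to all of $R_t$, notes that $\tilG_t-R_t\in\cH$ (disjoint union of $G[H_t]$ and an isolated vertex, using that $\cH$ is well-behaved), and runs $\pmd$ on $\tilG_t$. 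Since $D^*\cap\chi(t)$ together with $v^*$ dominates $\tilG_t$, one gets $|S_t|\le|\tilS_t|\le(1+\epsilon/4)(|D^*\cap H_t|+|D^*\cap R_t|+1)$, i.e.\ $S_t$ is charged against the \emph{local restriction of the true optimum}, not against $\OPT_{\domset}(G[H_t])$. (An alternative fix along your lines would be to take $D_t=N(R_t)\cap H_t$ rather than $D_t=\emptyset$, mirroring the $\mdh$-argument in \Cref{subsec:fptas-domset}; but your proposal uses $D=\emptyset$.) Relatedly, the paper uses the threshold $\ell\le(\epsilon/29)|S_t|$ rather than $|R_t|\le(\epsilon/3)|S_t|$; the larger slack is what lets the extra ``$+1$'' from $v^*$ and the $|D^*\cap R_t|$ term be absorbed in \Cref{cl:domset-technical}.

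There are two further places where you wave your hands at exactly what the paper has to prove. First, the residual annotated problem: you use the ``dominated/undominated'' annotation (\anndomset{}) and assert \FII{} and CMSO-definability of it for a ``suitable annotated boundaried-graph equivalence''. The paper instead uses the distinct Blue-White annotation \ads{} (forcing $S_2$ into the solution), and then \emph{reduces \ads{} back to plain \domset{}} via \Cref{lem:ads-to-ds} by attaching $n^2$ pendants to each forced vertex; crucially these pendants are added as base vertices in a modified $\cH$-tree decomposition, so that \FII{} of plain \domset{} (which is given) suffices and no ``\FII{} for an annotated problem'' needs to be established. You flagged compatibility of an annotated \FII{} with the decomposition interfaces as ``the main thing to verify carefully''—the paper's route is precisely to avoid having to verify it. Second, the solution-lifting step: \domset{} is not self-reducible the way \deltof{} is (cf.\ \Cref{lem:self-reducibility}), since deleting a vertex changes who needs to be dominated. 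The paper explicitly builds a lifting procedure: binary/linear search for the threshold $k^*$, then iteratively add candidate vertices to the forced set $B$ and re-query the decision oracle, updating the pendant gadget and the $\cH$-tree decomposition so that the width and the per-bag $\OPT$ stay bounded by a function of $\ell$ and $\epsilon$. Your single sentence invoking Courcelle and ``lifts back'' does not cover this.
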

\begin{proof}
	We prove $1 \implies 3$. Let $\pmd$ be the assumed \FPTAS for $\Pi$ parameterized by $\mdh$ and $\epsilon$. Let $G = (V, E)$ be the input graph on $n$ vertices, along with an $\cH$-tree decomposition $(T, \chi, L)$. Let $\ell$ denote the width of the given $\cH$-tree decomposition. First, if $\cH$ only contains the empty graph $(\emptyset, \emptyset)$, our goal is to prove that \domset admits an \FPTAS parameterized by (standard) $\tw(\cdot)$. Indeed, since \domset is \FPT parameterized by $\tw(\cdot)$, we assume that $\cH$ contains at least one non-empty graph. In particular, due to hereditary property, $\cH$ must contain a graph on a single vertex. Furthermore, we assume that $\ell \ge 1$ -- otherwise $G$ is a disjoint union of graphs belonging to $\cH$. This implies that $G \in \cH$, in which case all three statements are trivially equivalent.
	
	As before, for any leaf node $t \in V(T)$, let $V_t = \chi(t)$, $H_t = V_t \cap L$, and $R_t = V_t \setminus H_t$. Note that $G[H_t] \in \cH$, and the sets $\{H_t\}_{t \in V(T)}$ are pairwise disjoint, and $|R_t| \le \ell$ for any $t \in V(T)$. 
	
	We iterate over every leaf node $t \in V(T)$, and create a new graph $\tilG_t$ as follows. Let $v^*$ be a new vertex that is not in $V_t$. Let $V'_t = V_t \cup \{v^*\}$. Let $E'_t = E(G[V_t]) \cup \til{E}$, where $\til{E} = \LR{ v^* u : u \in R_t }$. That is, the graph $\til{G}_t$ is obtained by adding a new vertex $v^*$ to the graph $G[V_t]$ and making it adjacent to all vertices of $R_t$. 
	
	Note that $\tilG' - R_t = G[H_t] \uplus G^*$, where $G^* = ({v^*}, \emptyset)$ is an isolated component containing $v^*$. Now, $G[H_t] \in \cH$, and $G^* \in \cH$, which implies that $\tilG' - R_t = G[H_t] \uplus G^* \in \cH$, since $\cH$ is closed under disjoint union. Now, we use \pmd on $\tilG_t$ to obtain $\til{S}_t \subseteq V(\tilG_t)$, such that (i) $\til{S}_t$ is a dominating set for $\tilG_t$, and (ii) $|\tilS_t| \le (1+\epsilon/4) \cdot \OPT(\tilG_t)$. Let $S_t = \tilS_t \setminus \{v^*\}$. 
	
	For a leaf node $t \in V(T)$ if $\ell \le \frac{\epsilon}{29} \cdot |S_t|$, then we say that $t$ is a \emph{good} node; and \emph{bad} otherwise. All internal nodes are classified as bad. Let $U_g, U_b$ denote the sets of good and bad nodes respectively. Let $V_g = \bigcup_{t \in U_g} H_t$, $V_m = \bigcup_{t \in U_g} R_t$, and $V_b = V(G) \setminus (V_g \cup V_m)$. 
	
	Let $D^*$ be an optimal dominating set for $G$, and for any $t \in U_g$, let $D^*_{1, t} = H_t \cap D^*$, and $D^*_{2, t} = R_t \cap D^*$, and $D^*_t = D^*_{1, t} \cup D^*_{2, t}$. Let $D^*_1 = \bigcup_{t \in U_G} D^*_{1, t}$, and $D^*_{2} = \bigcup_{t \in U_g} D^*_{2, t}$. Note that $\LR{D^*_{1, t}}_{t \in U_g}$ is a partition of $D^*_1$, whereas a vertex in $D^*_2$ may belong to multiple $D^*_{2, t}$'s. Finally, let $D^*_b = D^* \cap V_b$. Note that $\LR{D^*_1, D^*_2, D^*_b}$ is a partition of $D^*$. 
	
	Now, let $S_1 \coloneqq \bigcup_{t \in U_g} S_t$, and $S_2 \coloneqq \bigcup_{t \in U_g} R_t$. We wish to add $S_1 \cup S_2$ to our solution. Since each $R_t$ is a separator between $H_t$ and $G - \chi(t)$, we can delete $V_g$ from the graph. However, we cannot simply delete $V_m = S_2$ from the graph -- since the vertices in $S_2$ may already dominate a subset of vertices in $V_b$. Therefore, we need to ``remember'' that we have decided to add $S_2$ into our solution. To this end, we define an \emph{annotated} version of \domset, which we call \ads (note that this is similar to, but not exactly the same as \textsc{Annotated Dominating Set} problem defined in the previous subsection). The \ads problem is defined as follows.
	
	\begin{tcolorbox}[colback=white!5!white,colframe=gray!75!black]
		\ads
		\\\textbf{Input:} An instance $(G, B, W, k)$, where $G$ is a graph, $V(G) = B \uplus W$, and $k$ is a non-negative integer
		\\\textbf{Question:} Does there exist a subset $S \subseteq V(G)$ of size at most $k$ such that (i) $S$ is a dominating set for $G$, and (ii) $B \subseteq S$?
	\end{tcolorbox}
	
	Let $F \coloneqq G[S_2 \uplus V_b]$. We want to find the smallest integer $k'$, such that $(F, S_2, V_b, k')$ is a yes-instance of \ads. Let $S_b$ denote an $(1+\epsilon/2)$-approximate solution on this instance, i.e., (i) $S_b \subseteq S_2 \cup V_b$ is a dominating set for $F$, (ii) $S_2 \subseteq S_b$, and (iii) $|S_b| \le (1+\epsilon/2) \cdot \OPT'(F)$, where $\OPT'(F)$ denotes the size of the optimal solution for the \ads instance. We prove the following technical lemma, the proof of which is rather involved, and thus given after finishing the current proof.
	\begin{restatable}{lemma}{domsetclaimC} \label{lem:domset-technical3}
		\begin{equation}
			|S| = |S_1| + |S_2| + |S_b| \le (1+\epsilon) \cdot \OPT(G)
		\end{equation}
	\end{restatable}
	
	Thus, our task is reduced to finding an approximate solution to \ads on $(F, S_2, V_b, k')$; in fact, we will find an exact solution. To do this, first we prove the following lemma, which shows that \ads can be reduced to the (standard) dominating problem on an auxiliary graph. The proof is given after finishing the current proof.
	
	\begin{restatable}{lemma}{adsTods} \label{lem:ads-to-ds}
		Let $\cI = (G, B, W, k)$ be an instance of \ads. Let $G'$ be the graph obtained by attaching $N = n^2$ distinct pendant vertices to each vertex in $B$, where $n = |V(G)|$. Let $\cI' = (G', k)$ be the resulting instance of \domset. Then $\cI$ is a yes-instance of \ads iff $\cI'$ is a yes-instance of \domset.
	\end{restatable}

	As suggested by \Cref{lem:ads-to-ds}, we create a graph $F'$, as follows. For each $u \in S_2$, let $P_u$ denote the set of $|V(F)|^2$ distinct pendant vertices attached to $u$. By slightly abusing the notation, let $(T, \chi, L)$ be the original $\cH$-tree decomposition, restricted to the vertices of $F$. We modify this to obtain an $\cH$-tree decomposition for $F'$. Consider a vertex $u \in S_2$, and note that $u \in R_t$ for some $t \in V(T)$. We pick one such arbitrary node $t$, and add a child $t_u$, which becomes a leaf in $T$. For this node $t_u$, we define its bag $\chi'(t_u) = P_u \cup \{u\}$, where $P_u$ becomes the set of \emph{base vertices} in $\chi'(t_u)$, and $u$ is a non-base vertex in $\chi'(t_u)$. Note that $G[P_u]$ is a set of isolated vertices, and thus belongs to $\cH$. For all original nodes $t \in V(T)$, we set $\chi'(t) = \chi(t)$. Finally, $L' = \bigcup_{u \in S_2} P_u$, and let $T'$ be the resulting tree. It is easy to see that $(T', \chi', L')$ is a valid $\cH$-tree decomposition for $F'$ of width at most $\ell$. 
	
	Now, we can prove an analogous version of \Cref{lem:fii-replacement} for the instance $(F, k')$ of \domset. It is known that \domset is CMSO-definable and has \FII \cite{fomin2019kernelization}. We highlight the key differences required to perform replacement using the \FII property of \domset. Note that for a node that was earlier classified as \emph{bad}, $\OPT(F[H_t]) = \OPT(G[H_t]) \le 29\ell/\epsilon$, which implies that $\OPT(F[\chi'(t)]) \le 29\ell/\epsilon + \ell \le 30\ell/\epsilon$. On the other hand, if $t$ is a newly added leaf node then $\OPT(F[\chi'(t)]) \le |R_t| \le \ell$. This implies that, when we want to find a replacement for $G[\chi'(t)]$ using an application of \Cref{lem:red2finiteindex}, we want to decide instances of the form $F[\chi'(t)] \oplus Y$, where $|V(Y)|$ is upper bounded by some function of $\ell$. Therefore, $\OPT(F[\chi'(t)] \oplus Y) \le \OPT(F[\chi'(t)]) + |V(Y)|$, which is upper bounded by some function of $\ell$ and $\epsilon$. Therefore, a version of \Cref{lem:fii-replacement} can solve \emph{the decision version} of \ads.
	
	Now, we discuss how to find a solution, using the algorithm for the decision version as an oracle. We try the following for the values of $k = |S_2|, |S_2| + 1, \ldots$, until the first time the decision version returns that $\ads(F, S_2, V(F) \setminus S_2, k)$ is a yes-instance. Let $k^*$ denote this value. 
	
	We maintain a partial solution $S_b$, which is initialized to $S_2$. Consider a leaf node $t \in V(T)$, such that $S_b$ does not already dominate all vertices of $\chi'(t) \cap L'$. We pick an arbitrary \emph{original} vertex $u \in \chi'(t) \cap L'$, and use the algorithm to decide the instance $(F, S_b \cup \LR{u}, V(F) \setminus (S_b \cup \LR{u}), k^*)$ of \ads. If the algorithm returns no, then we conclude that $u$ does not belong to a solution of size $k^*$. Otherwise, we update $S_b$ by adding $u$ to it, and proceed to the next iteration.
	
	Now we discuss the technicalities when we use the reduction from \Cref{lem:ads-to-ds} on the newly created \ads instance. In this reduction, we add $N = n^2$ distinct pendant vertices to $u$, and let $F_u$ be the resulting graph. We add all these vertices to $L'$ and to $\chi'(t)$. We also remove $u$ from $L$, i.e., move it from $H_t$ to $R_t$. Note that the resulting base graph is a disjoint union of $F'[H_t] - u$ and the $N$ pendant vertices, each of which belong to $\cH$ due to the well-behaved property. The size of the modulator increases by $1$. However, since $\OPT(F[\chi(t)])$ is bounded by $29\ell/\epsilon$, therefore, we move at most $29\ell/\epsilon$ vertices from $H_t$ to $L_t$. Thus, the width of the resulting $\cH$-tree decomposition remains bounded by $\ell+ 29\ell/\epsilon$. Furthermore, even for an incorrect choice of $u$, $\OPT(F'[\chi'(t)])$ remains bounded by a function of $\ell$ and $\epsilon$. Thus, a version of \Cref{lem:fii-replacement} can still be used to obtain a solution $S_b$ of size $k^*$ to the original instance $(F, S_2, V_b, k^*)$. This concludes the proof of the theorem.
	
	%Since $\cH$ is , the resulting base graph belongs to $\cH$, thus this is a valid $\cH$-tree decomposition. Furthermore, even for an incorrect choice of $u$, $\OPT(F'[\chi'(t)])$ still remains bounded by a function of $\ell$ and $\epsilon$. Thus, a version of \Cref{lem:fii-replacement} can still be used to obtain a solution $S_b$of size $k^*$ to the original instance $(F, S_2, V_b, k^*)$. This concludes the proof of the theorem.
\end{proof}

\subsection{Proofs of Technical Lemmas}

Now we prove the technical lemmas from the proof of \Cref{thm:equiv-domset}. For convenience, we reiterate the statement of each lemma and then provide the proof.

\domsetclaimC*
\begin{proof}
First, we prove two technical claims, then the bound follows. The first claim is stated below.
\begin{restatable}{claim}{domsetclaimA} \label{cl:domset-technical}
	The following two inequalities hold.
	\begin{equation}
		|S_2| \le \lr{1+\frac{\epsilon}{24}} \cdot (|D^*_1| + |D^*_2|) \label{eqn:domset-main0}
	\end{equation}
	\begin{equation}
		|S_1| + |S_2| \le \sum_{t \in U_g} |S_t| + |R_t| \le \lr{1+\frac{7\epsilon}{11}} \cdot (|D^*_1| + |D^*_2|) \label{eqn:domset-main1}
	\end{equation}
\end{restatable}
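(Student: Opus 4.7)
My plan is to establish a per-node bound $|S_t| \lesssim |D^*_{1,t}|+|R_t|$ for every good leaf $t$. After summing over $t \in U_g$ and exploiting the partition $\sum_{t \in U_g}|D^*_{1,t}| = |D^*_1|$, both inequalities will follow almost immediately. The key technical trick is using the good-node threshold $|R_t| \le \ell \le (\epsilon/29)|S_t|$ to close a self-consistent inequality and eliminate the $|R_t|$ dependence from the right-hand side.

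First, I would exhibit the explicit dominating set $\Lambda_t \coloneqq D^*_{1,t} \cup R_t \cup \{v^*\}$ for $\til{G}_t$. The vertices of $R_t \cup \{v^*\}$ lie in $\Lambda_t$ by construction, and any $u \in H_t$ is dominated in $G$ by some $w \in D^* \cap N_G[u]$; the leaf-bag property of the $\cH$-tree decomposition gives $N_G(u) \subseteq V_t$, hence $w \in D^*_t = D^*_{1,t}\cup D^*_{2,t}$, and $D^*_{2,t} \subseteq R_t$ places $w \in \Lambda_t$. Consequently $\OPT(\til{G}_t) \le |D^*_{1,t}| + |R_t| + 1$, and combining with the $(1+\epsilon/4)$-approximation guarantee on $\til{S}_t$ together with $S_t \subseteq \til{S}_t$ yields
\[
|S_t| \le (1+\epsilon/4)(|D^*_{1,t}| + |R_t| + 1).
\]
For $t \in U_g$ the good-node inequality forces $|S_t| \ge 29\ell/\epsilon \ge 29/\epsilon$, which in turn forces $|D^*_{1,t}| + |R_t| \ge \Omega(1/\epsilon)$, so the additive $+1$ can be absorbed into a $(1+\Oh(\epsilon))$-factor; concretely I would obtain $|S_t| \le \gamma(|D^*_{1,t}| + |R_t|)$ for some $\gamma = 1 + \Oh(\epsilon)$ valid for all $\epsilon \in (0, 1]$.

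Substituting $|R_t| \le (\epsilon/29)|S_t|$ into the previous bound produces $|R_t| \le \beta(|D^*_{1,t}|+|R_t|)$ with $\beta \coloneqq \epsilon\gamma/29$, which rearranges to $|R_t| \le \frac{\beta}{1-\beta}|D^*_{1,t}|$. Summing over $t \in U_g$ and using that the $H_t$'s are pairwise disjoint (so $\{D^*_{1,t}\}_{t \in U_g}$ partitions $D^*_1$) yields $\sum_{t \in U_g}|R_t| \le \frac{\beta}{1-\beta}|D^*_1|$. A routine computation shows $\beta/(1-\beta) \le 1$ for every $\epsilon \in (0,1]$ (in fact $\beta/(1-\beta) = \Oh(\epsilon)$), so $|S_2| \le \sum_{t \in U_g}|R_t| \le |D^*_1| \le (1+\epsilon/24)(|D^*_1|+|D^*_2|)$, proving the first inequality. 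For the second, combining $|S_t| \le \gamma(|D^*_{1,t}|+|R_t|)$ with the bound on $\sum|R_t|$ gives
\[
\sum_{t \in U_g}\bigl(|S_t|+|R_t|\bigr) \le \gamma\,|D^*_1| + (\gamma+1)\tfrac{\beta}{1-\beta}|D^*_1| = \Bigl(\gamma + (\gamma+1)\tfrac{\beta}{1-\beta}\Bigr)|D^*_1|,
\]
and an elementary arithmetic check verifies that the bracketed coefficient is at most $1+7\epsilon/11$ for every $\epsilon \in (0,1]$; since $|D^*_1| \le |D^*_1|+|D^*_2|$ the bound follows.

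The main obstacle is choosing the right dominating set for $\til{G}_t$. The more symmetric alternative $D^*_t \cup \{v^*\}$ gives $\OPT(\til{G}_t)\le|D^*_{1,t}|+|D^*_{2,t}|+1$, but $\sum_{t \in U_g}|D^*_{2,t}|$ overcounts every vertex of $D^*_2$ by the number of modulators $R_t$ containing it, which is not controlled by $|D^*_1|+|D^*_2|$ since non-base vertices can lie in many leaf bags. Using $\Lambda_t = D^*_{1,t}\cup R_t\cup\{v^*\}$ sidesteps this because only the $D^*_{1,t}$ terms (which partition $D^*_1$ exactly) are summed; the $|R_t|$ terms become a self-referential inequality that is harmlessly resolved via the good-node threshold, and the remaining work is purely arithmetic verification of the constants $24$ and $7/11$.
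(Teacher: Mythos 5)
Your proposal is correct and takes essentially the same route as the paper: upper-bound $\OPT(\til{G}_t)$ via an explicit dominating set of $\til{G}_t$, invoke the $(1+\epsilon/4)$-guarantee of $\pmd$, use the good-node threshold $\ell\le\frac{\epsilon}{29}|S_t|$ to absorb the small additive terms, and sum over $t\in U_g$ using the disjointness of the $H_t$'s. The only substantive departure is your choice of certifying set $\Lambda_t=D^*_{1,t}\cup R_t\cup\{v^*\}$ rather than the paper's $D^*_{1,t}\cup D^*_{2,t}\cup\{v^*\}$; these give the same quality of bound since $|D^*_{2,t}|\le|R_t|\le\ell$, and your ``overcounting'' concern about the $D^*_{2,t}$ route is actually moot: the paper never sums $\sum_t|D^*_{2,t}|$ over $t$, but instead immediately replaces $|D^*_{2,t}|+1\le 2\ell$ and lets the good-node condition convert $\ell$ into a small multiple of $|D^*_{1,t}|$.

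The one place where your argument is incomplete rather than wrong is the arithmetic you relegate to ``an elementary check.'' Leaving $\gamma=1+\Oh(\epsilon)$ unspecified is not safe: if the $+1$ were only absorbed into a $(1+\epsilon)$-factor, $\gamma$ would be roughly $1+5\epsilon/4$ and the quantity $\gamma+(\gamma+1)\frac{\beta}{1-\beta}$ would exceed $1+\frac{7\epsilon}{11}$ already at first order in $\epsilon$. One must actually track the constant: from $|S_t|\le(1+\epsilon/4)(|D^*_{1,t}|+|R_t|+1)$ and $|S_t|\ge 29\ell/\epsilon\ge 29/\epsilon$ (using $\ell\ge 1$), one gets $|D^*_{1,t}|+|R_t|\ge 22/\epsilon$, hence the $+1$ is absorbed by $(1+\epsilon/22)$, giving $\gamma\le(1+\epsilon/4)(1+\epsilon/22)$ and $\beta=\epsilon\gamma/29\le\epsilon/20$ for $\epsilon\le 1$. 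With these explicit bounds the bracketed coefficient is at most $1+\tfrac{27\epsilon}{88}+0.11\epsilon<1+\tfrac{7\epsilon}{11}$, so your conclusion does go through; the paper gets the same effect by deriving $\ell\le\frac{\epsilon}{24}|D^*_{1,t}|$ explicitly before running the final product of factors. In short: correct structure, correct key idea, but you owe the reader the constant in the absorption step, since it is not a free parameter.
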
	
\begin{proof}
	First, we consider on a particular $t \in U_g$, and focus on showing the following inequality.
	\begin{equation}
		|S_t| + |R_t| \le (1+\epsilon) \cdot |D^*_t| = (1+\epsilon) \cdot (|D^*_{1, t}| + |D^*_{2, t}|) \label{eqn:cl-domset-1}
	\end{equation}
	First, we note that $D^*_{1, t} \cup D^*_{2, t} \cup \LR{v^*}$ is a dominating set for $\tilG_t$. It follows that, 
	\begin{equation}
		\OPT(\tilG_t) \le |D^*_{1, t}| + |D^*_{2, t}| + 1\label{eqn:domset-1}
	\end{equation}
	Now, note that, 
	\begin{equation}
		\OPT(\tilG_t) \le |\tilS_t| \le \lr{1+\frac{\epsilon}{4}} \cdot \OPT(\tilG_t) \le \lr{1+\frac{\epsilon}{4}} \cdot \lr{|D^*_{1, t}| + |D^*_{2, t}| + 1} \label{eqn:domset-2}
	\end{equation}
	Now, note that since $t$ is a good node, $|S_t| \ge |\tilS_t|-1 \ge \frac{29\ell}{\epsilon}$. Therefore, $|\tilS_t| \ge \frac{28\ell}{\epsilon}$ since $\ell\ge 1$. Combining this with (\ref{eqn:domset-2}), we obtain that 
	\begin{equation*}
		2 \cdot (|D^*_{1, t}| + \ell + \ell) \ge \lr{1 + \frac{\epsilon}{4}} \cdot \lr{|D^*_{1, t}| + |D^*_{2, t}| + 1} \ge |S_t| \ge \frac{28\ell}{\epsilon}
	\end{equation*}
	Then, from the first and the last term in this sequence of inequalities, we obtain that 
	\begin{equation}
		|D^*_{1, t}| + 4\ell \ge \frac{28\ell}{\epsilon}\ \implies\ |D^*_{1, t}| \ge \frac{24\ell}{\epsilon}\ \implies\ 1 \le \ell \le \frac{\epsilon}{24} \cdot |D^*_{1,t}| \label{eqn:domset-3}
	\end{equation}
	It follows that 
	\begin{align}
		|S_t| \le |\tilS_t| &\le \lr{1 + \frac{\epsilon}{4}} \cdot (|D^*_{1,t}| + |D^*_{2,t}| +  1) \tag{From (\ref{eqn:domset-2})}
		\\&\le \lr{1 + \frac{\epsilon}{4}} \cdot \lr{ |D^*_{1,t}| + \frac{2\epsilon}{24} \cdot |D^*_{1,t}| } \tag{from (\ref{eqn:domset-3})}
		\\&\le \lr{1 + \frac{\epsilon}{2}} \cdot |D^*_{1,t}| \label{eqn:domset-4}
	\end{align}
	Finally, consider 
	\begin{align}
		|S_t| + |R_t| &\le \lr{1 + \frac{\epsilon}{11}} |S_t| \tag{Since $|R_t| \le \ell \le \frac{\epsilon}{11} |S_t|$}
		\\&\le \lr{1 + \frac{\epsilon}{2}} \cdot \lr{1 + \frac{\epsilon}{11}} \cdot |D^*_{1,t}| \tag{From (\ref{eqn:domset-4})}
		\\&\le (1+\frac{7\epsilon}{11}) \cdot |D^*_{1,t}| 
	\end{align}
	Now, adding (\ref{eqn:cl-domset-1}) over all $t \in U_g$, we obtain,
	\begin{align*}
		|S_1| + |S_2| &\le \sum_{t \in U_g} |S_t| + |R_t| 
		\\&\le \lr{1+\frac{7\epsilon}{11}} \cdot \sum_{t \in U_g} |D^*_{1, t}|
		\\&\le \lr{1+\frac{7\epsilon}{11}} \cdot |D^*_1|
	\end{align*}
	This shows (\ref{eqn:domset-main1}). 
	
	Finally, since $|R_t| \le \ell \le \frac{\epsilon}{24} \cdot |D^*_{1,t}|$, we obtain (\ref{eqn:domset-main0}) as follows.
	\begin{align*}
		|S_2| &\le \sum_{t \in U_g} |R_t| \le \sum_{t \in U_g} \frac{\epsilon}{24} \cdot \sum_{t \in U_g} |D^*_{1,t}| \le \frac{\epsilon}{24} \cdot |D^*_1| 
	\end{align*}
\end{proof}
Next, we have the following.
\begin{restatable}{claim}{domsetclaimB} \label{cl:domset-technical2}
	\begin{equation}
		|S_b| \le \lr{1+\epsilon} \cdot |D^*_b| + \frac{\epsilon}{16} \cdot |D^*_1| \label{eqn:domset-main2}
	\end{equation}
\end{restatable}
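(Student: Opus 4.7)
The plan is to exhibit an explicit feasible solution to the \ads{} instance $(F, S_2, V_b, \cdot)$ built out of the optimal dominating set $D^*$, then combine the resulting upper bound on $\OPT'(F)$ with the approximation guarantee of $S_b$ and the size bound on $S_2$ from \Cref{cl:domset-technical}.

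First, I will show that the set $Q := S_2 \cup D^*_b$ is a feasible \ads{} solution for $(F, S_2, V_b)$. Trivially $S_2 \subseteq Q$, so the blue-constraint is met. For domination, I need to check that every vertex of $V(F) = S_2 \uplus V_b$ has a closed neighbour in $Q$ with respect to $F$. Vertices of $S_2$ dominate themselves. For a vertex $v \in V_b$, the key observation is that $v$ has no neighbour in $V_g = \bigcup_{t \in U_g} H_t$ in $G$. Indeed, if $v$ were adjacent to some $u \in H_{t'}$ with $t' \in U_g$, then the tree-decomposition axiom forces both endpoints into a common bag; since $u \in L$ occurs only in the leaf bag $\chi(t')$, this bag must contain $v$ as well, but then $v \in H_{t'} \cup R_{t'} \subseteq V_g \cup V_m$, contradicting $v \in V_b$. (I will split the argument into the two subcases $v \in L \setminus V_g$ and $v \notin L$ for clarity.) Hence any $N_G$-dominator of $v$ lies in $V_b \cup S_2$, and so $v$ is dominated in $G$ by $D^* \cap (V_b \cup S_2) = D^*_b \cup D^*_2 \subseteq D^*_b \cup S_2 = Q$. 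Since the induced subgraph $F$ retains all edges within $V_b \cup S_2$, the same vertex dominates $v$ in $F$.

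With feasibility established, I obtain
\[
  \OPT'(F) \;\le\; |Q| \;\le\; |S_2| + |D^*_b|.
\]
Combining with the approximation guarantee $|S_b| \le (1+\epsilon/2)\,\OPT'(F)$ gives
\[
  |S_b| \;\le\; (1+\epsilon/2)\,|S_2| \;+\; (1+\epsilon/2)\,|D^*_b|.
\]
The proof of \Cref{cl:domset-technical} actually establishes the stronger bound $|S_2| \le \sum_{t \in U_g} |R_t| \le \tfrac{\epsilon}{24}\,|D^*_1|$ (the RHS of~(\ref{eqn:domset-main0}) only uses this piece). Plugging this in yields
\[
  (1+\epsilon/2)\,|S_2| \;\le\; \Bigl(\tfrac{1}{24} + \tfrac{\epsilon}{48}\Bigr)\epsilon\,|D^*_1|,
\]
and a direct calculation shows $\tfrac{1}{24} + \tfrac{\epsilon}{48} \le \tfrac{1}{16}$ for $\epsilon \le 1$, which may be assumed without loss of generality. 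Using this together with $(1+\epsilon/2) \le (1+\epsilon)$ completes the estimate
\[
  |S_b| \;\le\; \tfrac{\epsilon}{16}\,|D^*_1| \;+\; (1+\epsilon)\,|D^*_b|,
\]
which is precisely~(\ref{eqn:domset-main2}).

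The main obstacle is the feasibility step: one must argue carefully that no edge of $G$ crosses between $V_g$ and $V_b$, which in turn requires tracing through the $\cH$-tree-decomposition axioms (uniqueness of the leaf bag for base vertices, and the bag-containment property for edges) separately for the base and non-base vertices of $V_b$. Once this separator-like property is in hand, the rest of the argument is just combining the approximation guarantee with Claim~\ref{cl:domset-technical} and a small arithmetic check on the constants, taking advantage of the extra slack hidden in the proof of~(\ref{eqn:domset-main0}).
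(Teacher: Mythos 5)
Your proof is correct and follows the same route as the paper's: you exhibit $D^*_b \cup S_2$ as a feasible \ads solution on $F$ (a feasibility fact the paper asserts without justification, and you supply the separator argument), bound $\OPT'(F) \le |D^*_b| + |S_2|$, and combine the $(1+\epsilon/2)$-approximation guarantee with the bound $|S_2| \le \frac{\epsilon}{24}|D^*_1|$. Your note that this last bound is what the proof of Claim~\ref{cl:domset-technical} actually establishes (rather than the weaker display~(\ref{eqn:domset-main0})) is apt, since the paper's own calculation silently relies on it.
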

\begin{proof}
	Note that $D^*_b \cup S_2 \subseteq S_2 \cup V_b$ is a dominating set for $F$ and it contains $S_2$ as a subset. Therefore, $\OPT'(F) \le |D^*_b| + |S_2|$. Now, consider
	\begin{align*}
		|S_b| &\le \lr{1+\frac{\epsilon}{2}} \cdot \lr{|D^*_b| + |S_2|}
		\\&\le \lr{1 + \epsilon} \cdot |D^*_b| + \lr{1+\frac{\epsilon}{2}} \cdot \frac{\epsilon}{24} \cdot |D^*_1| \tag{From \Cref{cl:domset-technical}}
		\\&\le (1+\epsilon) \cdot |D^*_b| + \frac{\epsilon}{16} \cdot |D^*_1|
	\end{align*}
	This concludes the proof of the claim. 
\end{proof}
Now, we are ready to conclude the proof of \Cref{lem:domset-technical3}. From \Cref{cl:domset-technical} and \Cref{cl:domset-technical2}, we have that
\begin{align*}
	|S| = |S_1| + |S_2| + |S_b| &\le \lr{1 + \frac{7\epsilon}{11}} \cdot |D^*_1| + (1+\epsilon) \cdot |D^*_b| + \frac{\epsilon}{16} \cdot   |D^*_1|
	\\&\le (1+\epsilon) \cdot  |D^*_1|  + (1+\epsilon) \cdot |D^*_b|
	\\&\le (1+\epsilon) \cdot |D^*| \tag{Since $|D^*| = |D^*_1| + |D^*_2| + |D^*_b|$}
	\\&= (1+\epsilon) \cdot \OPT(G)
\end{align*}
\end{proof}

\adsTods*
\begin{proof}
In the forward direction, consider a dominating set $S \subseteq V(G)$ of size $k$, such that $B \subseteq S$. In the graph $G'$, $S$ dominates all vertices of $V(G)$. A vertex $u \in V(G') \setminus V(G)$ is a pendant vertex adjacent to some $v \in B$. Since $B \subseteq S$, $u$ is dominated by $v \in S$.

In the reverse direction, consider a dominating set $S' \subseteq V(G')$ of size $k \le |V(G)|$. Since $S'$ dominates all vertices of $V(G')$, in particular, it dominates the vertices of $V(G)$. Now, suppose there exists some $v \in B$ such that $v \not\in S'$. Then, all of its $n^2$ pendant neighbors must belong to $S'$, since $S'$ is a dominating set. However, since $|S'| \le k < n^2$, this is a contradiction. 
\end{proof}

%!TEX root = main.tex
\section{FPT-ASes for Connectivity Problems} \label{sec:cvc}
In this section %we give give \FPTASes for {\sc Connected Vertex Cover} and {\sc Connected Dominating Set}. Here, 
we show that bucket versus ocean also yields \FPTAS{}es for connectivity problems such as  {\sc Connected Vertex Cover} and {\sc Connected Dominating Set}.  Let $\cH$ be an apex-minor closed family. We are given a graph $G = (V, E)$, and a set $M \subseteq V(G)$ of size at most $p$ such that $G' = G\setminus M \in \cH$, Let $H = V(G) \setminus M$.

%\todo[inline]{Intro para.. in this section..}

%!TEX root = main.tex
%\subsection{FPT-ASes for Connectivity Problems} \label{subsec:cvc-fptas}

%\medskip\noindent\textbf{\textsf{Independent Set.}} By iterating over all subsets of $M$, we guess the intersection of an optimal independent set $I$ with $M$. Consider the iteration corresponding to $Y = I \cap M$. Then, let $R = N(Y) \cap H$. We compute a $(1-\epsilon)$-approximation $I'$ to the maximum independent set in the graph $G' \setminus R$, using an $2^{\Oh(1/\epsilon)} \cdot \polyn$ time algorithm from Fomin et al.~\cite{FominLS18Grid}. It follows that $|Y| + |I'| \ge |Y| + (1-\epsilon) \cdot |I \setminus Y| \ge (1-\epsilon) \cdot |I| = \OPT(G)$. Thus, we get the following result.
%
%\begin{theorem} \label{thm:indset-fptas-modh}
%	Let $\cH$ be an apex-minor free graph family. Then, there exists an \FPTAS for \indset parameterized by $\mdh(\cdot)$ that runs in time $2^{\Oh(\mdh(G) + 1/\epsilon)} \cdot \polyn$.
%\end{theorem}

%We are given a graph $G$ with a modulator $M$ to a family of graphs that excludes some fixed apex graph $H$ as minor. 
We first design \FPTAS for {\sc Connected Vertex Cover}. 
Given an $\epsilon >0$, we fix an $\epsilon'=\frac{\epsilon}{5}$. Our objective is to find a connected vertex cover $S \subseteq V(G)$ in $G$ of size $|S| \leq (1 + \epsilon) \ \OPT(G)$ where $\OPT(G)$ denotes the size of a smallest connected vertex cover of $G$. Recall that a vertex set $S \subseteq V(G)$ is called connected vertex cover if $S$ is a vertex cover of $G$ and $G[S]$ is connected. Without loss of generality, we assume that the graph is connected. Indeed,  we can safely delete isolated vertices, and if we have at least two connected components of size at least $2$, then it is trivially no-instance.   
%as we can safely remove those isolated vertices because those are not part of any optimum solution}. We will essentially &use our favorite "bucket vs ocean" procedure. 

Observe that a connected vertex cover of $G$  may not be connected vertex cover for its induced subgraphs. More elaborately, for a  graph $G$, if $G$ has a connected vertex cover $S$, then for any induced subgraph $G[Y]$ of $G$, $(S \cap Y)$ is a vertex cover but not necessarily a connected vertex cover for $G[Y]$. However, observe that every connected component $C$ of $G-M$ intersects with a neighbor of $M$. That is, 
$C\cap N(M)\neq \emptyset$. 
%On top of that, $(S \cap Y)$ has a special property that every connected component of it {\em sees} a vertex of $M$. 
This leads us to the following variant of {\sc Vertex Cover} problem which is used later in our \FPTAS.

\begin{tcolorbox}[colback=white!5!white,colframe=gray!75!black]
	\xvc
	\\\textbf{Input:} A graph $G'$, a set $X \subseteq V(G')$
	\\\textbf{Objective:} Find a minimum vertex cover, say $Z$, of $G'$ such that every connected component of $G'[Z]$ contains  a vertex from $X$. 
\end{tcolorbox}

Fomin et al.\ \cite[Section $4.1$]{FominLS18Grid} has  shown that \xvc admits an \EPTAS, with running time $2^{\Oh(1/\epsilon)} n^{\Oh(1)}$ on apex minor free graphs. 
%On the other hand, this problem is known to admit a single-exponential \FPT algorithm parameterized by the solution size via the results from Fomin et al.\ \cite{FominLMS12Deletion}. Thus, we get the following corollary.
%
%\begin{lemma}
%	\label{lem:xvc}
%	For any fixed $\epsilon'> 0$, there exist an algorithm for $\xvc$ which outputs a vertex cover $S$ of size $|S| \leq (1 + \epsilon') \OPT_{X-VC}(G')$ where $\OPT_{X-VC}(G')$ is a optimum vertex cover for $\xvc$ and we can compute it in time .......
%\end{lemma}
We use this result with $G'$ and $X=N(M)$, and obtain a solution $S'$ of  \xvc, of size $(1 + \epsilon') \OPT_{{\sf SIVC}}(G')$. Here, $\OPT_{{\sf SIVC}}(G')$ denotes the size of a smallest vertex cover, say $Z$, of $G'$ such that every connected component of $G'[Z]$ contains  a vertex from $X$. Next we prove two important inequalities. 
%useful lemmas

\begin{lemma}
	$\OPT_{{\sf SIVC}}(G') \leq \OPT(G)$
\end{lemma}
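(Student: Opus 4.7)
The plan is to take an optimal connected vertex cover $S$ of $G$ (so $|S|=\OPT(G)$) and show that $Z := S \setminus M = S \cap V(G')$ is a feasible solution to the \xvc instance $(G', X)$ with $X = N(M)$. Feasibility then immediately gives $\OPT_{\sf SIVC}(G') \le |Z| \le |S| = \OPT(G)$.

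First I would verify that $Z$ is a vertex cover of $G'$: every edge $uv$ of $G'=G-M$ is an edge of $G$ with both endpoints in $V(G)\setminus M$, and since $S$ covers $uv$ in $G$, at least one of $u,v$ lies in $S \cap (V(G)\setminus M) = Z$.

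The substantive step is to show that every connected component $C$ of $G'[Z]$ contains a vertex of $N(M)$. I would exploit the connectivity of $G[S]$ via a small case analysis on whether $C$ is all of $S$ or a strict subset. If $C \subsetneq S$, then by connectivity of $G[S]$ there is an edge $u'v'$ of $G[S]$ with $u' \in C$ and $v' \in S \setminus C$. Were $v'$ to lie in $Z$, the edge $u'v'$ would also be an edge of $G'[Z]$ and so $u'$ and $v'$ would belong to the same component of $G'[Z]$, contradicting $v' \notin C$. Hence $v' \in S \setminus Z = S \cap M$, which gives $u' \in C \cap N(M)$ as desired. In the remaining case $C = S$, we must have $S \cap M = \emptyset$ (so $Z = S$). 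Since $G$ is connected and $M$ is a nonempty proper subset of $V(G)$ (the case $M = \emptyset$ is outside the scope of this routine and the case $M=V(G)$ is vacuous as $G'$ is empty), there exists an edge $vu \in E(G)$ with $v \in M$ and $u \in V(G) \setminus M$. Since $v \notin S$ but $S$ is a vertex cover, $u \in S = Z = C$, and clearly $u \in N(M)$.

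The main obstacle, modest as it is, is the second case: one must be careful that $G[S]$'s connectivity can still be used even when $C=S$, which is why we need the existence of an $M$-to-$(V\setminus M)$ edge and the vertex-cover property to place such a neighbor into $C$. With both claims established, $Z$ is feasible for \xvc on $(G', N(M))$, and the size bound $|Z| \le |S| = \OPT(G)$ completes the proof.
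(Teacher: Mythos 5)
Your proof is correct and takes the same route as the paper: take an optimal connected vertex cover $S$ of $G$, restrict it to $V(G')$, and show the restriction is feasible for \xvc on $(G', N(M))$, which immediately bounds $\OPT_{{\sf SIVC}}(G')$. The paper simply asserts that every component of the restriction has a neighbor in $M$; your case split on whether a component $C$ equals $S$ or is a proper subset of $S$ (using connectivity of $G[S]$ in the latter case, and connectedness of $G$ together with the vertex-cover property in the former) supplies the justification the paper leaves implicit.
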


\begin{proof}
	Let $Y$ be an optimum solution for the {\sc Connected Vertex Cover} problem in $G$. Hence, $|Y| =  \OPT(G)$. Recall that $G' = G \setminus M$ and $X = N(M)$. Observe that, every connected component of $Y \cap G'$ has a neighbour in $M$.  Thus, $Y \cap G'$ is also a feasible solution for $\xvc$ of $G'$. Hence, $\OPT_{{\sf SIVC}}(G')\leq |Y \cap V(G')| \leq |Y| = \OPT(G)$. This concludes the proof of the lemma.
\end{proof}

\begin{lemma}
\label{lem:connectedvc}
%	\label{lem:makeconnected}
	$\OPT(G) \leq \OPT_{{\sf SIVC}}(G')  + 2|M|$. Further,  given a solution $S''$ of size  $\OPT_{{\sf SIVC}}(G')$, we can construct a connected vertex cover of $G$ of size at most $(|S''| +2|M|)$ in polynomial time.
\end{lemma}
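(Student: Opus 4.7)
The plan is to build a connected vertex cover $Y$ of $G$ from $S''$ by adding $M$ and a few ``gluing'' vertices, in polynomial time, with $|Y|\le |S''|+2|M|$; the inequality then follows by taking $S''$ to be optimal for \xvc. We may assume $G$ is connected as remarked before the lemma statement, and note $S''\cap M=\emptyset$ since $S''\subseteq V(G')$.

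First, I would set $Z:=S''\cup M$ and observe that $Z$ is a vertex cover of $G$: edges inside $G'$ are covered by $S''$, and every edge incident to $M$ is covered since $M\subseteq Z$. Thus $|Z|=|S''|+|M|$, though $G[Z]$ may be disconnected. The first key structural fact is that every connected component of $G[Z]$ contains at least one vertex of $M$. Indeed, by the defining property of $\xvc$ with $X=N(M)$, each connected component of $G'[S'']$ contains a vertex of $N(M)$, hence has a neighbour in $M$ and in $G[Z]$ merges with that $M$-vertex. Therefore the number of connected components of $G[Z]$ is at most $|M|$.

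The second key observation is that $V(G)\setminus Z$ is an independent set in $G$, because $Z$ is a vertex cover. Consequently, (i) two distinct components of $G[Z]$ cannot be joined by an edge of $G$, and (ii) any path in $G$ whose endpoints lie in $Z$ but whose internal vertices lie in $V(G)\setminus Z$ has at most one internal vertex. Forming the auxiliary graph $A$ whose nodes are the connected components of $G[Z]$ and whose edges are given by the vertices $w\in V(G)\setminus Z$ that have neighbours in two different components, the connectivity of $G$ lifts to connectivity of $A$: any $G$-path between vertices in different components of $G[Z]$ must, at some step, ``jump'' between two components via a single vertex of $V(G)\setminus Z$, giving an $A$-edge.

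Let $c\le |M|$ be the number of components of $G[Z]$. Compute a spanning tree of $A$, which has $c-1\le |M|-1$ edges; for each such edge pick one witnessing vertex in $V(G)\setminus Z$ and add it to $Z$ to form $Z'$. Then $G[Z']$ is connected (the added vertices merge all components of $G[Z]$ along the spanning tree), $Z'\supseteq Z$ remains a vertex cover, and
\[
|Z'|\le |S''|+|M|+(|M|-1)\le |S''|+2|M|.
\]
Setting $Y:=Z'$ gives the desired connected vertex cover, and applying this construction to an optimal $S''$ with $|S''|=\OPT_{\sf SIVC}(G')$ yields $\OPT(G)\le \OPT_{\sf SIVC}(G')+2|M|$. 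All steps (computing components of $G[Z]$, building $A$, extracting a spanning tree) take polynomial time.

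The main obstacle is making the bound on additional vertices tight. It would be too weak to bound the number of Steiner vertices only by $|V(G)|$; the leverage comes from combining two facts -- that $V(G)\setminus Z$ is independent, so each ``gluing path'' uses just one extra vertex, and that every component of $G[Z]$ already contains an $M$-vertex, so there are at most $|M|$ components and thus $|M|-1$ gluings. Together these give the $+2|M|$ slack exactly.
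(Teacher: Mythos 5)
Your proof is correct and follows essentially the same approach as the paper: start from $S''\cup M$, observe that each of its at most $|M|$ components meets $M$, and then add at most $|M|-1$ further vertices to glue the components together. The only difference is presentational -- you phrase the gluing step via an auxiliary component graph and a spanning tree, and explicitly justify that each gluing needs only one vertex because $V(G)\setminus(S''\cup M)$ is independent, whereas the paper performs the same merges iteratively.
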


\begin{proof}
Observe that $(S'' \cup M)$ is a vertex cover of $G$ but might not be a connected vertex cover of $G$. However, each connected component of $S''$ can be classified by a vertex of $M$, which implies that when we add $M$ to $S''$, then $G[S'' \cup M]$ has at most $|M|$ connected components. If $G[S''\cup M]$ is connected, we are done. Otherwise, since $G$ is connected there exists a vertex, say $v$, that has neighbors in two connected component of $G[S''\cup M]$. 
% $G'\setminus S''$ has a vertex which is neighbor to both of them. 
Add $v$ to $(S'' \cup M)$. Observe that every such addition reduces the number of connected component of $(S'' \cup M)$ by at least $1$. Since $G[S'' \cup M]$ has at most $|M|$ connected components, we need to repeat this procedure at most $(|M| -1)$ times to make it connected. At the end, the size of the connected vertex cover solution will be at most $(|S''| +2|M| - 1)$. The bound of $\OPT(G) \leq \OPT_{{\sf SIVC}}(G')  + 2|M|$ follows from the construction. 
The whole process takes polynomial time which concludes the proof.
\end{proof}

%\begin{lemma}
%	\label{lem:connectedvc}
%	$\OPT_{CVC}(G) \leq \OPT_{X-VC}(G') + 2|M|$
%\end{lemma}
%
%\begin{proof}
%	Let $Z$ be an optimum solution for the problem $\xvc$ in $G'$. Hence, $|Z| = \OPT_{X-VC}(G')$. Using Lemma~\ref{lem:makeconnected}, we construct a feasible solution of connected vertex cover problem of size at most $(\OPT_{X-VC}(G') + 2|M|)$. Hence, $\OPT_{CVC}(G) \leq \OPT_{X-VC}(G') + 2|M|$ which concludes the proof.
%\end{proof}

After getting $S'$, we compare the size of $S'$ with the size of the modulator $M$ and accordingly we divide it into two cases.

\begin{itemize}
	\item Case 1: $|M| > \epsilon'|S'|$, implies that $|S'| < |M|/ \epsilon'$.
% That means size of the approximate solution $S'$ for $\xvc$ in $G'$ is upper bounded by $f(|M|, \epsilon')$ for %some computable function $f$. 
This implies that $\OPT_{{\sf SIVC}} \leq |S'| \leq |M|/ \epsilon'$. Further, using Lemma~\ref{lem:connectedvc}, we can say that $\OPT(G) \leq \OPT_{{\sf SIVC}} + 2|M| \leq \frac{|M|}{\epsilon'}+2|M|$. Now, using the know parameterized algorithm for {\sc Connected Vertex Cover} that checks whether $G$ has a connected vertex cover of size $k$ in time 
$2^kn^{\Oh(1)}$~\cite{DBLP:conf/swat/Cygan12}, we can find a smallest connected vertex cover $S$ of $G$ in time $2^{\Oh(p/\epsilon)}n^{\Oh(1)}$. 
%
%Further using Lemma~\ref{lem:exact}, we plug in the value of $k$ to be  $g(|M|, \epsilon')$ and compute a connected vertex cover $S$ in FPT time parameterized by modulator and $\epsilon'$ . 
Hence, in this case, we solve our problem exactly.
	
\item Case 2: $|M| \leq \epsilon'|S'|$. Starting from $S'$, using Lemma~\ref{lem:connectedvc}, we construct a connected vertex cover $S$ of size at most $|S' |+2|M|\leq (1 + 2 \epsilon')|S'| \leq (1 + 2 \epsilon')(1 + \epsilon')\OPT(G) = (1 + 3 \epsilon' + 2\epsilon'^2) \OPT(G) \leq (1 + 5 \epsilon')\OPT(G) = (1 + \epsilon)\OPT(G)$. Hence, in this case, we get an $(1 + \epsilon)$ approximate solution, in time $2^{\Oh(1/\epsilon)} n^{\Oh(1)}$. 
\end{itemize}
This leads to the following theorem. 
\begin{theorem} \label{thm:cvc-fptas-modh}
	Let $\cH$ be an apex-minor free graph family. Then, there exists an \FPTAS for {\sc Connected Vertex Cover}  parameterized by $\mdh(\cdot)$ that runs in time $2^{\Oh(\mdh(G) + 1/\epsilon)} \cdot \polyn$.
\end{theorem}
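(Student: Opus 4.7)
The plan is to emulate the ``bucket vs.\ ocean'' strategy already used for \vc, but with a twist needed because a connected vertex cover of $G$ does not restrict to a connected vertex cover of $G - M$. To cope with this, I would introduce the auxiliary problem \xvc, which asks for a minimum vertex cover $Z$ of a graph $G'$ such that every connected component of $G'[Z]$ contains a vertex from a prescribed set $X \subseteq V(G')$. The relevance of \xvc comes from the observation that, if $S$ is any connected vertex cover of $G$, then $S \cap V(G')$ is a vertex cover of $G' = G - M$ in which each connected component must contain a neighbor of $M$, so $S \cap V(G')$ is feasible for \xvc on $(G', N(M))$. I would invoke the \EPTAS of Fomin et al.\ for \xvc on apex-minor free graphs in time $2^{\Oh(1/\epsilon)} n^{\Oh(1)}$ to obtain, for $\epsilon' := \epsilon/5$, a solution $S'$ of size at most $(1+\epsilon')\OPT_{\sf SIVC}(G')$.

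Next, I would establish the two key inequalities linking $\OPT_{\sf SIVC}(G')$ and $\OPT(G)$. The first, $\OPT_{\sf SIVC}(G') \le \OPT(G)$, follows directly from the observation above applied to an optimal connected vertex cover of $G$. The second, $\OPT(G) \le \OPT_{\sf SIVC}(G') + 2|M|$, would be proved constructively: given a \xvc-solution $S''$ in $G'$, the set $S'' \cup M$ is already a (not necessarily connected) vertex cover of $G$, and by definition each connected component of $G'[S'']$ contains a neighbor of some vertex in $M$, so $G[S'' \cup M]$ has at most $|M|$ connected components. Since $G$ is connected, we can greedily add at most $|M| - 1$ extra vertices, each merging two components, to obtain a connected vertex cover, giving a polynomial-time lifting procedure of the required cost.

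With these two inequalities in hand, I would perform the familiar case analysis. If $|M| > \epsilon' |S'|$, then $|S'| < |M|/\epsilon'$, so $\OPT_{\sf SIVC}(G') \le |S'| \le p/\epsilon'$ and hence, by the second inequality, $\OPT(G) \le p/\epsilon' + 2p = \Oh(p/\epsilon)$. Here I can call the known $2^k n^{\Oh(1)}$ parameterized algorithm for {\sc Connected Vertex Cover} of Cygan to get an exact optimum in time $2^{\Oh(p/\epsilon)} n^{\Oh(1)}$. If instead $|M| \le \epsilon' |S'|$, then using the constructive lifting I produce a connected vertex cover $S$ of size at most $|S'| + 2|M| \le (1+2\epsilon')(1+\epsilon') \OPT(G) \le (1+5\epsilon') \OPT(G) = (1+\epsilon)\OPT(G)$, in the \EPTAS time $2^{\Oh(1/\epsilon)} n^{\Oh(1)}$.

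I do not anticipate a major obstacle beyond correctly identifying \xvc as the ``right'' relaxation: it is crucial that the \xvc-solution on $G'$ can be \emph{augmented cheaply} into a connected solution on $G$, and that augmentation must cost only $\Oh(|M|)$ extra vertices rather than something depending on $|S'|$. The component-counting argument (each component of the \xvc-solution hits $N(M)$, so merging uses at most $|M|-1$ vertices) is precisely what makes the ``ocean swallows the bucket'' bound tight enough to give the claimed $(1+\epsilon)$ factor with the choice $\epsilon' = \epsilon/5$. Combining the two cases yields the stated running time $2^{\Oh(\mdh(G) + 1/\epsilon)} \cdot \polyn$.
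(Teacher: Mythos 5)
Your proposal matches the paper's proof essentially step by step: it introduces the same auxiliary \xvc relaxation, invokes the same \EPTAS of Fomin et al.\ for \xvc on apex-minor free graphs, proves the same pair of inequalities $\OPT_{\sf SIVC}(G') \le \OPT(G) \le \OPT_{\sf SIVC}(G') + 2|M|$ via the same component-merging lifting argument, and performs the identical bucket-vs-ocean case split with $\epsilon' = \epsilon/5$, falling back to Cygan's $2^k n^{\Oh(1)}$ exact algorithm when the optimum is small. This is the paper's approach, correctly reproduced.
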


The outline of \FPTAS for {\sc Connected Dominating Set} is identical to the one used for {\sc Connected Vertex Cover}. Here, we need to obtain EPTAS for following variant of  {\sc Connected Dominating Set} problem, which we call {\sc Set Intersecting Dominating Set}. Here, we are given a graph $G'$, a set of vertices $X$, and the objective is to find a minimum sized dominating set, say $Z$, that dominates all the  vertices in $V(G')\setminus X$ and every connected component of $G'[Z]$ contains a vertex of $X$. For our case $G'=G-M$ and $X=N(M)$.  Fomin et al.\ \cite[Section $4.1$]{FominLS18Grid} has  shown that this problem admits an \EPTAS, with running time $2^{\Oh(1/\epsilon)} n^{\Oh(1)}$ on  apex minor free graphs.  Finally, the result follows from the fact that $\OPT_{{\sf SIDS}}(G') \leq \OPT(G)$, $\OPT(G) \leq \OPT_{{\sf SIDS}}(G')  + 3|M|$, and that given a solution $S''$ of size  $\OPT_{{\sf SIDS}}(G')$, we can construct a connected dominating set of $G$ of size at most $(|S''| +3|M|)$ in polynomial time. This leads to the following result.

\begin{theorem} \label{thm:cvc-fptas-modh}
	Let $\cH$ be an apex-minor free graph family. Then, there exists an \FPTAS for {\sc Connected Dominating Set}  parameterized by $\mdh(\cdot)$ that runs in time $2^{\Oh(\mdh(G) + 1/\epsilon)} \cdot \polyn$.
\end{theorem}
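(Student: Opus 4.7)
The plan is to mirror the preceding argument for {\sc Connected Vertex Cover}. Given an input graph $G$ and a modulator $M$ of size $p = \mdh(G)$ to the apex-minor-free family $\cH$, I set $G' = G - M \in \cH$ and $X = N(M) \subseteq V(G')$, and fix $\epsilon' := \epsilon/7$. Invoking the \EPTAS of \cite{FominLS18Grid} for {\sc Set Intersecting Dominating Set} (SIDS) on $(G', X)$ in time $2^{\Oh(1/\epsilon)} \cdot \polyn$ yields $S' \subseteq V(G')$ dominating $V(G') \setminus X$, with $|S'| \le (1+\epsilon') \OPT_{\sf SIDS}(G')$ and every component of $G'[S']$ meeting $X$.

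The correctness analysis rests on the two sandwich inequalities stated in the theorem. For $\OPT_{\sf SIDS}(G') \le \OPT(G)$, I take a minimum CDS $Y$ of $G$ and set $Y' := Y \cap V(G')$: vertices in $V(G') \setminus X$ have all their $G$-neighbours in $V(G')$, forcing $Y'$ to dominate them; and every component of $G'[Y']$ must meet $X$, otherwise it would be disconnected in $G[Y]$ from the rest of $Y$ (since all cross-component connections in $G[Y]$ must pass through $M$), contradicting the connectedness of $G[Y]$. For the reverse inequality $\OPT(G) \le \OPT_{\sf SIDS}(G') + 3|M|$, starting from any SIDS-solution $S''$ I observe that $S'' \cup M$ already dominates $G$ and that $G[S'' \cup M]$ has at most $|M|$ connected components (each $G'[S'']$-component is anchored to some $m \in M$ via its mandatory $X$-vertex), and then connect these components by iteratively attaching shortest inter-component paths in $G$; a case split on whether an internal path vertex lies in $X$ (one bridge vertex suffices) or is dominated via $S''$ (at most two bridge vertices suffice) shows that each of the at most $|M|-1$ bridges costs no more than two extra vertices, yielding the $3|M|$ overhead constructively and in polynomial time.

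Equipped with the two inequalities, I perform the standard two-case split on the relative sizes of $|M|$ and $|S'|$. When $|M| \le \epsilon'|S'|$, applying the second inequality's construction with $S' = S''$ produces a connected dominating set $S$ of $G$ satisfying $|S| \le |S'| + 3|M| \le (1+3\epsilon')|S'| \le (1+3\epsilon')(1+\epsilon') \OPT_{\sf SIDS}(G') \le (1+\epsilon) \OPT(G)$, where the final inequality uses $(1+3\epsilon')(1+\epsilon') \le 1 + 7\epsilon' = 1+\epsilon$ together with the first sandwich bound. When $|M| > \epsilon'|S'|$, the second inequality gives $\OPT(G) \le \OPT_{\sf SIDS}(G') + 3|M| \le |M|/\epsilon' + 3|M| = \Oh(p/\epsilon)$, reducing the task to solving {\sc Connected Dominating Set} on $G$ exactly in the bounded-$\OPT$ regime; mimicking the treewidth-based argument of \Cref{subsec:fptas-domset}, I appeal to contraction-bidimensionality on the apex-minor-free $G'$ to deduce $\tw(G') = \Oh(\sqrt{p/\epsilon})$, hence $\tw(G) \le \tw(G') + p = \Oh(p + \sqrt{p/\epsilon})$, and then run a $2^{\Oh(\tw)}\cdot\polyn$ tree-decomposition dynamic program for CDS (e.g.\ via cut-and-count) to obtain an optimal solution in time $2^{\Oh(p + 1/\epsilon)} \cdot \polyn$.

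The principal technical obstacle is the $3|M|$ slack in the second sandwich inequality: in the CVC case the vertex-cover property forced every inter-component shortest path to have at most one internal non-solution vertex, but the weaker dominating-set property here permits shortest paths of length up to three between components, so a delicate case analysis on the position of internal vertices is required, combined with the minimality of shortest paths to forbid longer detours. A secondary point is verifying the treewidth bound in Case 1 via bidimensionality of a suitable variant of (Connected) Dominating Set on $G'$, for which the arguments of \Cref{subsec:fptas-domset} carry over with minor modifications.
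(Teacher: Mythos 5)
Your proposal is correct and follows the paper's intended structure (the SIDS EPTAS on $G' = G - M$ with $X = N(M)$, the two sandwich inequalities with the constructive $3|M|$ bridging overhead, and the bucket-versus-ocean case split). The paper's own ``proof'' for this theorem is a one-line remark that the argument is ``identical to Connected Vertex Cover,'' plus the statements of the two inequalities; you have essentially fleshed out the details.

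You also caught something the paper elides. For CVC the bucket case uses the $2^k n^{\Oh(1)}$ FPT algorithm parameterized by solution size, but {\sc Connected Dominating Set} is W[2]-hard under that parameterization, so following the CVC outline literally would fail. You correctly replace that step with the bidimensionality/treewidth argument from Section 5.1 (\Cref{prop:sqgc}): once $\OPT_{\sf SIDS}(G') = \Oh(p/\epsilon)$, apex-minor-freeness forces $\tw(G') = \Oh(\sqrt{p/\epsilon})$, hence $\tw(G) = \Oh(p + \sqrt{p/\epsilon})$, and $\sqrt{p/\epsilon} \le (p + 1/\epsilon)/2$ recovers the claimed $2^{\Oh(p + 1/\epsilon)} \cdot \polyn$ bound via a treewidth DP. This is the right fix and is exactly what the non-connected {\sc Dominating Set} section does, so your handling is both correct and a genuine improvement in rigor over what the paper writes.

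Two small points to tighten. First, when you invoke \Cref{prop:sqgc} on $G'$, recall that the annotated-domination bidimensionality argument in the paper requires the annotated set to induce an independent set; since $X = N(M)$ may have internal edges, one should (as the paper does implicitly via the footnote) delete edges inside $X$ before applying the proposition, and then be a little careful about how the treewidth bound is transferred back to the instance on which the DP is run -- this subtlety is present in the paper's {\sc Dominating Set} argument as well and is not specific to your write-up, but it is worth flagging if you ever spell this out in full. Second, in your first sandwich inequality, the argument that every component of $G'[Y']$ meets $X$ should briefly handle the degenerate case where that component is all of $Y'$ and $Y' = Y$ (i.e.\ $Y \cap M = \emptyset$); there you can instead observe that $Y$ must contain some $X$-vertex in order to dominate $M$, assuming $M \neq \emptyset$.
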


Using ideas similar to Theorem~\ref{thm:equiv-domset}, we can get the following results. The only place we need to be careful is how to get a connected solution on graph on bad vertices. This can be achieved with careful book keeping,
%, we omit details for being boring.

\begin{theorem} \label{thm:indset-fptas}
	Let $\cH$ be a well-behaved family of graphs, and let $\Pi = ${\sc Connected Vertex Cover} 
	({\sc Connected Dominating Set}). Then, the following statements are equivalent.
	\begin{enumerate}
		\item $\Pi$ admits an \FPTAS parameterized by $\mdh(\cdot)$ and $\epsilon$.
		\item $\Pi$ admits an \FPTAS parameterized by $\edh(\cdot)$ and $\epsilon$.
		\item $\Pi$ admits an \FPTAS parameterized by $\twh(\cdot)$ and $\epsilon$.
	\end{enumerate}
\end{theorem}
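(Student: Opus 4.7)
The plan is to mirror the proof of Theorem~\ref{thm:equiv-domset}, replacing the dominating-set-specific ingredients by connected-vertex-cover ones and adding the book-keeping that global connectivity demands. The implications $3 \Rightarrow 2 \Rightarrow 1$ are immediate from $\mdh(G) \ge \edh(G) \ge \twh(G)$, so I focus on $1 \Rightarrow 3$. Let $(T, \chi, L)$ be the given $\cH$-tree decomposition of width $\ell = \twh(G)$, with $H_t = \chi(t) \cap L$ and $R_t = \chi(t) \setminus L$ at every leaf~$t$. At each leaf I form the auxiliary graph $\tilG_t$ obtained from $G[\chi(t)]$ by attaching a single new vertex $v^\ast$ to every vertex of $R_t$; since $\cH$ is hereditary and closed under disjoint union, $R_t$ is still a modulator to $\cH$ in $\tilG_t$ of size $O(\ell)$. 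Invoking the hypothesised \FPTAS \pmd on $\tilG_t$ yields a connected vertex cover $\tilS_t$, and I set $S_t = \tilS_t \setminus \{v^\ast\}$. A short argument shows that every connected component of $G[S_t \cup R_t]$ contains a vertex of $R_t$: any component not touching $R_t$ would have to reach $v^\ast$ in $\tilG_t[\tilS_t]$ through some $r \in R_t \cap \tilS_t$, forcing the anchoring. Thus, after $R_t$ is added to the solution, every leaf piece merges with the global structure through the modulator.

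Classify a leaf as \emph{good} when $|R_t| \le (\epsilon/c) |S_t|$ for an appropriate constant $c$, and \emph{bad} otherwise; mark all internal nodes bad. Setting $V_g = \bigcup_{t \in U_g} H_t$, $V_m = S_2 = \bigcup_{t \in U_g} R_t$, $V_b = V(G) \setminus (V_g \cup V_m)$, and $S_1 = \bigcup_{t \in U_g} S_t$, the ocean-versus-bucket counting of Theorems~\ref{thm:monotone-fii-fptas} and~\ref{thm:equiv-domset} shows that charging $S_1 \cup S_2$ to $\OPT(G)$ costs only a $(1+O(\epsilon))$ factor. The remaining task is to compute, within the same factor, a minimum set $S_b \subseteq V_m \cup V_b$ that (a) covers every edge of $G[V_m \cup V_b]$, (b) contains $V_m$, and (c) induces a connected subgraph of $G$. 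Call this the Pinned Connected Vertex Cover problem; the companion formulation for \textsc{Connected Dominating Set} additionally carries a Blue/White flag on each vertex of $V_m$ recording whether it has already been dominated from the good side, in the spirit of the annotated formulation of Section~\ref{sec:domset}. Once $S_b$ is found, $S = S_1 \cup S_b$ is a connected vertex cover of $G$: edges inside $V_g$ are covered by the local pieces together with $R_t$, all remaining edges are covered by $S_b$, and each good local piece is anchored to some vertex of $R_t \subseteq V_m \subseteq S_b$, so it joins the (connected) $G[S_b]$ into one component.

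For the bad side, the inequality $|S_t| = O(\ell/\epsilon)$ at every bad leaf combined with the fact that $G[H_t] \setminus S_t$ is independent yields $\tw(G[H_t]) = O(\ell/\epsilon)$; refining the $\cH$-tree decomposition using a standard tree decomposition of each bad base component then produces a tree decomposition of $G[V_m \cup V_b]$ of width $f(\ell, \epsilon)$. Pinned Connected Vertex Cover is CMSO-definable (connectivity of an annotated set is expressible, and pinning $V_m$ is a conjunction of membership atoms) and inherits Finite Integer Index once the pinning annotation is built into the signature, so an analogue of Lemma~\ref{lem:fii-replacement} applies: FII-based replacements reduce each leaf bag to an equivalent subgraph of size $g(\ell, \epsilon)$, and because every local optimum at a bad leaf is bounded by a function of $\ell$ and $\epsilon$, running \pmd with a sufficiently small slack acts as an exact oracle on the auxiliary instances spawned by the replacement protocol. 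A final application of Courcelle's theorem on the reduced bounded-treewidth graph extracts $S_b$, which is lifted back through the replacements. The main obstacle, which the construction above is designed to absorb, is that vertex covers (unlike dominating sets) do not automatically compose into a connected object; the auxiliary vertex $v^\ast$ handles this locally by forcing anchoring, while the pinning of $V_m$ handles it globally by fusing all $R_t$-anchored blobs into a single component of $G[S_b]$. The ``careful book-keeping'' referenced after the theorem statement is precisely these annotations and the matching refinement of the CMSO/\FII{} argument.
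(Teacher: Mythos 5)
Your overall plan—mirroring the proof of Theorem~\ref{thm:equiv-domset}, locally using the auxiliary vertex $v^\ast$ attached to $R_t$ to force anchoring, and then solving an annotated subproblem on the bad region—is exactly the structure the paper gestures at when it says ``this can be achieved with careful book keeping.'' The paper provides no detailed proof for this statement, so your attempt to concretize is a reasonable reconstruction. However, your formulation of the subproblem on the bad region has a genuine gap, and it is precisely the one the paper flags as requiring care. You define \emph{Pinned Connected Vertex Cover} on $G[V_m \cup V_b]$ as asking for a vertex cover $S_b \supseteq V_m$ such that $G[S_b]$ is \emph{connected}. This subproblem can be infeasible even when $G$ has a connected vertex cover: when a good base component $H_t$ is a bridge in $G$, the graph $G[V_m \cup V_b] = G - \bigcup_{t \in U_g} H_t$ is disconnected, and no $S_b \supseteq V_m$ meeting $R_t$ on both sides of the bridge can induce a connected subgraph. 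A concrete example is a long path $a - b_1 - \cdots - b_n - c$ with $H_t = \{b_1, \ldots, b_n\}$, $R_t = \{a, c\}$: for large $n$ the node $t$ is classified good, $V_m = \{a, c\}$, $V_b = \emptyset$, and $G[V_m \cup V_b]$ is two isolated vertices, so the pinned instance has no solution, yet $G$ has a small connected vertex cover. (A symmetric problem arises for the cost bound even when feasible: reconnecting components that were only joined through good $H_t$'s can cost much more than $\OPT(G)$, because the corresponding cheap paths live outside $V_m \cup V_b$.)

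What is missing is a mechanism that lets the bad-side subproblem ``see'' the connectivity that the already-committed good pieces will provide. One way to repair the formulation is to add, for each good $t$, a new vertex $w_t$ adjacent to all of $R_t$, pin $w_t$ together with $R_t$, and place $w_t$ as a singleton base component in the projected decomposition (this keeps the decomposition well-formed, since $\cH$ is hereditary and closed under disjoint union). The extra cost is one vertex per good node, which is absorbed because good nodes satisfy $|S_t| = \Omega(\ell/\epsilon) \geq \Omega(1/\epsilon)$, but one then has to redo the ocean-versus-bucket accounting with these extra charges and re-verify the CMSO/FII argument on the augmented instance; none of this is spelled out in the proposal. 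A second, smaller remark: for \textsc{Connected Vertex Cover} the bad region already has standard treewidth bounded by $O(\ell/\epsilon)$ (each bad $H_t$ satisfies $\tw(G[H_t]) \le |S_t| = O(\ell/\epsilon)$), so you could bypass the FII replacement entirely and apply Courcelle's theorem directly, as in Theorem~\ref{thm:eta-modulated}. That simplification does not rescue the argument, though, because it still operates on the infeasible pinned instance unless the augmentation above (or something equivalent) is carried out first.
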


%
%We start with some known results that will eventually be used in our algorithm.
%
%\begin{lemma}
%	\label{lem:exact}
%	For a given graph $G$ and a fixed integer $k > 0$, we can find a connected vertex cover of $G$ of size at most $k$ in time $2^k. n^{\mathcal{O}(1)}${\color{Red}(need to add references)}
%\end{lemma}
%%Our algorithm has two phases
%\subsection{FPT-AS for Connected Dominating Set} \label{subsec:cds-fptas}
%
%\todo[inline]{comment about connected dominating set}
%\input{maintheorem-cvc}
%\input{tw-kernel}

%\input{modulator-results}

\section{Conclusion}

In this paper, we have initiated a systematic exploration of the impact that recently introduced ``hybrid'' graph parameters have on the existence of good approximations for fundamental graph problems.  In fact, we have shown that as far as the task of obtaining an \FPTAS\ is concerned, for many problems,  designing an \FPTAS\ parameterized by the largest of these, i.e., $\mdh$, is sufficient  to obtain an \FPTAS\ parameterized by both $\edh$ and $\twh$ . This result gives an approximation analogue of recent equivalence obtained between these parameters in the exact algorithmic setting. 

To demonstrate concrete applicability of our techniques, we first designed {\FPTAS}es for many classical graph problems parameterized by $\mdh$, where $\cH$ is an apex/$H$-minor free graph family. Then, using our equivalence theorems, we are able to lift these {\FPTAS}es to $\edh$ and $\twh$. At this point, we would like to highlight that, in several concrete applications of our equivalence theorems, the \emph{non-uniform} \FPTASes can be made uniform. For example, we believe that one can obtain uniform \FPTASes parameterized by $\mdh, \edh$ and $\twh$ for problems such as {\sc Vertex Cover, Feedback Vertex Set, Independent Set, Dominating Set, Cycle Packing}, where $\cH$ is an apex-minor free graph family. 

We conclude with two final remarks. Firstly, the assumption of CMSO-definability can be relaxed. In fact, we only require an \EPTAS for the problem on graphs of bounded treewidth. This would then enable one to apply our framework to problems that are not CMSO-definable but are known to have good approximations on the graph family under consideration. Secondly, the initial step of our algorithm, i.e., {\FPTAS}es parameterized by $\mdh$, can be made to work with other graph families, such as (Unit) Disk Graphs, using known {\EPTAS}es for a number of graph problems \cite{FominLS18Grid,Lokshtanov22Disk}.

\bibliography{refs}

\appendix

%!TEX root = main.tex

\section{Background on \FII and Proof of \Cref{lem:fii-replacement}} \label{sec:fii-background}
The material from \Cref{subsec:boungrap} to \Cref{subsec:replacement} is largely borrowed verbatim from \cite{AgrawalKLPRSZ22Elimination}.
%!TEX root = main.tex
\renewcommand{\Bbb}[1]{\mathbb{#1}}

\subsection{Boundaried Graphs} 
\label{subsec:boungrap}
Here we define the notion of {\em boundaried graphs} and various operations on them.
\begin{definition}{\rm [\bf Boundaried Graphs]}\label{def:boungraph}
	A boundaried graph is a graph $G$ with a set $B\subseteq V(G)$ 
	of  distinguished vertices and an injective labelling $\lambda_G$ 
	from $B$  to the set $\Bbb{Z}^{+}$. The set $B$ is called the {\em boundary} of $G$ and  the vertices in $B$  are called  {\em boundary vertices} or {\em terminals}. 
	Given a boundaried graph $G,$ we denote its boundary by ${\delta(G)},$
	we denote its labelling by $\lambda_G$, 
	and we define its {\em label set} by $\Lambda(G)=\{\lambda_{G}(v)\mid v\in \delta(G)\}$.
	Given a finite set $I\subseteq \Bbb{Z}^{+}$, we define 
	${{\cal F}_{I}}$  to denote the class of all boundaried graphs whose label set is $I$. 
	%Similarly, we define ${\cal F}_{\subseteq I}=\bigcup_{I'\subseteq I}{\cal F}_{I'}$.
	We also denote by ${{\cal F}}$ the class of all boundaried graphs.
	Finally we say that a boundaried graph is a {\em $t$-boundaried} graph if $\Lambda(G)\subseteq \{1,\ldots,t\}$.
\end{definition}
%\todo[inline]{fix the mark in the definition}
%
%We remark that  in the labelling of the boundary of a $t$-boundaried graph, not all $t$ available labels are necessary used.

%
%For a graph $G=(V,E)$ and a vertex set $S \subseteq V,$ we sometime consider the graph $G[S]$ as the 
%$|\partial(S)|$-boundaried graph with $\partial(S)$ being the boundary.

\begin{definition}{\rm [\bf Gluing by $\oplus$]} Let $G_1$ and $G_2$ be two  boundaried graphs. We denote by $G_1 {\oplus} G_2$ the  graph 
	(not boundaried) obtained by taking the disjoint union of $G_1$ and $G_2$ and identifying equally-labeled vertices of the boundaries of $G_{1}$ and $G_{2}.$ In $G_1 \oplus G_2$ there is an edge between two vertices if there is  an edge between them either in $G_1$ or in $G_2$, or both.  
	
\end{definition}

We remark that if $G_1$ has a label which is not present in $G_2$, or vice-versa, then in $G_1 \oplus G_2$ we just forget that label. 

\begin{definition} {\rm [\bf Gluing by $\oplus_\delta$]}
	The {\em boundaried gluing operation} $\oplus_{\delta}$ is similar to the normal gluing operation, but results in a boundaried graph rather than a graph. Specifically $G_1 \oplus_\delta G_2$ results in a boundaried graph where the graph is $G = G_1 \oplus G_2$ and a vertex is in the boundary of $G$ if it was in the boundary of $G_1$ or  of $G_2$. Vertices in the boundary of $G$ keep their label from $G_1$ or $G_2$. 
	%Both for gluing and boundaried gluing we will refer to $G_1 \oplus G_2$ or $G_1 \oplus_\delta G_2$ as the {\em sum} of $G_1$ and $G_2$, and %$G_1$ and $G_2$ are the {\em terms} of the sum.
	%\todo[inline]{we do not need this}
\end{definition}

%\begin{definition}
%Let $G=G_{1}\oplus G_{2}$ where $G_{1}$ and $G_{2}$ are boundaried graphs.
%We define the {{\em glued}} set of $G_{i}$ as the set $B_{i}=\lambda_{G_{i}}^{-1}(\Lambda(G_{1})\cap \Lambda(G_{2})), i=1,2$. For a vertex $v\in V(G_{1})$ we define its {{\em heir}} $\mar{h(v)}$ in 
%$G$ as follows: if $v\not\in B_{1}$ then $h(v)=v$, otherwise $h(v)$ is the result of the identification 
%of $v$ with an equally labeled vertex in $G_{2}$. The {\em heir} of a vertex in $G_{2}$ is defined symmetrically. The {{\em common boundary}} of $G_{1}$ and $G_{2}$ in $G$ is equal 
%to $h(B_{1})=h(B_{2})$ where the evaluation of $h$ on vertex sets is defined in the obvious way.
%The {\em heir} of an edge $\{u,v\}\in E(G_{i})$ is the edge $\{h(u),h(v)\}$ in $G$.
%\end{definition}

Let ${\cal G}$ be a class of (not boundaried)  graphs.
By slightly abusing notation we say that a boundaried graph {\em belongs to a graph class ${\cal G}$} if the underlying graph belongs to ${\cal G}.$

\begin{definition}{\rm [\bf Replacement]}\label{defn:replacement}
	Let $G$ be a $t$-boundaried graph containing a set $X\subseteq V(G)$
	such that $\partial_{G}(X)=\delta(G).$ Let $G_1$ be a $t$-boundaried graph. The result of {\em replacing $X$ with $G_1$} is the graph $G^{\star}\oplus G_{1},$
	where $G^{\star}=G\setminus (X\setminus \partial (X))$ is treated as a $t$-boundaried graph with  $\delta(G^{\star})=\delta(G).$
\end{definition}

	\subsection{Finite Integer Index}
	\label{subsec:finiinteginde}
	\begin{definition}{\rm [\bf Canonical equivalence on boundaried graphs.]}
		Let $\Pi$ be a parameterized graph problem whose instances are pairs of the form $(G,k).$
		Given two boundaried graphs $G_1,G_2~\in {\cal F},$ we say that $G_1\!\equiv _{\Pi}\! G_2$ if 
		$\Lambda(G_{1})=\Lambda(G_{2})$
		and there exists a {\em transposition constant}
		$c\in\Bbb{Z}$ such that 
		\begin{eqnarray*}
			\forall(F,k)\in {\cal F}\times \Bbb{Z} &&  (G_1 \oplus F, k) \in \Pi \Leftrightarrow (G_2 \oplus F, k+c) \in \Pi.\label{eq:fiidef}
		\end{eqnarray*}
		Here, $c$ is a function of the two graphs $G_1$ and $G_2$. 
		%\sed{$c$ can be a positive or negative integer?}
		%\end{itemize}
	\end{definition}
	Note that  the relation $\equiv_{\Pi}$  is
	an equivalence relation. Observe that $c$ could be negative in the above definition. This is the reason we allow the parameter in parameterized problem instances to take negative values.

	Next  we define a notion of ``transposition-minimality'' for the members 
	of  each equivalence class of $\equiv_{\Pi}.$

	\begin{definition}{\rm [\bf Progressive representatives~\cite{BodlaenderFLPST16}]}
		\label{def:progrepr}
		Let $\Pi$ be a parameterized graph problem whose instances are pairs of the form $(G,k)$
		and let ${\cal C}$ be some equivalence class of $\equiv_{\Pi}$. We say that $J\in{\cal C}$ is a {{\em progressive 
				representative}}
		of ${\cal C}$ if for every $H\in{\cal C}$
		there exists $c\in\Bbb{Z}^{-},$ such that 
		\begin{eqnarray}
			\forall(F,k)\in {\cal F}\times \Bbb{Z} \ \ \  (H \oplus F, k) \in \Pi \Leftrightarrow (J\oplus F, k+c) \in \Pi. \label{eq:progfii}
		\end{eqnarray}
	\end{definition}
	
	The following lemma guarantees the existence of a progressive representative for each equivalence class of 
	$\equiv_{\Pi}$. 
	%Consider a graph $H$ in the equivalence class that 

	\begin{lemma}[\cite{BodlaenderFLPST16}]
		\label{lem:existprog}
		Let $\Pi$ be a parameterized graph problem whose instances are pairs of the form $(G,k)$.
		%and let $t\in\Bbb{Z}^{+}.$  
		Then each  equivalence class of $\equiv_{\Pi}$ has a progressive representative.
	\end{lemma}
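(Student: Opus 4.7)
The plan is to reduce the existence of a progressive representative to a minimization problem over a set of integers that I will show is bounded from below. Fix an equivalence class $\mathcal{C}$ of $\equiv_{\Pi}$ and pick an arbitrary reference element $H_0 \in \mathcal{C}$. For every $H \in \mathcal{C}$, let $\phi(H) \in \mathbb{Z}$ be the (unique) transposition constant associated to the pair $(H_0, H)$, i.e., the integer satisfying $(H_0 \oplus F, k) \in \Pi \iff (H \oplus F, k + \phi(H)) \in \Pi$ for all $(F, k) \in \mathcal{F} \times \mathbb{Z}$. Chaining two applications of the definition of $\equiv_{\Pi}$ shows that for any $H_1, H_2 \in \mathcal{C}$, the transposition constant between them satisfies $c(H_1, H_2) = \phi(H_2) - \phi(H_1)$. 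In particular, a member $J \in \mathcal{C}$ is a progressive representative precisely when $\phi(J) \le \phi(H)$ for every $H \in \mathcal{C}$, since then $c(H, J) = \phi(J) - \phi(H) \in \mathbb{Z}^{-}$, as required by \Cref{def:progrepr}.

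It therefore suffices to show that $\phi(\mathcal{C}) \subseteq \mathbb{Z}$ attains its minimum, for which it is enough to verify that it is bounded below. I would first dispose of the trivial case: if no pair $(F, k)$ makes $(H_0 \oplus F, k) \in \Pi$, then by the equivalence the same holds for every $H \in \mathcal{C}$, so both sides of the iff in \Cref{def:progrepr} are vacuously false and $H_0$ itself is a progressive representative with $c = 0$. Otherwise, I would fix a witness $(F^*, k^*)$ with $(H_0 \oplus F^*, k^*) \in \Pi$. For every $H \in \mathcal{C}$, the equivalence yields $(H \oplus F^*, k^* + \phi(H)) \in \Pi$. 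For the parameterized problems considered in this paper (vertex deletion to $\cF$, $\cF$-minor packing, $\cS$-subgraph packing, dominating-set variants), yes-instances always satisfy $k \ge 0$, which forces $k^* + \phi(H) \ge 0$, hence $\phi(H) \ge -k^*$ for every $H \in \mathcal{C}$.

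Given this lower bound, the final step is simply to pick $J \in \mathcal{C}$ achieving $\min_{H \in \mathcal{C}} \phi(H)$, which exists by the well-ordering of $\mathbb{Z}$; the calculation $c(H, J) = \phi(J) - \phi(H) \le 0$ for every $H \in \mathcal{C}$ then certifies that $J$ is progressive. The main obstacle is the boundedness step in the previous paragraph, since the abstract definition of a parameterized problem permits $k$ to range over all of $\mathbb{Z}$ and hence, a priori, $\phi(\mathcal{C})$ could fail to be bounded below. To handle the general setting, I would either restrict $\Pi$ to admit a uniform lower bound on the parameter of any yes-instance (an assumption satisfied by every concrete problem in this paper), or equivalently augment $\Pi$ by discarding instances with parameter below a fixed threshold — a harmless modification that does not alter the equivalence classes in any way relevant to our applications. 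With boundedness established, the remainder of the argument is a direct minimization.
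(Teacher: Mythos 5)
The paper does not prove this lemma itself; it cites \cite{BodlaenderFLPST16}, so I can only compare your argument against the standard proof strategy. Your high-level route is the right one: fix $H_0 \in \mathcal{C}$, define the potential $\phi(H)$ as a transposition constant from $H_0$ to $H$, observe that $\phi(H_2) - \phi(H_1)$ is a valid transposition constant from $H_1$ to $H_2$, reduce progressiveness to minimizing $\phi$, and then argue $\phi(\mathcal{C})$ is bounded below so the minimum is attained. Two small wording issues: $\mathbb{Z}$ is not well-ordered (you want ``every nonempty bounded-below subset of $\mathbb{Z}$ has a least element''), and the ``precisely when'' claim secretly uses uniqueness of the transposition constant, which need not hold in general --- but you only need the ``if'' direction, which is fine.

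The substantive gap is in the boundedness step, and it is exactly in the case you are implicitly appealing to this lemma for in the paper. Your justification that ``yes-instances always satisfy $k \ge 0$, hence $k^* + \phi(H) \ge 0$'' works for the \emph{minimization} problems (\deltof, \domset-variants), but it is false for the \emph{maximization} problems covered by the paper (\indset, \fpacking, \spacking). Under the natural extension of such a problem to $\mathcal{G}\times\mathbb{Z}$ (namely $(G,k)\in\Pi \iff \OPT(G) \ge k$, which is what makes $\equiv_{\Pi}$ a useful shift-equivalence), $(G,k)$ is a yes-instance for every $k \le 0$, so yes-instances have no lower bound on $k$ and your inequality does not follow. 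Moreover your proposed fallback --- ``discard instances with parameter below a fixed threshold, a harmless modification'' --- is not harmless for these problems: once the yes-set on each line $\{(H\oplus F,k): k\in\mathbb{Z}\}$ is an interval $[0,\OPT(H\oplus F)]$ rather than a half-line, the only transposition constant compatible with the hard lower endpoint $0$ is $c=0$, which collapses $\equiv_{\Pi}$ to a far finer relation (two boundaried graphs equivalent only if $\OPT(H_1\oplus F)=\OPT(H_2\oplus F)$ for every $F$) and would destroy Finite Integer Index for these problems. The correct boundedness argument in the maximization case is dual: pick a \emph{no}-instance $(H_0\oplus F^*,k^*)\notin\Pi$ (if every $(F,k)$ is a yes-instance you are already in the trivial case), propagate through the equivalence to get $(H\oplus F^*,k^*+\phi(H))\notin\Pi$, and use that no-instances of a maximization problem satisfy $k > \OPT \ge 0$; this yields $\phi(H) > -k^*$. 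Equivalently, and uniformly for the monotone problems this paper actually uses (in the sense of the definitions later in this appendix), one has $\phi(H) = \textsc{Thr}(H\oplus F,\Pi) - \textsc{Thr}(H_0\oplus F,\Pi) \ge -\textsc{Thr}(H_0\oplus F,\Pi)$ for any fixed $F$, which gives the needed lower bound in both the minimization and maximization regimes.
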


	Notice that two  boundaried graphs with different label sets belong to 
	different equivalence classes of $\equiv_{\Pi}.$ Hence for every equivalence 
	class ${\cal C}$ of $\equiv_{\Pi}$ there exists some finite set $I\subseteq\Bbb{Z}^{+}$ such that 
	${\cal C}\subseteq  {\cal F}_{I}$. We are now in position  to give the following definition.
	
	\begin{definition}{\rm [\bf Finite Integer Index]}
		\label{def:deffii}
		A parameterized graph problem $\Pi$ whose instances are pairs of the form $(G,k)$
		has {\em Finite Integer Index} (or  is {{\em FII}}), if and only if for every finite $I\subseteq \Bbb{Z}^+,$
		the number of equivalence classes of  $\equiv_{\Pi}$ that are subsets of ${\cal F}_{I}$
		is finite. For each $I\subseteq \Bbb{Z}^{+},$ we define ${\cal S}_I$ to be
		a set containing exactly one progressive representative of each equivalence class of $\equiv_{\Pi}$
		that is a subset of ${\cal F}_{ I}$. We also define ${\cal S}_{\subseteq I}=\bigcup_{I'\subseteq I} {\cal S}_{I'}$. 
	\end{definition}

		\subsection{Replacement lemma} \label{subsec:replacement}
		%The result of this section will be applicable in replacing  the following kind of protrusions. 
		%\begin{Definition}{\rm [\bf $r$-$\Pi$-protrusion]} Let $\Pi$ be a {\sc $p$-min-CMSO} vertex subset problem. 
		% Given a graph $G$, we say that a set $X\subseteq V(G)$ is an {\em $r$-$\Pi$-protrusion} of $G$ if 
		%   the number of vertices in $X$ with a neighbor in $V(G)\setminus X$ is at most $r$ and there exists a 
		%   subset $S\subseteq X$ of size at most $r$ such that  $(G[X],S)\models \psi.$. 
		%\end{Definition}
		%
		This subsection is verbatim taken from Fomin et al.~\cite[Section $3.3$]{FominLST18} and is provided here only for completion. We only need to make few simple modifications to suit our need.

		\begin{definition}
			Let $\cal G$ denote the set of all graphs. A graph {\em parameter}  is a function  $\Psi \colon \cal G \to \mathbb{Z}^{+}$.  That is, $\Psi$ associates a non-negative integer to a graph $G \in \cal G$. The parameter $\psi$ is called {\em monotone}, if for every $G \in \cal G$, and for every $V_1\subseteq V_2$, 
			$\Psi(G[V_2])\geq \Psi(G[V_1])$.  
		\end{definition}
		
		We can use $\Psi$ to define several graph parameters such as {\em treewidth}, or given a family $\cal F$ of graphs, a minimum sized vertex subset $S$ of $G$, called modulator, such that $G-S\in {\cal F}$.  
		Next we define a notion of monotonicity for parameterized problems. 
		
		\begin{definition}{\rm (\cite[Definition $3.9$]{FominLST18}).}
			We say that a parameterized graph problem $\Pi$ is {\em positive monotone} if for every graph $G$ 
			there exists a unique $\ell \in \Bbb{N}$ such that for all $\ell'\in \mathbb{N}$ and $\ell' \geq \ell$, $(G,\ell')\in \Pi$ and for all 
			$\ell'\in \mathbb{N}$ and $\ell' < \ell$, $(G,\ell')\notin \Pi$.  A parameterized graph problem $\Pi$ is {\em negative monotone} if for every graph $G$ 
			there exists a unique $\ell \in \Bbb{N}$ such that for all $\ell'\in \mathbb{N}$ and $\ell' \geq \ell$, $(G,\ell')\notin \Pi$ and for all 
			$\ell'\in \mathbb{N}$ and $\ell' < \ell$, $(G,\ell')\in \Pi$. $\Pi$ is monotone if it is either positive monotone or negative monotone.  
			We denote the integer $\ell$ by {\sc Threshold($G,\Pi$)} (in short  {\sc Thr($G,\Pi$)}). 
		\end{definition}

		We first give an intuition for the next definition.  We are considering monotone functions and thus for every graph $G$ 
		there is an integer $k$ where the answer flips. However, for our purpose we need a corresponding notion for 
		boundaried graphs.   If we think of the representatives as some ``small perturbation'', then it is the max threshold over all small perturbations (``adding a representative = small perturbation''). This leads to the following definition. 
		
		\begin{definition}{\rm (\cite[Definition $3.10$]{FominLST18}).}\label{def:equiv-boundaried-graph}
			Let $\Pi$ be a monotone parameterized graph problem that has {\FII} and $\Psi$ be a graph parameter. Let  ${\cal S}_t$  be
			a set containing exactly one progressive representative of each equivalence class of $\equiv_{\Pi}$ that is a subset of 
			${\cal F}_{I}$, where $I=\{1,\ldots,t\}$.  
			For a $t$-boundaried graph $G$, we define   
			\begin{eqnarray*}
				\iota(G) & = &  \max_{G'\in {\cal S}_t}  \mbox{{\sc Thr($G\oplus G',\Pi$)}},\\
				\mu(G) & = &  \max_{G'\in {\cal S}_t}  \Psi(G\oplus G') 
			\end{eqnarray*}
		\end{definition}
		
		%\todo[inline]{do we change the kappa definition here --- as this conflicts with the similar definition in treewidth section}
		
		The next lemma says the following. Suppose we are dealing with some {\FII} problem and we are given a boundaried graph with boundary size $t$.  We know it has a representative of size $h(t)$ and we want to find this representative. In general finding a representative for a boundaried graph is more difficult than solving the corresponding problem. 
		The next lemma says basically that if  we can find the  ``OPT'' of a boundaried graph efficiently 
		then we can efficiently find its representative. Here by ``OPT''  we mean $\iota(G)$, which is a robust version of the threshold function (under adding a representative). 
		And by efficiently we mean as efficiently as solving the problem on normal (unboundaried) graphs.

		\begin{lemma}{\rm (\cite[Lemma $3.11$]{FominLST18}).}
			%\todo{The running time should be independent of $k$ and updated accotdingly}
			\label{lem:red2finiteindex}
			Let $\Pi$ be a monotone parameterized graph problem that has {\FII} and $\Psi$ be a graph parameter.  Furthermore, let $\cal A$ be an algorithm for $\Pi$ that, given a pair $(G,k)$, decides whether it is in $\Pi$ in time $f(|V(G)|,\Psi(G))$. 
			Then for every $t\in\Bbb{N},$ there exists a $ \xi_t \in\Bbb{Z}^{+}$ (depending on $\Pi$ and $t$), and 
			an algorithm that, given a $t$-boundaried graph $G$  with $|V(G)|>\xi_t,$ outputs, in  $\Oh(\iota(G)(f(|V(G)|+\xi_t,\mu(G)))$ steps,
			a $t$-boundaried graph $G^\star$  such that $G\equiv_{\Pi}G^\star$ and  $|V(G^\star)| < \xi_t$. Moreover we can compute the translation 
			constant  $c$ from $G$ to $G^\star$ in the same time.
		\end{lemma}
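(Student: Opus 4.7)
\textbf{Proof plan for \Cref{lem:red2finiteindex}.} The approach is to use the finiteness guaranteed by \FII\ to precompute, for each $t$, a canonical table of representatives, and then use the assumed algorithm $\cA$ as an oracle to identify which representative is equivalent to the input $G$. First, invoking \FII\ on the label set $I = \{1, \ldots, t\}$, I fix a finite set $\mathcal{S}_{\subseteq I}$ of progressive representatives, one per equivalence class of $\equiv_\Pi$ among $t$-boundaried graphs. Since this set is finite, I can define $\xi_t$ as a value strictly larger than $\max_{G' \in \mathcal{S}_{\subseteq I}} |V(G')|$ (with a small additive slack to account for the gluing). Note that $\mathcal{S}_{\subseteq I}$ and $\xi_t$ depend only on $\Pi$ and $t$, not on the input.

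Given an input $t$-boundaried graph $G$ with $|V(G)| > \xi_t$, the algorithm iterates over all candidate pairs $(G^\star, c)$, where $G^\star \in \mathcal{S}_{\subseteq I}$ has $\Lambda(G^\star) = \Lambda(G)$ and $c$ is a candidate transposition constant to be determined. To determine (and simultaneously test) $c$ for a candidate $G^\star$, I use a designated ``probe'' boundaried graph $F_0 \in \mathcal{S}_{\subseteq I}$ (for instance, the representative of the class of the empty graph with the appropriate labels) and compute the two thresholds $T_1 := \mbox{{\sc Thr}}(G \oplus F_0, \Pi)$ and $T_2 := \mbox{{\sc Thr}}(G^\star \oplus F_0, \Pi)$; I then set $c := T_1 - T_2$. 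Each threshold is computed by running $\cA$ along the sequence of integer parameters, exploiting monotonicity of $\Pi$; by definition of $\iota(G)$ and $\mu(G)$, every call is on a graph of size at most $|V(G)|+\xi_t$ with parameter value at most $\iota(G)$ and $\Psi$-value at most $\mu(G)$, giving the claimed running time $\Oh(\iota(G) \cdot f(|V(G)|+\xi_t, \mu(G)))$ per probe. To verify that this $c$ really witnesses $G \equiv_\Pi G^\star$, I repeat the threshold comparison for every $F \in \mathcal{S}_{\subseteq I}$ sharing the appropriate labels, and accept $(G^\star, c)$ iff the difference $\mbox{{\sc Thr}}(G \oplus F, \Pi) - \mbox{{\sc Thr}}(G^\star \oplus F, \Pi)$ equals $c$ for every such $F$.

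The correctness splits into two observations. One direction is trivial: if $G \equiv_\Pi G^\star$ with constant $c$, then by definition the threshold difference must be exactly $c$ on every $F$, so in particular the test succeeds. The nontrivial direction is that passing the test against all of $\mathcal{S}_{\subseteq I}$ is enough to imply equivalence against \emph{every} boundaried graph $F$. Here one uses that $\mathcal{S}_{\subseteq I}$ contains a representative of every equivalence class: any test graph $F$ is $\equiv_\Pi$-equivalent to some $F' \in \mathcal{S}_{\subseteq I}$ via some transposition constant $c_F$, and gluing is compatible with the equivalence in the sense that $(G \oplus F, k) \in \Pi$ iff $(G \oplus F', k + c_F) \in \Pi$ (this compatibility is a standard consequence of the fact that $F \equiv_\Pi F'$ implies the same for $G \oplus F \equiv_\Pi G \oplus F'$ as $t'$-boundaried graphs after forgetting boundaries); combining this with the verified constant difference on $F'$ transfers the property to $F$.

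Finally, the output: since $G$ lies in exactly one equivalence class, exactly one candidate $G^\star$ will pass the test, and the algorithm returns this $G^\star$ together with the witnessed constant $c$. The bound $|V(G^\star)| < \xi_t$ holds by construction. The main obstacle in turning this plan into a fully rigorous proof is the compatibility claim in the previous paragraph --- establishing that gluing with $G$ respects the equivalence $\equiv_\Pi$ together with its transposition constants --- because one has to carefully track how labels interact under $\oplus$ and how the constants compose; this is essentially a careful bookkeeping argument using the defining property of $\equiv_\Pi$ iterated over the two gluing steps, and progressivity of the representatives is what ensures that the composed constants remain well-defined integers independent of the test graph.
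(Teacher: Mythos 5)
Your plan is essentially the same as the paper's proof. The paper hardwires the matrix $A[Y_i,Y_j] = \textsc{Thr}(Y_i\oplus Y_j,\Pi)$ of representative-pair thresholds, computes the row vector $\mathcal{X}=[\textsc{Thr}(G\oplus Y_1,\Pi),\ldots,\textsc{Thr}(G\oplus Y_\rho,\Pi)]$ by probing $\cA$ with increasing $k$ (exactly your ``run $\cA$ along the parameter sequence, exploiting monotonicity''), and then searches for a row of $A$ that equals $\mathcal{X}$ shifted by a single integer $n_0$ — which is precisely your ``fix a candidate $G^\star$, compute a candidate $c$ from one probe, then verify the same shift holds against every representative.'' Your designated probe $F_0$ is a presentational variant, not a different algorithm, and your running-time accounting matches the paper's.

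On the point you flag as the ``main obstacle'': you are overcautious. The compatibility fact you need is immediate from the definition of $\equiv_\Pi$ applied to the pair $(F,F')$: specializing the universally quantified $H$ in that definition to $H=G$, and using that $\oplus$ is symmetric, gives directly that $(G\oplus F,k)\in\Pi\Leftrightarrow(G\oplus F',k+c_F)\in\Pi$. No label bookkeeping beyond this one substitution is needed, and progressivity of the representatives plays no role in it (the paper uses progressive representatives for other parts of the framework; for this lemma an arbitrary complete set of representatives would do). Combining that observation with the verified constant gap against $F'\in\mathcal{S}_{\subseteq I}$ yields the same gap against an arbitrary $F$, which is exactly the argument you sketch and is what the paper summarizes as ``from here it easily follows that $G\equiv_\Pi G^\star$.'' The only minor thing to keep explicit in a written-out version is that the entries $\textsc{Thr}(G^\star\oplus F,\Pi)$ for pairs of representatives are hardwired (constant-size table depending only on $\Pi$ and $t$), so they do not contribute calls to $\cA$ in the stated time bound.
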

		
		\begin{proof}
			We give prove the claim for positive monotone problems $\Pi$; the proof for negative monotone problems is identical. 
			%Let $G^*$ be a progressive representative of the equivalence class $C$ of $\equiv_{\Pi}$ to which $G$ belongs. Since $%\equiv_{\Pi}$ has finite index we get that there is a finite number 
			%Recall that we denote by  ${\cal S}_{t}$  a set of (progressive) representatives for $(\Pi, t)$ and let
			% $\xi_t=\max_{Y\in {\cal S}_{t}}|Y|.$ 
			Let  ${\cal S}_t$  be
			a set containing exactly one progressive representative of each equivalence class of $\equiv_{\Pi}$ that is a subset of 
			${\cal F}_{I}$, where $I=\{1,\ldots,t\}$, and  let  $\xi_t=\max_{Y\in {\cal S}_{t}}|V(Y)|.$ The set  ${\cal S}_{t}$ is hardwired in the description of the algorithm. 
			Let $Y_1,\ldots,Y_\rho$ be the set of progressive representatives in ${\cal S}_{t}$. Let ${\cal F}_{t}={\cal F}_{I}$. Our objective is to find 
			a representative $Y_\ell$  for  $G$ such that 
			\begin{eqnarray}
				\forall (F,k)\in {\cal F}_{t}
				\times \Bbb{Z} & &   (G \oplus F, k) \in \Pi  \Leftrightarrow    (Y_\ell \oplus F, k-\vartheta(X,Y_\ell)) \in \Pi. 
				\label{eq:progresivereplacement}
			\end{eqnarray}
			Here, $\vartheta(X,Y_\ell)$ is a constant  that depends on $G$ and $Y_\ell$.  Towards this   
			we define the following matrix for the set of representatives. Let 
			$$A[Y_i, Y_j]=  \mbox{{\sc Thr($Y_i\oplus Y_j,\Pi$)}}$$
			The size of the matrix $A$ only depends on $\Pi$ and $t$ and is also hardwired in the description of the algorithm.  Now given $G$ we find its representative as follows. 
			\begin{itemize}
				\item Compute the following row vector ${\cal X}=[ \mbox{{\sc Thr($G\oplus Y_1,\Pi$)}}, \ldots ,  
				\mbox{{\sc Thr($G\oplus Y_\rho,\Pi$)}})]$. For each $Y_i$ we decide whether $(G\oplus Y_i,k)\in \Pi$ using the assumed algorithm for deciding 
				$\Pi$,  letting $k$ increase from $1$ until the first time $(G\oplus Y_i,k)\in \Pi$. Since $\Pi$ is positive monotone this will happen for some 
				$k\leq \iota(G)$. Thus the total time to compute the vector ${\cal X}$ is $\Oh(\iota(G)(f(|V(G)|+\xi_t,\mu(G)))$. 
				
				\item Find a translate row in the matrix $A(\Pi)$. That is, find an integer $n_o$ and a representative 
				$Y_\ell$ such that  
				\begin{eqnarray*}
					[ \mbox{{\sc Thr($G\oplus Y_1,\Pi$)}},  \mbox{{\sc Thr($G\oplus Y_2,\Pi$)}}, \ldots ,  
					\mbox{{\sc Thr($G\oplus Y_\rho,\Pi$)}}] \\
					=[ \mbox{{\sc Thr($Y_\ell\oplus Y_1,\Pi$)}}+n_0,  \mbox{{\sc Thr($Y_\ell\oplus Y_2,\Pi$)}}+n_0, \ldots ,  
					\mbox{{\sc Thr($Y_\ell\oplus Y_\rho,\Pi$)}}+n_0]
				\end{eqnarray*}
				Such a row must exist since ${\cal S}_t$ is  a set of representatives for $\Pi$; the representative $Y_\ell$ for the equivalence class to which $G$ belongs, satisfies the condition.  
				\item Set $Y_\ell$ to be $G^\star$ and the translation constant to be $-n_0$.
			\end{itemize}
			From here it easily follows that $G\equiv_{\Pi}G^\star$. This completes the proof.  
		\end{proof}
		We remark that the algorithm whose existence is guaranteed by the Lemma~\ref{lem:red2finiteindex} assumes that the set  ${\cal S}_{t}$ of representatives  are hardwired in the algorithm.  In its full generality we currently do not know of a procedure that for problems having {\FII} outputs such a representative set. Thus, the algorithms using Lemma~\ref{lem:red2finiteindex}  are not uniform. 
		
		\iffalse
		Next we illustrate a situation in which one can  can apply  Lemma~\ref{lem:red2finiteindex} to reduce a portion of a graph. Let $\cal F$ be a family of interval graphs. Further, let $\Pi$ be the {\sc Dominating Set} problem and $\Psi$ denote the modulator to $\cal F$. That is, given a graph $G$, 
		$$\Psi(G)=\min_{ S\subseteq V(G), G-S \in {\cal F}} |S|. $$
		It is possible to show that {\sc Dominating Set} parameterized by $\Psi(G)$ is \FPT. That is, we can design an algorithm that can decide whether an instance $(G,k)$ of   {\sc Dominating Set} is an yes-instance in time 
		$f(\Psi(G))\cdot n^{\Oh(1)}$. In fact, in time  $2^{\Oh(\Psi(G))}n^{\Oh(1)}$. This implies that if we have a 
		$t$-boundaried graph $G$, then we can find a representative of it with respect to {\sc Dominating Set} in time   $2^{\Oh(\mu(G))}n^{\Oh(1)}$. We will see its uses in this way in Section~\ref{sec:crossParam}. 
		\fi

\subsection{Proof of \Cref{lem:fii-replacement}} \label{subsec:proof}
For convenience, we restate the lemma.
\LemmaFII*
\begin{proof}
	Note that since $\Pi$ is self-reducible (cf. \Cref{lem:self-reducibility}), given an algorithm for the decision version, we can find an optimal solution at the expense of polynomial overhead in the running time. Thus, we focus on the decision version. We perform an iterative search on $k$, by trying values $0, 1, \ldots$, until the following algorithm concludes that $(G, k)$ is a yes-instance for the first time.
	
	Now, fix a positive integer $\ell$ as in the statement of the lemma. We will assume that the constants $\xi_i$ from \Cref{lem:red2finiteindex} for each $i \in [\ell+1]$ are hardcoded in the algorithm. We want to gradually transform the graph $G$ to another graph $G'$, such that the number of vertices of $L$ in each leaf bag in the $\cH$-tree decomposition of $G'$ is upper bounded by $\mu \coloneqq \sum_{i \in [\ell+1]} \xi_i$. Then, the (standard) treewidth of $G'$ is upper bounded by $\ell+ \mu$. We will achieve this by a sequence of replacements via \Cref{lem:red2finiteindex} for each $\xi(t)$, using the fact that $\Pi$ has \FII. Now, we proceed to the formal proof.
	
	Let $\til{A}$ be the set of nodes in $T$ whose bags contain more than $\mu$ vertices from $L$, i.e., $\til{A} \coloneqq \LR{ v \in V(T) : |\chi(t) \cap L| > \mu }$. Furthermore, let $\til{\cG} \coloneqq \LR{ G[\chi(t) \cap L] : t \in \til{A} }$, i.e., $\til{\cG}$ is the set of graphs induced by vertices in $L$ in each of the bags of nodes in $\til{A}$. Let $\til{\cG} = \LR{ \tilG_1, \tilG_2, \ldots, \tilG_q }$, where the graphs are numbered arbitrarily.
	
	We create a sequence of graphs $G_0, G_1, \ldots, G_q$, and a sequence of constants $c_0, c_1, \ldots, c_q$ as follows. Let $G_0 = G$ and $c_0 = 0$. Now, we iterate over $i \in [q]$, and proceed as follows. For graph $\tilG_i$, let $\tilt_i$ be the unique leaf in $T$ such that $V(\tilG_i) \subseteq \chi(\tilt_i)$, and let $\tilb_i \coloneqq \chi(\tilt_i) \setminus V(\tilG_i)$, and $b_i \coloneqq |\tilb_i|$. Note that due to properties of $\cH$-tree decomposition, $b_i \le \ell+1$. Let $\lambda_{\tilG_i}: \tilb \to [b_i]$ be an arbitrary injective function, and by slightly abusing the notation, let $\tilG_i$ denote the resulting boundaried graph. Note that $V(\tilG_i) \subseteq V(G_{i-1})$. Similarly, let $G'_i$ be the boundaried graph $G_{i-1} - V(G_i)$ with boundary $\tilb_i$ with the same mapping $\lambda_{\tilG_i}$. Note that $\tilG_{i-1} = \tilG_i \oplus G'_i$. 
	
	Now, we want to apply \Cref{lem:red2finiteindex} w.r.t.\ $\tilG_i$ to find a replacement graph $\tilG^*_i$, and a translation constant $c_i$, such that $\tilG_i \equiv_\Pi \tilG^*_i$, and $|V(\tilG^*_i)| \le \xi_{b_i}$. We observe that, in the proof of \Cref{lem:red2finiteindex}, in order to find such a pair $(\tilG^*_i, b_i)$, we need to decide instances where the value of an optimal solution is bounded by $\OPT(\tilG_i \oplus Y_j) \le \OPT(\tilG_i) + \max_{j} |Y_j| \le 3\ell/\epsilon + \ell + \mu$, which follows from the assumption $\star$ from the statement of the lemma.
	
	Note that $\pmd$ decides any instance $G'$ of $\Pi$ exactly, as long as $\OPT_\Pi(G') \le 3\ell/\epsilon + \mu$, if the approximation parameter $\epsilon'$ is set to be at most $\le \frac{1}{2 \cdot (3\ell/\epsilon + \mu)}$. Let $\cP'$ denote the resulting algorithm where $\epsilon'$ is set to be $\frac{1}{2 \cdot (3\ell/\epsilon + \mu)}$. Then, the running time of the algorithm is upper bounded by $f(\ell, \epsilon') \cdot \polyn = f'(\ell, \epsilon) \cdot \polyn$, for some computable function $f'$. As a result, each application of \Cref{lem:red2finiteindex} takes time $f''(\ell, \epsilon) \cdot \polyn$ for some computable function $f''$. 
	
	Returning to the proof, let $\tilG^*_i$ be the graph returned by \Cref{lem:red2finiteindex} such that $\tilG_i \equiv_\Pi \tilG^*_i$, and $c_i$ be the translation constant. Let $G_i = \tilG^*_i \oplus G'_i$. Let $c^* = \sum_{i \in [q]} c_i$. We now prove the following statements.
	\begin{enumerate}
		\item The instance $(G_q, k+c^*)$ can be constructed in time $f''(\ell, \epsilon) \cdot \polyn$.
		\item $(G_q, k+c^*)$ and $(G, k)$ are equivalent instances of $\Pi$.
		\item $\tw(G_q) \le \ell +  \mu$. 
	\end{enumerate}
	Note that the set $\til{\cG}$ can be computed in polynomial time, and $q \le n$. Further, each application of \Cref{lem:red2finiteindex} takes time $g(\ell, \epsilon) \cdot \polyn$, as argued previously. Therefore, the graph $(G_q, k+c^*)$ can be computed in time $f''(\ell, \epsilon) \cdot \polyn$.
	
	Now we argue inductively that for each $i \in [q] \cup \{0\}$, $(G_i, k + \sum_{j \in [i] \cup \{0\}} c_j)$ and $(G, k)$ are equivalent instances of $\Pi$. When $i = 0$, $G_0 = G$, and $k + c_0 = k$, so the claim trivially follows. Next, assume that the claim is true for some $1 \le i-1 \le q-1$, i.e., $(G_{i-1}, k + \sum_{j \in [i-1]} c_j)$ and $(G, k)$ are equivalent instances of $\Pi$. We now prove the statement for $i$. By construction, $G[V(\tilG_{i})] = G_{i-1}[V(\tilG_i)] = \tilG_i$, and $G'_i = G_{i-1} - V(\tilG_i)$ are boundaried graphs with boundary $\tilb_i$, and $G_i = \tilG_i \oplus G'_i$. Furthermore, \Cref{lem:red2finiteindex} guarantees that $\tilG^*_i \equiv_{\Pi} \tilG_i$. Therefore, $(G_{i-1}, k + \sum_{j \in [i-1]} c_j)$ and $(G, k + \sum_{j \in [i]}c_j)$ are equivalent instances of $\Pi$. 
	
	Now, we prove the third property. To this end, consider the $\cH$-tree decomposition $(T, \chi, L)$ of $G$. For each leaf node $t_i \in V(T)$ corresponding to a graph $\tilG_i \in \til{\cG}$, we set $\chi'(t) = \chi(t) \cup (V(\tilG^*_i) \setminus L)$. For all other nodes $t \in V(T)$, let $\chi'(t) = \xi(t)$. Let $(T, \chi')$ be the resulting (standard) tree decomposition of $G_q$. The bound on the treewidth follows since \Cref{lem:red2finiteindex} guarantees $|V(\tilG^*_i)| \le \xi_{b_i} \le \mu$. 
	
	Note that $\mu$ is a constant that depends only on $\ell$, thus $\tw(G_q)$ is bounded by a function of $\ell$. Finally, since $\Pi$ is CMSO-definable, we can decide the instance $(G_q, k+c^*)$ of $\Pi$ using Courcelle's theorem \cite{Courcelle90} in time $h(\ell) \cdot \polyn$. Thus, the theorem follows.
\end{proof}

\subsection{Grids and triangulated grids} \label{sec:gammagrid}
Given a $k\in\Bbb{N}$, we denote by $\boxplus_{k}$ the $(k\times k)$-grid that is  the graph with vertex set $\{(x,y) \mid  x,y \in\{1,\dots, t\}\}$ and where two different vertices $(x,y)$ and $(x',y')$ are adjacent if and only if $|x-x'|+|y-y'| = 1$. Notice that $\boxplus_k$ has exactly $k^2$ vertices.

For  $k\in\Bbb{N}$, the graph $\Gamma_k$ is obtained from the grid $\boxplus_k$ by adding, for all $1 \leq x,y \leq k-1$, the edge with endpoints $(x+1,y)$ and $(x,y+1)$ and additionally making vertex $(k,k)$ adjacent to all the other vertices $(x,y)$ with $x \in \{1,k\}$ or $y \in \{1,k\}$, i.e., to the whole perimetric border  of $\boxplus_k$. Graph $\Gamma_9$ is shown in Fig.~\ref{fig-gamma-reg}. 
%The graph $\Gamma_{k}$ has been defined in~\cite{FominGT11cont} in the context of bidimensionality theory.

\begin{figure}
 \begin{center}
\scalebox{.6}{\includegraphics{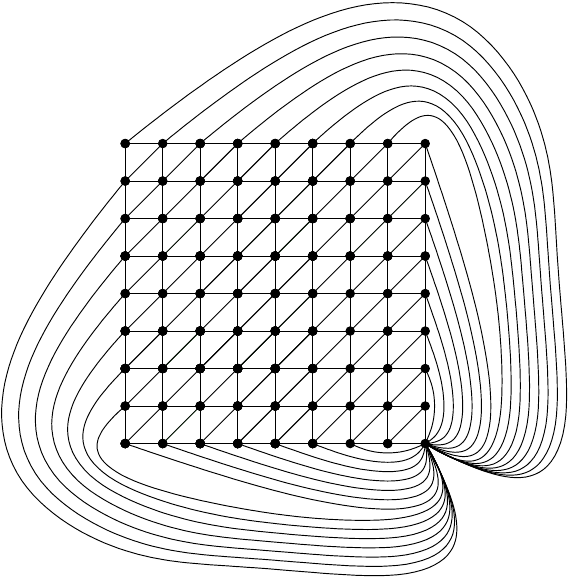}}
\caption{Graph $\Gamma_{9}$.}\label{fig-gamma-reg}
 \end{center}
 \end{figure}

\end{document}